\documentclass[%
 reprint,
%superscriptaddress,
%groupedaddress,
%unsortedaddress,
%runinaddress,
%frontmatterverbose, 
%preprint,
%preprintnumbers,
nofootinbib,
%nobibnotes,
%bibnotes,
 amsmath,amssymb,
 aps,
%pra,
%prb,
%rmp,
%prstab,
%prstper,
%floatfix,
]{revtex4-2}
\usepackage{graphicx}% Include figure files
\usepackage{dcolumn}% Align table columns on decimal point
\usepackage{bm}% bold math
%\usepackage{hyperref}% add hypertext capabilities
%\usepackage[mathlines]{lineno}% Enable numbering of text and display math
%\linenumbers\relax % Commence numbering lines

%\usepackage[showframe,%Uncomment any one of the following lines to test 
%%scale=0.7, marginratio={1:1, 2:3}, ignoreall,% default settings
%%text={7in,10in},centering,
%%margin=1.5in,
%%total={6.5in,8.75in}, top=1.2in, left=0.9in, includefoot,
%%height=10in,a5paper,hmargin={3cm,0.8in},
%]{geometry}

%\usepackage[text={8in,8.5in},centering,includefoot,foot=0.6in]{geometry}
%\usepackage[text={7.0in,9.5in},centering,includefoot,foot=0.6in]{geometry}

\usepackage{latexsym,verbatim}
\usepackage{epsfig,epstopdf,color}
\usepackage{float}
\usepackage{soul}
\usepackage{physics}
\usepackage{esint}
\usepackage{empheq}
\usepackage[normalem]{ulem}
\usepackage{xcolor}
\usepackage{amsthm}
\usepackage{bbm}

\DeclareGraphicsRule{.tif}{png}{.png}{`convert #1 `dirname #1`/`basename #1 .tif`.png}
\usepackage[colorlinks=true, allcolors=blue]{hyperref}

\usepackage{listings}
\lstloadlanguages{[5.2]Mathematica}
\lstloadlanguages{python}

\newtheorem{theorem}{Theorem}

\newcommand{\JR}[1]{\textcolor{blue}{JR: #1}}
\newcommand{\DC}[1]{\textcolor{orange}{DC: #1}}

%\numberwithin{equation}{section}

\begin{document}

\title{Radial Mode Stability of Two-Fluid Neutron Stars}
%\title{Proving the Stability for Dark Matter Admixed Neutron Stars from Stellar Pulsations}
\author{Daniel A. Caballero}
\affiliation{Illinois Center for Advanced Studies of the Universe \& Department of Physics,
University of Illinois at Urbana-Champaign, Urbana, Illinois 61801, USA.}

\author{Justin Ripley}
\affiliation{Illinois Center for Advanced Studies of the Universe \& Department of Physics,
University of Illinois at Urbana-Champaign, Urbana, Illinois 61801, USA.}

\author{Nicol\'as Yunes}
\affiliation{Illinois Center for Advanced Studies of the Universe \& Department of Physics,
University of Illinois at Urbana-Champaign, Urbana, Illinois 61801, USA.}

%\date{November 2022}

\begin{abstract}
    Radial mode stability is a necessary condition for the astrophysical viability of compact objects.
    In recent years, astrophysical models with two fluids have gained popularity, especially in their ability to model dark matter admixed neutron stars.
    % \st{In spite of this interest, the conditions for stability of stars composed of multiple fluids have remained mostly understudied.}
    Just as is the case of single-fluid stars, a stability criterion based on the background equations has been developed 
    --the critical curve for the particle numbers of the two fluids in the two-dimensional configuration space determines a one-dimensional sequence that labels the marginally stable configurations--
    but its validity depends on the linear stability of radial perturbations which remains unstudied.
    In this paper, we establish a set of stability criteria for two perfect-fluid relativistic stars by carefully studying the radial mode perturbation equations.
    We prove that modes are complete, have real eigenvalues with a minimum eigenvalue (i.e. a fundamental mode), thus a configuration is stable if and only if the fundamental mode is positive.
    As a consequence, our work formally and rigorously proves these necessary conditions for the stability criterion based on the background equations.
\end{abstract}

\maketitle

\section{Introduction}

% \DC{I was thinking that always saying single-perfect-fluid is a little much. I think we should just make a note saying that we are always dealing with perfect fluids.}

Radial mode stability is a necessary condition for the astrophysical viability of compact objects, because, if unstable, these objects would either collapse to a black hole or explode in a supernova. 
For relativistic stellar models that consist of a single perfect fluid, there is a direct relation between the frequencies of radial perturbations and certain ``extremal points.''
% {\ny{What do you mean by ``critical points'' exactly? You want to define this precisely here, since it's used a lot later.}}
% \DC{In this case critical points is just extremal points. I was unsure which of extrema or critical is less confusing}
% in the behavior of the stellar mass of equilibrium configurations.
In particular, a configuration will have a mode with zero eigenfrequency if and only if the equilibrium configuration corresponds to an extremum in the stellar mass with respect to the stellar radius \cite{BardeenThorneMeltzer1966,HTWW1965}, i.e.~an extremum of the ``mass-radius curve.''
The radial modes of single-fluid stars\footnote{In this paper, we will analyze perfect fluid stars only, and thus whenever we refer to ``single-fluid'' or ``multi-fluid'' stars, it is assumed that the fluids are perfect.} are real, complete and bounded from below, and thus, a configuration is radially mode stable if and only if its smallest eigenvalue is positive \cite{Chandrasekhar1964,HTWW1965}. 
%\DC{Added the footnote. Also changed changed every single-perfect-fluid to single-fluid in the rest of the text.}
These two results, coupled with the fact that low-density (non-relativistic) single-fluid solutions are radially stable \cite{BardeenThorneMeltzer1966,HTWW1965}, imply that the critical points of the mass-radius curve for single-fluid stars demarcate which solution regions in the mass-radius plane are radially stable.
As the mass-radius curve of the star is determined by the background solution, this relation allows one to determine  the \emph{perturbative} stability of a star purely by repeatedly solving the \emph{background} equilibrium equations, a much simpler task than solving the perturbed equations.

The single-fluid model is widely used due to its relative simplicity, and due to the general expectation that relativistic stars (such as white dwarfs and neutron stars) are composed of essentially a single fluid.
Nevertheless, stellar models that consist of multiple fluids, which we shall refer to as ``multi-fluid stars,'' may have physical applications. 
One such application is dark-matter-admixed neutron stars, compact objects modeled through two independently-conserved fluids that interact solely through gravitation \cite{Deliyergiyev2019,Rafiei2022,bell_improved_2020,Emma2022,DasKumarPatra2021}.
% In these models, dark matter and baryonic matter clump together under the force of their mutual gravitational attraction.  
Dark-matter-admixed neutron stars are similar to ``normal'' baryonic stars, but could potentially differ in measurable ways that would inform our knowledge of dark matter \cite{Gresham2019,Lee_2021,CIARCELLUTI2011,Rutherford2023,Mariani2024,Sandin2009,Hippert2023}. %\DC{added reference \cite{Mariani2024}}
Another application of multi-fluid stars with non-negligible interactions is the study of superfluid neutron stars \cite{AnderssonComer2021,Andersson2013,Comer2004,Comer1999,AnderssonComerGrosart2004} and
stars in the context of $f(R)$ gravity \cite{Campbell2024} and with a second gravitational sector in $f(\mathcal T)$ gravity \cite{Pradhan2024}.
Multi-fluid stars have also been studied in the context of the $1+1+2$ covariant formalism, where a theorem for generating analytical solutions was developed in \cite{Naidu2021}.

In spite of this interest, the conditions for stability of multiple-fluid stars have remained mostly understudied.
Reference \cite{HENRIQUES1990} extended the stability criterion of the maximum mass to non-interacting multi-fluid stars by invoking that stable configurations must satisfy particle number conservation for each fluid independently.
For a two-fluid star, the configuration space can be parameterized by the central densities of the two fluids, and thus, it is two-dimensional. 
The marginally stable configurations therefore form a curve in this two-dimensional space, defined as the extrema of \textit{both} the mass and the particle numbers with respect to the central densities.
% {\ny{I thought it was an extrema of the baryon numbers and the central densities of the two fluids, no? This is confusing.}}
% \DC{It's an extrema of the two baryon numbers and the mass simultaneously, but you only need two of the three (either the two baryon numbers or a baryon number plus the mass). The energy densities are the two variables which determine your equilibrium configuration so the ones we differentiate in terms of}.
%This extension does not just require the calculation of the maximum mass and the radius;
This criterion has been employed in multiple studies of dark-matter-admixed neutron stars \cite{DiGiovanni2022,GOLDMAN2013,Hippert2023,Barbat2024},
and in the context of fermion-boson stars, where bosons are modeled as a scalar field \cite{HENRIQUES1990b,Valdez-Alvarado2020,Pitz2024}.
%\DC{added references \cite{Barbat2024,Pitz2024}}

However, just as is the case of single-fluid stars, the ability to declare stability to radial perturbations using the above criterion hinges on the direct relation between critical points and radial perturbations.
Reference \cite{HENRIQUES1990} did not carry out a radial mode perturbative analysis of multi-fluid stars, and therefore, such an analysis is critical to establish whether their criterion does, in fact, determine radial mode stability.
Specifically, it is necessary to determine whether the radial modes of multiple perfect-fluid stars are real, complete and bounded from below.
Although equations for the modes have been obtained for both the non-interacting \cite{Kain2020} and interacting case \cite{Comer1999}, the analytic properties of the modes remain undetermined.
Moreover, very few numerical studies have been carried out to evaluate whether the criterion does in fact predict stability.
Reference \cite{Kain2021} is the only analysis to compare the criterion with a direct calculation of the radial modes,
while Ref.~\cite{Gleason2022,DiGiovanni2022,Valdez-Alvarado2013} simulated the dynamical evolution of two-fluid stars to determine stability.

A number of studies \cite{Mukhopadhyay2016,Gresham2019,Ellis2018,CIARCELLUTI2011,Lee_2021,DasKumarPatra2021,Rutherford2023} have used the condition $dM/dR=0$ to determine the stability of stellar solutions, but utmost care should be taken.
This approach--while applicable to single-fluid stars--cannot generally be applied to multi-fluid configurations,
because it fails to take into account the particle numbers and the radius associated with the secondary fluid, which are necessary in the criterion of \cite{HENRIQUES1990}.
Although using solely the mass and primary fluid radius should serve as a good approximation for stars with a low fraction of secondary fluid or dark matter,
the stability contour in \cite{Mukhopadhyay2016} (where just $M-R_{\rm SM}$ is used) and in \cite{Kain2021} (where the correct criterion is used and compared with eigenfrequencies) are qualitatively different.
The work of \cite{Deliyergiyev2019} correctly points out that computing eigenfrequencies is a safe criterion for stability over using mass-radius relations, but this paper uses the incorrect eigenvalue problem for studying stability (i.e.~the Newtonian single-fluid eigenvalue problem).
References \cite{Pontopoulos2017,JimenezFraga2022} computes the eigenvalues but for an eigenvalue problem that considers the two fluids together, which the authors in \cite{JimenezFraga2022} claim is an effective approximation to reduce computational resources.
%Because of this, we highlight the need to use the criterion of \cite{HENRIQUES1990} to determine stable equilibrium configurations for multiple perfect-fluid stars.

In this paper, we establish a set of stability criteria for two-fluid relativistic stars by carefully studying the radial mode perturbation equations.
Specifically, we first prove that the radial modes of both fluids are real, complete and bounded from below, which we show is a necessary condition for the stability criterion  proposed in \cite{HENRIQUES1990}. Furthermore, we prove that a configuration will be stable if and only if its minimum eigenvalue is positive.
% As a consequence, we prove a necessary condition (that the eigenvalues exist, be real and have a minimum) for the stability criterion proposed in \cite{HENRIQUES1990}. 
% {\ny{what is the necessary condition? We should state it here. this is confusing}} \DC{I'm trying to think of what I meant. I think it should read we prove a necessary condition rather than provide. The necessary/sufficient condition we work with is just positivity of fundamental mode, whic hwe mentioned in the sentence before}.
While obtaining these results, we also find a canonical energy that we split into a kinetic energy and a potential energy for the perturbations,
thus providing an alternative stability criterion (i.e.~positivity of canonical energy) in the same way as done for single-fluid stars \cite{HTWW1965,Prabhu_2016,Friedman1978,Shi2023} %\DC{added last reference}. Most of this is then easily generalized to stars with three or more fluids.
%\DC{The lower bound to the eigenvalues is not generalized to more fluids.}

Our work, therefore, formally and rigorously establishes, for the first time, that the critical curve of the particle numbers of the two fluids in the two-dimensional configuration space determines a 1-dimensional sequence that labels the marginally stable configurations.
In other words, the 2 directional-derivative equations $dN_{1,2}/d{\boldsymbol{\rho_c}} = 0$, where ${\boldsymbol{\rho_c}} = (\rho_{c,1},\rho_{c,2})$ is the two-dimensional central density vector~\cite{HENRIQUES1990}, determines a 1-dimensional curve in the ${\boldsymbol{\rho_c}}$ configuration space above which stellar configurations are unstable to at least one radial perturbation mode.
% \DC{I should mention that the equations determine not just one curve but multiple (just like $dM/dR$ determine multiple points) it's those associated with $\omega_0^2=0$ (for which there is at least $3$) which are the marginally stable ones.}
In other words, for densities above those defined by this 1-dimensional curve, radial perturbations have imaginary frequencies, and thus, diverge exponentially.
This link between a critical curve and radial mode stability now allows future studies to bypass all radial perturbative analysis
and simply focus on finding the critical curve from the background solutions to determine the marginally stable configurations of two-fluid stars.   

% {\ny{I think you should run this by Deborah and Carlos to see if they have questions anywhere. It is possible that they do, and their questions will allow us to identify what parts need to be clarified.}}

Our paper is organized as follows.
In Sec.~\ref{sec:onefluid}, we review the basic formalism we will use and results for the radial mode stability of single-fluid stars.
In Sec.~\ref{sec:lang}, we consider the radial oscillations of two-fluid stars.
We use the general formalism established in \cite{Kain2020} (see also \cite{Comer1999}), %which was used to numerically compute the radial modes of multi-fluid stars in \cite{Leung2012}.
and recast the equations into a symmetric form.
Using this, in Sec.~\ref{sec:energy}, we obtain a variational principle for the modes equivalent to the one for a single-fluid star.
We additionally find a canonical energy for the perturbations and a Lagrangian for the perturbative equations of motion.
From this, we derive a number of properties of radial perturbations, including that radial modes are real, complete and have frequencies that are bounded from below.
We additionally derive conditions for the radial modes to be stable (have all real frequencies).
In Sec.~\ref{sec:multi}, we outline how we can extend these results to stars made up of an arbitrary number of non-interacting fluids.
Finally, we conclude in Sec.~\ref{sec:conclusion}.
The Appendices contain more detailed derivations of several equations used in this article. Henceforth, we make use of the following conventions: Greek letters in indices stand for spacetime coordinates, while capital Latin letters stand for the fluid number, the metric signature is $(-,+,+,+)$, and we use geometric units where $G=1=c$.

\section{One Fluid Review}\label{sec:onefluid}

For the sake of being thorough and to set notation, we first review the stability of non-rotating, single fluid, relativistic stars as studied via the radial modes of pulsation.
While there are several approaches to studying stability \cite{Chandrasekhar1964,HTWW1965,BardeenThorneMeltzer1966}, we only review the direct perturbative approach taken in \cite{Chandrasekhar1964}, as we will later generalize that method to multi-fluid stars later.
We will also briefly review the energy argument used in Appendix B of \cite{HTWW1965},
and derive an explicit lower bound to the eigenvalues, which has not yet been explicitly done for relativistic stars.
Finally, we will connect all of this to determining the stability of an equilibrium configuration through the use of the mass-radius curve.

\subsection{TOV Equations}

Throughout this article, we analyze solutions to the Einstein Field equations coupled to the stress-energy tensor of a fluid, namely
\begin{align}  
    G_{\mu\nu} 
    &=
    8 \pi \, T_{\mu\nu} \ ,
    \label{eq:einstein-eqns}
\end{align}
The stress-energy tensor will be that of one (or many) perfect fluid(s), which is given by
\begin{equation}\label{eq:single-fluid-perfect}
    T^\mu{}_\nu=(\epsilon+p)u^\mu u_\nu+p\delta^\mu_\nu \ ,     
\end{equation}
where $u^\mu$ is a time-like vector, so that $u_\mu u^\mu=-1$, $\delta^\mu_\nu$ is the 4-dimensional Kronecker delta, and
$(p,\epsilon)$ are the pressure and the (total) energy density of the fluid respectively, which are related to each other through an equation of state $p=p(\epsilon)$. The stress-energy tensor is covariantly conserved, 
\begin{align}
    \nabla_{\mu}T^{\mu}{}_{\nu} =
    0 \,, \label{eq:conservation-law}
\end{align}
by the Einstein equations and the Bianchi identities. 

In order to study perturbations of neutron stars, we must first construct a \textit{background} spacetime upon which we can study perturbations. This background spacetime must be a solution to the Einstein equations that can represent a neutron star. We assume here that the neutron star is static and spherically symmetric, so that we may parameterize the metric via
\begin{equation}\label{eq:metric}
    ds^2=-e^{-2\Phi(r)}dt^2+e^{2\Lambda(r)}dr^2+r^2d\Omega^2 \ .
\end{equation}
where $\Phi(r)$ and $\Lambda(r)$ are free functions of the radius only. Using this background metric ansatz in the Einstein equations yields differential equations for the free functions  $\Phi(r)$ and $\Lambda(r)$.

A simple way to derive these differential equations is to project Eq.~\eqref{eq:conservation-law} onto $u^{\mu}$ and its perpendicular projection $\Pi^{\mu}{}_{\nu}\equiv u^{\mu}u_{\nu} + \delta^{\mu}_{\nu}$. Doing so, one obtains 
\begin{gather}
    u^\nu\nabla_\nu \epsilon+(\epsilon+p)\nabla_\nu u^\nu=0 \ , \label{eq:ener}\\
    (\epsilon+p)u^\nu \nabla_\nu u^\mu+(g^{\mu\nu}+u^\mu u^\nu)\nabla_\nu p=0 \ . \label{eq:euler}
\end{gather}
Equation~\eqref{eq:euler} is the relativistic Euler equation, and Eq.~\eqref{eq:ener} is the relativistic energy equation. Substituting the metric in Eq.~\eqref{eq:metric} into the Euler equation [Eq.~\eqref{eq:euler}], we obtain the relativistic equation of hydrostatic equilibrium
\begin{align}\label{eq:hydro}
    \frac{d}{dr}p=(\epsilon+p)\frac{d}{dr}\Phi \ .
\end{align}
To obtain equations for the metric components, we use the stress-energy tensor given in Eq.~\eqref{eq:single-fluid-perfect} as the source for Eq.~\eqref{eq:einstein-eqns}.
Only the $(t,t)$ and $(r,r)$ components are independent, from which we obtain
\begin{align}
    \label{eq:lam}
    \frac{d}{dr}\Lambda
    &=
    \frac r2\left[e^{2\Lambda}\left(8\pi\epsilon+\frac{1}{r^2}\right)-\frac{1}{r^2}\right] \ ,
    \\
    \label{eq:phi}
    \frac{d}{dr}\Phi
    &=
    \frac r2\left[-e^{2\Lambda}\left(8\pi p+\frac{1}{r^2}\right)+\frac{1}{r^2}\right] \ . 
\end{align}
Subtracting Eq.~\eqref{eq:lam} and Eq.~\eqref{eq:phi} results in 
\begin{align}\label{eq:diff}
    \frac{d}{dr}\Lambda-\frac{d}{dr}\Phi=4\pi re^{2\Lambda}(\epsilon+p) \ .
\end{align}

The above equations [Eqs.~\eqref{eq:hydro},~\eqref{eq:lam} and~\eqref{eq:phi}] form a closed system of differential equations for the fields $(p,\epsilon,\Phi,\Lambda)$ once an equation of state $p=p(\epsilon)$ is prescribed, but usually, one recasts these equations in a simpler way. 
Instead of working with the $\Lambda(r)$ metric function, the field $m(r)$ is defined through
\begin{equation}\label{eq:defm}
    1-\frac{2m}{r}=e^{-2\Lambda} \ ,
\end{equation}
so that Eqs.~\eqref{eq:lam} and \eqref{eq:phi} become
\begin{align}
    \frac{d}{dr}m
    &=
    4\pi r^2\epsilon \ , \label{eq:m}\\
    \frac{d}{dr}\Phi
    &=
    -\frac{1}{r^2}\frac{1}{1-\frac{2m}{r}}(4\pi p r^3+m) \ . \label{eq:phi2}
\end{align}
Combining the latter with the relativistic equation of hydrostatic equilibrium, we obtain
\begin{align}\label{eq:TOV}
    \frac{d}{dr}p=- \left(\frac{\epsilon+p}{r^2} \right) \left(\frac{4\pi p r^3+m}{1-\frac{2m}{r}}\right) \ ,
    % \frac{d}{dr}\Phi \ .
\end{align}
which is sometimes referred to as the Tolman-Oppenheimer-Volkoff (TOV) equation. In these new variables, Eqs.~\eqref{eq:m},~\eqref{eq:TOV} and the equation of state form a closed system for the fields $(p,\epsilon,m)$; once this system is solved, one can then solve Eq.~\eqref{eq:phi2} to find the remaining metric potential. 

The equation of state clearly plays an important role in determining the properties of the solution to these structure equations. We will refrain from selecting a particular equation of state in this paper, so that our results remain as general as possible. However, we will still find it convenient to define a few additional fields and variables that characterize the types of fluids we will be considering. First, we assume that we can assign a particle number density $n$ to the relativistic fluids we study in this article, and that this particle number density is conserved,
\begin{equation}\label{eq:consv}
    \nabla_\nu(n u^\nu)=0 \ .
\end{equation}
The conservation of particle number implies we are assuming the fluid is in equilibrium, and that there are no nuclear reactions that induce a net outward flux of particles in play.    
We can always relate $n$ to the other fluid variables $(\epsilon,p)$ via the first-law of thermodynamics for an isentropic star, 
\begin{equation}\label{eq:first}
    \frac{d\epsilon}{dn}=\frac{\epsilon+p}{n} \ .
\end{equation}
We note that Eqs.~\eqref{eq:ener}, \eqref{eq:consv} and \eqref{eq:first} are consistent with one another, such that only two of the three equations are independent. 
%{\ny{Do we have to *assume* all of these, or do some of these come from other more fundamental equations (like SET conservation in (2)?}}
We next introduce the adiabatic index
\begin{equation}\label{eq:gamma}
    \gamma=\frac{d\log p}{d\log n} \ ,
\end{equation}
and the speed of sound
\begin{equation}
    c_s^2=\frac{d p}{d\epsilon} \ .
\end{equation}
% {\ny{Daniel, should this be fixed nb (as per Jaki's email)?}}
% \DC{From my understanding after reviewing yesterday, no it should be at fixed entropy.}
% \DC{Removed the fixed $s$ as per the last email with Jaki}
The two quantities are related by the first law of thermodynamics [Eq.~\eqref{eq:first}] via
\begin{equation}\label{eq:gcs}
    \gamma \, p=(\epsilon+p)c_s^2 \ .
\end{equation}
Note here that just because we have defined an adiabatic index does not mean we are considering polytropic equations of state. When considering such equations of state, the adiabatic index is a constant, where for us $\gamma$ above need not be.  

Many numerical methods exist to solve the system of equations described above. In a slight abuse of terminology, we will refer to the system of Eqs.~\eqref{eq:hydro}, \eqref{eq:m} and \eqref{eq:phi2} (and later on their generalization to two fluids) as the TOV equations\footnote{We will also refer to them as TOV equations, even when we use $\Lambda$ instead of $m$ for the field variable.}.
One way to numerically integrate the TOV equations is to specify $m(0)=0$ and $p(0)=p_c$ at the center of the star\footnote{Numerically, we cannot start the integrations at the center of the star, so one usually starts at a small enough radius at which we can find an asymptotic solution to the TOV equations to initialize the integrations.}.
The value of $p_c$ is usually determined by giving a value for the central density $\epsilon_c=\epsilon(0)$ and using the EoS, such that $p_c=p(\epsilon_c)$ and the initial condition is completely specified by $\epsilon_c$.
The equations are integrated up to a radius coordinate $R$ at which\footnote{Numerically, we cannot integrate the TOV equations until the pressure exactly vanishes, so one usually defines the stellar radius as the radial coordinate at which the pressure is many orders of magnitude smaller than the central pressure.} $p(R)=0$. 
Then, $R$ is the radius of the star, and its mass is $M\equiv m(R)$.
From the solution to the TOV equations, we obtain $p(r)$ and $m(r)$, from which we obtain $\epsilon(r)$ by inverting the EoS $\epsilon(r)=p_{EoS}^{-1}(p(r))$ and the metric function $\Lambda(r)$ by inverting Eq.~\eqref{eq:defm}.
We then obtain $\Phi(r)$ by integrating Eq.~\eqref{eq:phi2} from the radius of the star to the center, subject to the boundary condition $\Phi(R)=-\frac12\log\left[1-{2m(R)}/{R}\right]$. 
With this in hand, we can obtain the particle density $n$ as a function of the radius by integrating Eq.~\eqref{eq:first} from the radius $R$ to the origin, under the boundary condition $n(R)=0$.
If needed, we can also obtain the total number of particles in the star, $N$, by integrating $n(r)$ over the 3-volume of the star.
%\JR{This isn't use anywhere, so I'd take it out} \DC{Needed for two-fluid equivalent of mass-radius}:
%\begin{equation}
%    N    =    \int_{star}dV\, n=\int_0^Rdr\, \frac{4\pi r^2}{\sqrt{1-\frac{2m(r)}{r}}}n(r)
%\end{equation}

The main point to note here is that, given a particular choice for the EoS, the mass of the star $M$ and its radius $R$ are both one parameter families of the initial condition, i.e. central density $\epsilon_c$ (or the central pressure $p_c$).
Thus, the mass and radius can be treated as functions of the central density $M=M(\epsilon_c),R=R(\epsilon_c)$.
We can re-parameterize this family of equilibrium configurations in terms of $R$
(around points where the relation $R(\epsilon_c)$ is invertible), resulting in a single $M(R)$ relation, which is commonly referred to as the mass-radius curve. 

\subsection{Radial perturbations}

Once we have a background solution that represents a neutron star in equilibrium, we can study perturbations on this background. Following the general approach of \cite{Chandrasekhar1964}, we consider the Eulerian perturbation of the metric variables ($\Phi,\Lambda$) and fluid variables ($p,\epsilon,n$) around the background solution.
For example, we set $p(t,r)=p^{(0)}(r)+\delta p(t,r)$, where $p^{(0)}(r)$ is the background solution and $\delta p(t,r)$ is the Eulerian perturbation.
%\DC{The following sentences are new}
We also introduce Lagrangian perturbations, which describe how the fluid \textit{element} is affected by the perturbation.
We use the symbol $\Delta$ to refer to Lagrangian perturbations to contrast them with the Eulerian perturbations.
The Eulerian and Lagrangian perturbations are related by the Lagrangian displacement vector $\boldsymbol{\xi}^\mu$.
From spherical symmetry, we conclude that only the radial component of this displacement vector is nonzero, and thus, we label this nonzero component by just $\xi$.
Then, the Lagrangian and Eulerian perturbation for the scalar fluid quantities are related by
\begin{equation}\label{eq:lang}
    \Delta p=\delta p+\xi\frac{d}{dr}p^{(0)} \ .
\end{equation}
%We also introduce the Lagrangian displacement $\xi(t,r)$ to describe the Lagrangian perturbation of these quantities. {\ny{The Lagrangian displacement is not just introduced to relate perturbations. This displacement is related to the perturbation of the location of a fluid cell. Please rewrite this like that and say that, as a consequence, you can relate the E and L perturbations as in (20).}}
%In the case of radial perturbations, the Lagrangian displacement only has a radial component and is radially symmetric.
%The Lagrangian perturbation of a fluid quantity will be related to its (Eulerian) perturbation by

We next consider the perturbed Lagrangian displacement of the four-velocity.
The radial and time components of the four velocity of the fluid are related to first order in $\xi$ by $u^r=u^t\partial_t\xi\equiv u^t\dot\xi$. 
%{\ny{Why is this?}} {\DC{Chandrasekha effectively defines it that way, MTW says $u^r/u^r\sim dr/dt=\dot\xi$ along world lines to first order. Alternatively, it follows from Friedman's formulas for perturbations but are much more complicated.}}
The constraint $u_\mu u^\mu=-1$ implies that the components of the velocity vector field up to first order is
\begin{equation}\label{eq:u}
    u^t=e^{\Phi^{(0)}}(1+\delta\Phi)\ ,\quad u^r=e^{\Phi^{(0)}}\dot\xi \ ,
\end{equation}
where the over head dot stands for time derivative $\dot{\xi} \equiv \partial_t \xi$.
% {\ny{This seems to contradict the previous relation unless delta Phi vanishes.}}
% \DC{I should say that \eqref{eq:u} is up to first order.}

We now present the linearly perturbed equations of motion. We first perturb the Euler equation [Eq.~\eqref{eq:euler}]. The radial component of the perturbed Euler equation is a second-order differential equation for the Lagrangian displacement,
\begin{equation}\label{eq:eul}
\begin{split}
    e^{2(\Lambda^{(0)}+\Phi^{(0)})}(\epsilon^{(0)}+p^{(0)})\ddot\xi=&-\partial_r\delta p+(\epsilon^{(0)}+p^{(0)})\partial_r\delta\Phi\\
    &+(\delta\epsilon+\delta p)\frac{d}{dr}\Phi^{(0)} \ .
\end{split}
\end{equation}
We next perturb the Einstein equations to find that the $(t,t)$, $(r,r)$, and $(t,r)$ components are
\begin{align}
    \label{eq:ddlam}
    \partial_r(re^{-2\Lambda^{(0)}}\delta\Lambda)
    =&    4\pi r^2\delta\epsilon \ ,    \\
    \label{eq:dphi}
    \partial_r\delta\Phi
    =&    -4\pi re^{2\Lambda^{(0)}}\delta p+\delta\Lambda\left(2\frac{d}{dr}\Phi^{(0)}-\frac1r\right)     \ ,\\
    \label{eq:ee-pert-tr}
    \frac2r\partial_t\delta\Lambda
    =&    -8\pi(\epsilon^{(0)}+p^{(0)})e^{2\Lambda^{(0)}}\dot\xi \ . 
\end{align}
Since only the perturbation functions $(\delta\Lambda,\xi)$ are time-dependent, it is straightforward to integrate Eq.~\eqref{eq:ee-pert-tr} over time.
Imposing the initial conditions $\delta\Lambda(0,r)=0$ and $\xi(0,r)=0$, we obtain
\begin{align}\label{eq:dlam}
    \delta\Lambda &=
    -4\pi(\epsilon^{(0)}+p^{(0)})re^{2\Lambda^{(0)}}\xi =\xi\frac{d}{dr}(\Phi^{(0)}-\Lambda^{(0)}) \ ,
\end{align}
where the second equality follows from Eq.~\eqref{eq:diff}.

\begin{comment}
\JR{Why are you mixing ``dot'' notation and $\partial_t$ notation? I think it'll be clearer to pick one. We had a similar discussion of $'$ vs $\partial_t$, but I think ultimately it will be simpler to just pick one, and stick with that throughout.}
\DC{Using dot only for the $\xi$ with main reason being latex formatting for dots when there are two symbols}
\begin{equation}\label{eq:ssn}
    n_0[\partial_t\delta\Lambda+\dot\xi\frac{d}{dr}\Lambda_0+\partial_r\dot\xi]+\partial_t\delta n+\dot\xi\frac{d}{dr}n_0+\frac{2\dot\xi n_0}{r}=0 \ .
\end{equation}

Since all zeroth order terms have no time dependence, we integrate the equation (with $\delta n(0,r)=0$ plus the initial conditions already stated), removing all the time derivatives.
Terms can then be grouped to obtain:
\end{comment}

We now perturb the fluid variables.
We start by linearly perturbing Eq.~\eqref{eq:consv},
which one can check results in a total time derivative.
We can integrate this time derivative by assuming the initial condition $\delta n(0,r)=0$, and after grouping terms, we obtain
\begin{equation}\label{eq:dn}
\begin{split}
    &\delta n+\frac{e^{-\Phi^{(0)}}}{r^2}\partial_r\left(n^{(0)}\xi r^2 e^{\Phi^{(0)}}\right)\\
    &\quad\quad\quad\quad +n^{(0)} \left[\delta\Lambda+\xi\frac{d}{dr}\left(\Lambda^{(0)}-\Phi^{(0)}\right)\right]=0 \ .  
\end{split}
\end{equation}
By Eq.~\eqref{eq:dlam}, the term multiplying $n^{(0)}$ is zero, so Eq.~\eqref{eq:dn} reduces to
\begin{equation}\label{eq:pertn}
    \delta n=-\frac{e^{-\Phi^{(0)}}}{r^2}\partial_r\left(n^{(0)}\xi r^2e^{\Phi^{(0)}}\right) \ .
\end{equation}
To obtain a useful expression for $\delta p$, we use Eq.~\eqref{eq:gamma}, which implies that 
\begin{equation}\label{eq:ppnn}
    \frac{\Delta p}{p^{(0)}}=\gamma\frac{\Delta n}{n^{(0)}} \ .
\end{equation}
Expanding the Lagrangian displacement using Eq.~\eqref{eq:lang}, we see that
\begin{equation}\label{eq:dp}
    \delta p=-\xi\frac{d}{dr}p^{(0)}-\gamma \, p^{(0)}\frac{e^{-\Phi^{(0)}}}{r^2}\partial_r\left(r^2\xi e^{\Phi^{(0)}}\right) \ .
\end{equation}
We see then that the perturbations to $n$ and $p$ are entirely prescribed by the background solution and the Lagrangian displacement $\xi$.

An expression for the perturbation $\delta\epsilon$ can be obtained in various ways. 
In \cite{Chandrasekhar1964}, $\delta\epsilon$ is found from Eq.~\eqref{eq:ddlam}.
To write $\delta\epsilon$ solely in terms of other fluid variables (which is more useful for the two-fluid case), we use the first law of thermodynamics [Eq.~\eqref{eq:first}], which when perturbed becomes
\begin{equation}
    \frac{\Delta\epsilon}{\Delta n}=\frac{\epsilon^{(0)}+p^{(0)}}{n^{(0)}} \ .
\end{equation}
Substituting in Eq.~\eqref{eq:dn} above and solving for $\delta\epsilon$ results in an expression in terms of the sound speed and its relation to $\gamma$ from Eq.~\eqref{eq:gcs}, namely
\begin{equation}\label{eq:de}
    \delta\epsilon=\delta p\frac{\epsilon^{(0)}+p^{(0)}}{\gamma \, p^{(0)}}=\delta p\frac{1}{c_s^2}=\delta p\frac{1}{\frac{\partial p}{\partial\epsilon}\eval_{s}} \ .
\end{equation}
Notice that the perturbation to the energy density diverges for a fixed perturbation in the pressure when the speed of sound vanishes (e.g.~during a first-order phase transition). 
%It is possible to check that deriving $\delta\epsilon$ from Eq.~\eqref{eq:ddlam} results in the same expression, and it will also be the case for multiple fluids.

\subsection{Sturm-Liouville Problem and Variational Principle}\label{ssec:var1}

%\sout{For the rest of this section, we will make a change of notation where the background field and fluid variables will drop the $0$ subscript (i.e. $p$ will now be what we labeled $p^{(0)}$ above).}
In the previous subsection, we derived all the equations that are necessary to understand radial perturbations of a single-fluid star, but the equations are coupled and difficult to study in their current form. In order to understand the solution to these equations, one usually decouples them into a Sturm-Liouville problem to find the radial frequencies of perturbation, which is what we do in this subsection.  From now on, we drop the $(0)$ superscript on background quantities
because, in the following expressions, the only perturbation variable will be the Lagrangian displacement $\xi$.
Moreover, we use $\Phi'\equiv {d}\Phi/{dr}$ to simplify notation.
%(I suggest never use the $0$ subscript in the first place. We already use $\delta$ and $\Delta$ to denote linear perturbations). Similar for $'$ vs $\partial_r$. Just start with $'$ from the start, and say so at the beginning.}

From substituting Eqs.~\eqref{eq:dphi},~\eqref{eq:dlam},~\eqref{eq:dp} and~\eqref{eq:de} into Eq.~\eqref{eq:eul} we can obtain a single differential equation for the Lagrangian displacement
\allowdisplaybreaks[4]
\begin{widetext}
\begin{equation}
\label{eq:diff-eqn-displacement-sturm-lioville-inter}
\begin{split}
    e^{2\Lambda+2\Phi}(\epsilon+p)\ddot\xi=&-\partial_r\left[-\xi(\epsilon+p)\Phi'-\gamma \, p\frac{e^{-\Phi}}{r^2}\partial_r\left(r^2\xi e^{\Phi}\right)\right]\\
    &-4\pi re^{2\Lambda}(\epsilon+p)\left[-\xi(\epsilon+p)\Phi'-\gamma \, p\frac{e^{-\Phi}}{r^2}\partial_r\left(r^2\xi e^{\Phi}\right)\right]\\
    &-4\pi re^{2\Lambda}(\epsilon+p)^2\xi\left(2\Phi'-\frac1r\right)\\
    &+\Phi'\left(\frac{\epsilon+p}{\gamma \, p}+1\right)\left[-\xi(\epsilon+p)\Phi'-\gamma \, p\frac{e^{-\Phi}}{r^2}\partial_r\left(r^2\xi e^{\Phi}\right)\right] \ .
\end{split}
\end{equation}
\end{widetext}
We can write Eq.~\eqref{eq:diff-eqn-displacement-sturm-lioville-inter} as a Sturm-Liouville equation by making the following redefinitions.
First, we redefine $\phi(t,r) \equiv r^2e^{\Phi(r)}\xi(t,r)$. Second, we multiply Eq.~\eqref{eq:diff-eqn-displacement-sturm-lioville-inter} by $e^{\Lambda-2\Phi}$ to find\footnote{We write the equation in the form of (26.19-21) of \cite{MisnerThorneWheeler1971}, where they are presented explicitly in Sturm-Liouville form.}
\begin{equation}\label{eq:1f}
    A(r)\ddot\phi=\partial_r \left[P(r)\partial_r\phi\right]-Q(r)\phi \ ,
\end{equation}
where
\begin{subequations}\label{eq:funcs1}
    \begin{align}
        A(r)&
        \equiv
        e^{3\Lambda-\Phi}\frac{(\epsilon+p)}{r^2}\label{eq:bA} \ ,
        \\
        P(r)
        &\equiv
        e^{\Lambda-3\Phi}\frac{\gamma \, p}{r^2}\ , \label{eq:bP} \\
        Q(r)
        &\equiv
        -e^{\Lambda-3\Phi}\frac{(\epsilon+p)}{r^2}\left[\Phi''-\frac2r\Phi'-(\Phi')^2\right] \nonumber        \\
        &\quad 
        +4\pi e^{3\Lambda-3\Phi}\frac{(\epsilon+p)^2}{r}\left(\Phi'-\frac1r\right) . \label{eq:Qb}
    \end{align}
\end{subequations}

We use Eq.~\eqref{eq:1f} as our master equation of motion for the radial fluid perturbations.
When we decompose $\phi$ into modes, $\phi(t,r)=e^{i\omega t}\zeta(r)$, 
%{\ny{You can't use $\zeta$ both for the time and space-dependent function, as well as for the radial modes only. I suggest you change $\zeta(r) \to \bar{\zeta}(r)$ throughtout the paper.}}
%\DC{Changing $\zeta(t,r)$ to $\phi(t,r)$. $\zeta(r)$ stays the same.}
we see that $\zeta(r)$ satisfies the following eigenvalue equation
\begin{equation}\label{eq:sl}
    \omega^2A(r)\zeta(r)=-\frac{d}{dr}\left(P(r)\frac{d}{dr}\zeta(r)\right)+Q(r)\zeta(r)\ .
\end{equation}
Importantly, both $A(r),P(r)$ are positive definite.
Imposing regularity of the solution at the origin, we find that
\begin{align}\label{eq:bc1}
    \lim_{r\to0}\frac{\zeta}{r^3} = {\rm{const.}} \ 
\end{align}
At the surface of the star, the Lagrangian displacement of the pressure must vanish. From Eq.~\eqref{eq:dp}, we conclude that
\begin{equation}\label{eq:boundaryc}
    \Delta p(R)=-\gamma \, p\frac{e^{-\Phi}}{r^2}\frac{d\zeta}{dr}\eval_R=0 \ 
\end{equation}
% {\ny{Wait, what is $\Delta$ here? You used this symbol before to refer to Lagrangian displacements, but I think here you just mean the difference. You can't use the same symbol for 2 different things. I suggest you just remove the left most equal sign.}}
% \DC{$\Delta$ is the Lagrangian displacement in this case.}
at the surface of the star.
Since $p(R)=0$, Eq.~\eqref{eq:boundaryc} essentially imposes that $d\zeta/dr$ (and thus $\zeta$) must be bounded at the stellar surface.

Equation \eqref{eq:sl} together with the boundary conditions of Eqs.~\eqref{eq:bc1} and \eqref{eq:boundaryc} describe a \textit{singular} Sturm-Liouville problem, because $P(r)$ vanishes at the surface.
%{\ny{This is "formally" incomplete. I don't think you can say that just because you have an SL problem then the eigenvalues are real and the eigenvectors form an orthogonal basis. This is only true for certain kinds of SL problems (i.e the ones that are called ``regular'' SL problems, which demand certain properties on the A, P and Q functions). In particular, for QNMs of BHs, e.g., you also have a SL problem, but it is not regular, which is why the eigenvalues are complex. Please fix this.}}
%\DC{Technically it's a singular SL not a regular SL. The only difference is that it can allow for a continuous spectrum, although the continuous spectrum for radial perturbations is empty from what I found.}
We then conclude that the eigenvalues are real, and the corresponding eigenvectors form a complete orthogonal basis,
so that the general solution to the equation of motion in Eq.~\eqref{eq:1f} can be written as a linear sum of the eigenvectors of Eq.~\eqref{eq:sl}.
As long as all modes are real ($\omega_j^2>0$), the radial perturbations remain bounded, and we conclude the star is linearly stable.
We emphasize that \emph{completeness} is the property that ensures that such mode expansion exists, and thus it is a necessary condition to be confident on linear stability analysis \cite{DysonSchutz1979}. 

\subsection{Lower bound for eigenfrequencies}\label{ssec:bound}

As Eq.~\eqref{eq:sl} defines a singular Sturm-Liouville problem, we conclude that the eigenvalues (squared frequencies) $\omega^2$ are all real and bounded from below.
To set the stage for our two-fluid calculation, here we derive a lower bound to $\omega^2$ in terms of the other fluid quantities.
%\DC{Might want to remove the word completeness to not confuse with eigenvector completeness}

We begin by noting that, since Eq.~\eqref{eq:sl} together with the boundary conditions in Eqs.~\eqref{eq:bc1} and \eqref{eq:boundaryc} define a singular Sturm-Liouville problem,
the modes $\zeta$ are orthogonal on a Hilbert space whose inner product is defined by
\begin{equation}
    \label{eq:sturm-liouville-inner-product-space}
    \braket{\eta}{\zeta}
    \equiv
    \int_0^Rdr\, \bar\eta(r)A(r)\zeta(r) \ .
\end{equation}
We can then use this property to 
%\JR{Do you mean we can find a lower bound on the eigenvalues by defining a Rayleigh quotient?}
% Every Sturm-Liouville problem 
define a Rayleigh quotient, which can be used to derive a variational bound on the eigenvalue solutions to the problem.
%Later, we will use the Rayleigh quotient to explicitly calculate a lower bound for the eigenvalues.
For Eq.~\eqref{eq:sl}, the Rayleigh quotient is \cite{Chandrasekhar1964,HTWW1965}

\begin{equation}\label{eq:Rayleigh}
    I[\zeta]=\frac{\int_0^R dr\, P(r)\left|\frac{d}{dr}\zeta(r)\right|^2+Q(r)|\zeta(r)|^2}{\int_0^R dr\, A(r)|\zeta(r)|^2} \ .
\end{equation}
The extremal values of $I$ result in the eigenvalues $\omega^2$ of the differential equation Eq.~\eqref{eq:sl}. 
Although this result is well-known in Sturm-Liouville theory, and it has been used in previous studies of neutron star stability, we refer the unfamiliar reader to Sec.~11.1 of \cite{Strauss2009}, where such computation is carried out for the Laplace operator in detail\footnote{In~\cite{Strauss2009}, this approach is referred to as a minimum principle, but we will refer to it more generally as an extremal principle, since we have not yet proved that a minimum exists.}. We will not repeat this derivation here, since we will generalize it to multiple fluids in Sec.~\ref{ssec:variation} in detail. 
%{\ny{Let's add 1 or 2 more sentences about this. It's not obvious that the extrema gives you the $\omega$.}}
% \DC{Added the following }
%see Sec.~11.4 of the same reference for a discussion (but not a computation) of how it extends to variable coefficients.

We next write the Sturm-Liouville problem [Eq.~\eqref{eq:sl}] in a more abstract form, which will prove useful when we examine the radial stability of multifluid stars.
First, we write the equation of motion Eq.~\eqref{eq:1f} in terms of two operators $\hat A$ and $\hat C$, namely
\begin{equation}\label{eq:wop}
    \hat A \, \ddot\phi+\hat C \, \phi=0 ,
\end{equation}
where the operators $A$ and $C$ are given in terms of the functions $A, P$ and $Q$ [see  Eq.~\eqref{eq:funcs1}] as
\begin{subequations}
\begin{align}
    \hat A \, \phi&\equiv A(r)\phi \ ,\\
    \hat C \, \phi&=-\partial_r\left(P(r)\partial_r\phi\right)+Q(r)\phi \ .
\end{align}
\end{subequations}
We then introduce the Lagrangian 
%{\ny{Is this a Lagrangian density or a Lagrangian. Usually, the notation is $L$ for the Lagrangian, and ${\cal{L}}$ for the Lagrangian density.}}
\begin{equation}
    L[\phi,\dot\phi]=\frac12\left[\int_0^R dr\, \dot{\bar\phi}\, \hat A \, \dot\phi -\bar\phi\, \hat C \, \phi\right] \ .
\end{equation}
Variation of ${L}$ with respect to $\zeta$ implies the equation of motion \eqref{eq:1f}.

The Lagrangian contains a term that is quadratic in the time derivative of the displacement,
minus another term that is quadratic on the displacement itself, which suggests we understand them as kinetic and potential energy terms. 
The kinetic and potential energies for the Lagrangian are then
\begin{align}
    T[\dot\phi]&=\frac12 \int_0^R dr\, A(r)\big|\dot\phi\big|^2 \label{eq:kinetic} \ ,\\
    V[\phi]&=\frac12\int_0^R dr\, \bar\phi \,\hat C \, \phi=\int_0^Rdr\, P(r)|\partial_r\phi|^2+Q(r)|\phi|^2 \,,
\end{align}
which agree with those presented in Appendix B of \cite{HTWW1965}.

Instead of working with the Rayleigh quotient to find the eigenvalues, we can work directly with the kinetic and potential energies. 
In the Hilbert space with inner product given by Eq.~\eqref{eq:sturm-liouville-inner-product-space}, the kinetic and potential energies can be written as
\begin{align}\label{eq:kinpot}
    T[\dot{\phi}] &= \frac{1}{2} \left<\dot{\phi}|\dot{\phi}\right>\,,
    \qquad
    V[\phi] =\frac12\ev{\mathcal V}{\phi} \ ,
\end{align}
where we have defined $\mathcal V\equiv \hat A^{-1}\hat C$; note that the factor of $\hat A^{-1}$ is necessary to cancel the factor of $A$ in the inner product.
The eigenfrequencies of the Sturm-Liouville problem defined by Eqs.~\eqref{eq:sl}, \eqref{eq:bc1} and \eqref{eq:boundaryc}
are, in fact, the eigenvalues of the potential energy operator $\mathcal V$.
This can be seen in that Eq.~\eqref{eq:sl} is then $\hat A(\mathcal V-\omega^2\mathbbm1)\zeta=0$, where $\mathbbm1$ is the identity operator. 

The Rayleigh quotient can also be written in terms of inner products. As shown in Appendix B of \cite{HTWW1965}, by taking the quotient of the potential energy by the kinetic energy [Eq.~\eqref{eq:kinetic}] evaluated at $\zeta$ instead of $\dot\phi$,
one finds that the Rayleigh quotient becomes
\begin{equation}\label{eq:action}
    I[\zeta]=\frac{V[\zeta]}{T[\zeta]}=\frac{\ev{\mathcal V}{\zeta}}{\braket\zeta} \ .
\end{equation}
When expressed in this way, the fact that the eigenvalues $\omega^2$ are given by the extrema of $I$ is called the Chandrasekhar variational principle.
% Notice that we are evaluating $T\left[\zeta\right]$ on $\zeta$, not $\dot{\zeta}$.
Since the eigenvalues $\omega^2$ are (extremal) values of $I$, it is possible to analyze the properties of the eigenvalues of the Sturm-Liouville problem from the properties of $I$.
In particular, the stability condition $\omega_j^2>0$ for all $j$ can only be achieved if there is a minimum eigenvalue $\omega_0^2$ and if such a minimum eigenvalue is greater than $0$.
Such an eigenvalue is called the fundamental mode.
%Chandrasekhar refers to such fundamental mode existing by stating that the variational principle Eq.~\eqref{eq:Rayleigh} is not just an extremal problem, but also a minimal problem \cite{Chandrasekhar1964}.

The existence of the fundamental mode can be proved by showing that $I$ as given in Eq.~\eqref{eq:Rayleigh} is bounded from below.
This is a standard result in Sturm-Liouville theory in a bounded domain \cite{RieszNagy1956} %\DC{corrected typo in citation}.
As we will carry out an analogous calculation when analyzing two-fluid (and more generally multi-fluid) solutions, we next derive an explicit lower bound for the mode frequencies.
We note that to our knowledge this is the first such derivation for relativistic single-fluid stars, although similar calculations have been performed for modes of rotating Newtonian stars in \cite{DysonSchutz1979,Hunter1977}.
Nevertheless, the explicit form of the bound we calculate ahead appears to be novel.

Let us begin by simplifying the calculation slightly. From writing the quotient as in Eq.~\eqref{eq:action}, it follows that a scaling of the vector $\zeta$ by a constant factor does not change the value of $I[\zeta]$.
Thus, we can assume, without loss of generality, that
\begin{align}
\label{eq:one-fluid-normalization-assumption}
    \braket \zeta=\int_0^Rdr\, A(r)|\zeta|^2=1.
\end{align}
Therefore, we need only check whether the numerator of Eq.~\eqref{eq:Rayleigh} is bounded from below.
First, the function $P(r)$ is positive-definite, and thus the first term
\begin{equation}
    \int_0^Rdr\, P(r)\left|\frac{d}{dr}\zeta\right|^2\geq0 \ ,
\end{equation}
and thus we can bound $I$ solely by
\begin{equation}
    I[\zeta]\geq \int_0^Rdr\, Q(r)|\zeta|^2 \ .
\end{equation}

Looking at the expression for $Q(r)$ in Eq.~\eqref{eq:Qb}, we have some terms that are positive-definite and some that are negative-definite.
The positive-definite term is the $-(\Phi')^2$ in the first line of Eq.~\eqref{eq:Qb}, while the entire second line is negative-definite, since $\Phi'\leq 0$ from Eq.~\eqref{eq:phi2}.
Just as we did above, we can bound the positive-definite term by zero, and we can write the remaining terms in the first line of Eq.~\eqref{eq:Qb} as a total derivative. Thus, we obtain the bound
\begin{equation}\label{eq:onefluidbound}
\begin{split}
    I[\zeta]\geq&-\int_0^Rdr\, e^{\Lambda-3\Phi}\frac{\epsilon+p}{r^2}|\zeta|^2 \left[r^2\frac{d}{dr}\left(\frac{\Phi'}{r^2}\right)\right]\\
    &-\int_0^Rdr\, e^{\Lambda-3\Phi}\frac{\epsilon+p}{r^2}|\zeta|^2
    \\
    &\qquad \qquad \times\left[4\pi re^{2\Lambda}(\epsilon+p)\left(\frac1r-\Phi'\right)\right] \ .
\end{split}
\end{equation}
Notice that the term in the second line is negative definite.

In order to make further progress, we must now massage the above expression further.
We can factor the second term in Eq.~\eqref{eq:onefluidbound} into a term that is the integrand of the inner product times some positive definite term.
%\DC{Added the next two sentences.}
More specifically, the supremum of the factor in square brackets in the second term of Eq.~\eqref{eq:onefluidbound} (which one can effectively think of as the maximum value) on the interval $(0,R)$ is a \textit{constant} that is always greater than (or equal to) the term in square bracket itself in the integration domain.
Thus, due to the negative sign, the second term in Eq.~\eqref{eq:onefluidbound} is greater than (or equal to) the same expression but with the supremum of the square bracket factored out of the integral, namely
% This looks as follows
% {\ny{This separation of the negative definite term into the product of a factor times an integral is important because it gets used a ton later on. You need to explain it better, since I'm not sure I follow.}}
% \DC{Added a couple of sentences above to answer this comment}
\begin{align}\label{eq:onebound}
    &-\int_0^Rdr\, e^{\Lambda-3\Phi}\frac{\epsilon+p}{r^2}|\zeta|^2\times\left[4\pi re^{2\Lambda}(\epsilon+p)\left(\frac1r-\Phi'\right)\right]
    \nonumber \\
    \geq & 
    -\sup_{r\in(0,R)}\left[4\pi re^{2\Lambda}(\epsilon+p)\left(\frac1r-\Phi'\right)\right]
    \nonumber  \\
    &\qquad \times \int_0^Rdr\, e^{\Lambda-3\Phi}\frac{\epsilon+p}{r^2}|\zeta|^2 \ .
\end{align}
This inequality holds if the supremum in the second line does indeed exists.
The third line in the above expression is equal to $\int_0^Rdr\, A(r)|\zeta|^2=\braket\zeta=1$ via Eq.~\eqref{eq:one-fluid-normalization-assumption}.
Note that $\epsilon+p$ reaches it maximum value at the center of the star, since both $\epsilon$ and $p$ are monotonically decreasing as a consequence of the TOV equations.
We denote this maximal value by $\rho^{\rm max}\equiv \epsilon_c+p_c$, since $\epsilon+p$ goes to the baryon density in the Newtonian limit.
%\DC{Add comment that $\rho$ is chosen because in the Newtonian limit, it is in fact the mass density}.
Hence, we are left with checking whether the supremum of $e^{2\Lambda}(1-r\Phi')$ exists.
%\JR{You mean all you need to still consider is where that quantity reaches it maximum?} 
The term $e^{2\Lambda}=({1-2m/r})^{-1}$ can only go to infinity if $m(r)=r/2$, but this violates the Buchdahl limit for a star, and thus, it does not occur.
Then, looking at Eq.~\eqref{eq:phi2}, we have that $\lim_{r\to0}\Phi'\sim r$ since $m\sim r^3$ when $r$ is near zero.
Hence, the supremum of the quantity in Eq.~\eqref{eq:onebound} exists and the inequality expressed holds true.

We can carry out a similar bound for the term in the first line of Eq.~\eqref{eq:onefluidbound}, which results in
\begin{equation}
\begin{split}
    &-\int_0^Rdr\, e^{\Lambda-3\Phi}\frac{\epsilon+p}{r^2}|\zeta|^2\times\left[r^2\frac{d}{dr}\left(\frac{\Phi'}{r^2}\right)\right]\\
    \geq &-\sup_{r\in(0,R)}\left[r^2\frac{d}{dr}\left(\frac{\Phi'}{r^2}\right)\right]\int_0^Rdr\, e^{\Lambda-3\Phi}\frac{\epsilon+p}{r^2}|\zeta|^2 \ .
\end{split}
\end{equation}
As $\Phi'\sim r$ near the origin, we have that $r^2\frac{d}{dr}(\Phi'/r^2)\sim -r^2(1/r^2)\sim 1$, which implies the supremum is bounded from above.

Putting the two terms from Eqs.~\eqref{eq:onefluidbound} and~\eqref{eq:onebound} together, we have that the bound for $I[\zeta]$ (recalling that the remaining integral is $1$ by Eq.~\eqref{eq:one-fluid-normalization-assumption}) is given by
\begin{equation}\label{eq:bound}
    I[\zeta]\geq -\!\!\!\!\sup_{r\in(0,R)}\left[r^2\frac{d}{dr}\left(\frac{\Phi'}{r^2}\right)\right]-4\pi \rho^{\rm max}\!\!\!\!\sup_{r\in(0,R)}\!\!\left[e^{2\Lambda}(1-r\Phi')\right].
\end{equation}
%{\ny{what happened to the integral?}}
When one compares this bound to that found for Newtonian rotating stars,
one finds that the Newtonian limit of Eq.~\eqref{eq:bound} matches the non-rotating limit of Eq.~$(3.16)$ of \cite{DysonSchutz1979} (see also~\cite{Hunter1977}).
% {\ny{Lol. And what do you infer from the comparison!?! Do they match? What?!}}
With this result, we have determined that a fundamental mode exists and in fact $\omega_0^2=\min_{\zeta\in\mathcal D(\mathcal V)}(I[\zeta])$, where $D(\mathcal V)$ is the domain of $\mathcal V$, defined below Eq.~\eqref{eq:kinpot}.
%{\ny{What is D(V)?}}.
% \DC{Domain of $\mathcal V$. We can also ignore that and just write $\min$, but I would think its nicer to state a minimum over what.} {\ny{But what is $\mathcal V$? Is it the same as V? I'm missing the definition (also below).}}
% \DC{$\mathcal V$ is introduce in \eqref{eq:kinpot} with its definition in the sentence below it.}
Then, the stability criteria for an equilibrium configuration is that $\omega_0^2=I[\zeta_0]>0$ for the eigenvector of the fundamental mode $\zeta_0$.
Equivalently, this can be written as $\ev{\mathcal V}{\zeta_0}=2V[\zeta_0]>0$.

\subsection{Stability criterion from Mass-Radius relation}

\begin{comment}
\JR{I thought you were going to follow a perturbative argument here. 
For the argument you present, you should disucss why it breaks down in the multi-fluid case.
If this section is not necessary for your main argument, I suggest either moving it to an appendix, or deleting it.}
\DC{That's the odd thing about the argument used through M-R curves. They never show $\omega^2\sim \frac{dM}{dR}$, but rather just state that a perturbation can only map an equilibrium configuration to another equilibrium configuration if the perturbation is time-independent so $\omega^2=0$ for the perturbation. At the same time, a perturbation can only do that if $\delta M=\delta N=0$. Thus for an equilibrium configuration with $\delta M=0$ must have some mode $\omega_n^2=0$. The extended argument works for the multi-fluid case if we assume that such modes exist, are real, and complete (completeness being needed to say that we can just analyze the problems by modes).}
\end{comment}

For single fluid stars, there is a one-to-one relation between solutions that satisfy $dM/dR=0$ (which we call ``extremum configurations''), and solutions that have at least one mode that is identically zero \cite{BardeenThorneMeltzer1966,HTWW1965}.
This relation is widely used to study the radial stability of stellar configurations, as computing $M(R)$ requires only solving the background equations of motion, instead of the more complicated radial perturbation equations.
As we discuss later though, while extremum configurations have at least one mode which is zero, there is no guarantee as to \emph{which} mode is zero, i.e.~the zero mode need not be the fundamental model. It is the zero crossings of the zero mode which delineate radially stable versus unstable stellar configurations.

First, we note that we can parameterize stellar solutions as $(M(\epsilon_c),R(\epsilon_c))$, because both the stellar mass $M$ and radius $R$ (for a fixed, single-parameter equation of state) only depend on the central density of the star.
Since $R(\epsilon_c)$ is a one-parameter family of functions, we can invert the relation (in open neighborhoods about points with ${dR}/{d\epsilon_c}\ne0$) to obtain another one-parameter family of equilibrium configurations $M(R)$.
Extrema of the two one-parameter families of the  mass (i.e. $M(\epsilon_c)$ and $M(R)$) are related to each other by the chain rule according to
\begin{equation}
    \frac{dM}{d\epsilon_c}\eval_{\epsilon^*}=\frac{dM}{dR}\eval_{R(\epsilon^*)}\frac{dR}{d\epsilon_c}\eval_{\epsilon^*} \ ,
\end{equation}
where $\epsilon^*$ is the central density at which the extremum occurs.
Hence, an extremum of $M(\epsilon_c)$ will always be an extremum of either $M(R)$ or $R(\epsilon_c)$, and moreover, extrema of $M(R)$ will also be extrema of $M(\epsilon_c)$
(as long as both $\frac{dR}{d\epsilon_c}$ and $\frac{d\epsilon_c}{dR}$ are nonzero at this point). Given all of this, one can use the extrema of $M(R)$ to study stability, instead of considering the extrema of $M(\epsilon_c)$.

A star is radially stable if and only if $\omega_0^2>0$. 
The argument we presented above shows that when $dM/dR=0$, \emph{some} oscillation mode has zero frequency, but this does not show that the fundamental mode is the one with zero frequency.
To address this issue (see e.g.~\cite{Oppenheimer1939}) one assumes that the equation of state in the low density regime must coincide with that of white dwarfs.
Given that white dwarfs are stable (as confirmed by astronomical observations), one then concludes that white dwarf stars sit on a stable branch, and thus all eigenfrequencies for them must be positive.
Hence, the first extrema (in fact maximum) of the $M(\epsilon_c)$ curve must be where the fundamental mode changes from a positive to a negative frequency.
We note that at the zeros of $dM/d\epsilon_c$, $dR/d\epsilon_c<0$ indicates a change of stability of an even mode, while $dR/d\epsilon_c>0$ is a change of stability for an odd mode \cite{BardeenThorneMeltzer1966}.
Using this fact, it is possible to determine which mode changes stability and whether they turn positive or negative.
For a more thorough discussion of the general relation between the zeros of $dM/d\epsilon_c$ and the sign of the radial modes, see \cite{BardeenThorneMeltzer1966} and chapter $6.8$ of \cite{ShapiroTeukolsky1983}.

%\JR{Do you mean here whether you go from a stable to unstable configuration (or vice-versa)}
%The rest of the changes of the modes is determined on whether the $M(\epsilon_c)$ curves bends clockwise or counterclockwise \cite{BardeenThorneMeltzer1966}.
%\JR{So the fundamental mode is ``even'', the first overtone is ``odd'', and so on?}
%Chapter $6.8$ of \cite{ShapiroTeukolsky1983} discusses this behaviour with figures for the Newtonian case
%but the same qualititive behaviour holds for the general relativistic case \cite{BardeenThorneMeltzer1966}.

We end by noting some implicit assumptions in the approach we outlined above.
The first assumption is that $\omega^2_0[\epsilon^*]$ is not itself at a local minimum (or maximum), such that $\omega^2_0$ will be positive (or negative) at both sides of the $\omega^2_0=0$ point.
Secondly, the one parameter family $M(\epsilon_c)$ must be continuously differentiable, which is guaranteed if the TOV equations themselves (the system Eqs.~\eqref{eq:hydro}, \eqref{eq:m} and \eqref{eq:phi2} together with the EoS $p(\epsilon)$) are continuously differentiable on the initial conditions. Note that this is not always true when considering equations of state with certain models for first-order phase transitions. 
The third assumption is that ${dM}/{dR}=0$ divides the configuration space into two disjoint regions of parameter space: a ``stable'' region and an ``unstable'' region.
This is a weak assumption for single-fluid stars, but for two-fluid (or more fluids) stars it is less obvious that this always holds; nevertheless, we will continue to assume that it holds true in the two-fluid case as well, when properly generalized to a higher-dimensional initial condition space.

\section{Lagrangian Perturbations for Two Fluids}\label{sec:lang}

In this section, we derive the background and radial perturbative equations of motion for a two-fluid relativistic star.
Our main result is that we can write the radial perturbative equations as a system of symmetric second order equations that are analogous to those for a single-fluid star.
Our approach is similar to what of \cite{Kain2020}, except that this reference did not rewrite the equations in a symmetric form, which is a crucial step in our later calculations.
Reference \cite{Comer1999} additionally considered the radial perturbations of two-fluid stars, with the inclusion of fluid interactions terms. 
The approach in \cite{Comer1999} was later numerically studied in \cite{Leung2012}.
We do not take this latter approach, as \cite{Comer1999} did not write the equations of motion in a symmetric form either.

%\JR{Easier to say, and clearer to write out the stress-energy tensor explicitely. Write it here, before background. Set your notation here, not in the background subsection.}

We consider two non-interacting, relativistic fluids, and write the stress-energy tensor as
\begin{align}
    \label{eq:multiple-fluid-perfect}
    T_{\mu\nu}
    =
    \sum_{X=1}^2 T_{\mu\nu}^{(X)}
    ,
\end{align}
where $T_{\mu\nu}^{(X)}$ refers to a single-fluid stress energy tensor [see Eq.~\eqref{eq:single-fluid-perfect}].
Each fluid has their own fluid variables $(\epsilon_X,p_X,n_X)$, together with a one parameter family EoS for each fluid $p_X(\epsilon_X)$.
We index the two fluids $X \in (1, 2)$, with the capital roman letters $X, Y, Z$, etc indicating the fluid.

Each fluid obeys the Euler equation, the first law of thermodynamics, and the number density conservation independently because we assume they only interact gravitationally.
Thus, we can generalize Eqs.~\eqref{eq:consv} and \eqref{eq:first} simply by setting $p\to p_X,\epsilon\to\epsilon_X,n\to n_X$.
We can similarly define the speed of sound and the adiabatic index for each fluid by subscripting the appropriate variable with the index '$X$'.
In particular, we have
\begin{gather}
    \frac{d\epsilon_X}{dn_X}=\frac{\epsilon_X+p_X}{n_X} \ , \label{eq:2first}\\
    c_{s,X}^2=\frac{d p_X}{d \epsilon_X} \ ,\\
    \gamma_X=\frac{d\log p_X}{d\log n_X} \ .
\end{gather}
% \DC{$c_{s,X}^2$ is a little obstruse notation, can we put a note to say that from now on $c_{X}^2\equiv c_{s,X}^2$?}
As for a single fluid, we obtain the following relation between the speed of sound and the adiabatic index:
\begin{equation}\label{eq:relc}
    \gamma_Xp_X=(\epsilon_X+p_X)c_{s,X}^2 \ .
\end{equation}

\subsection{Background Equations}\label{ssec:back2}

The background spacetime will continue to be treated as static and spherically symmetric, so that its line element can be represented by Eq.~\eqref{eq:metric}.
%We consider two non-interacting, relativistic fluids.
As the background is static and the four-velocity of the fluids is timelike, $\left(u_X\right)_{\mu} \left(u_X\right)^{\mu}=-1$, we set $u_X^\mu=(e^{\Phi},0,0,0)$ for each fluid --that is, the background fluid vectors are identical to those of a static single-fluid star.
The independent conservation of each stress-energy tensor then leads to the two-fluid generalization of the Euler equation in Eq.~\eqref{eq:hydro}, namely
\begin{align}
    \frac{d}{dr}p_X=(\epsilon_X+p_X)\frac{d}{dr}\Phi \label{eq:px} \ ,
\end{align}
which holds for each fluid independently.

%As the form of the Einstein equations do not change besides the change in the stress-energy tensor (Eq.~\eqref{eq:multiple-fluid-perfect}), 
The background equations of motion for the metric variables $\Lambda$ and $\Phi$ remain essentially unchanged from their single-fluid forms (Eqs.~\eqref{eq:lam} and \eqref{eq:phi}, respectively), except that $\epsilon\to\sum_X\epsilon_X$ and $p\to\sum_Xp_X$. 
As for the case of a single fluid, we define $m(r)$ from $\Lambda(r)$ using Eq.~\eqref{eq:defm}.
%Then the only change in Eqs.~\eqref{eq:m} and \eqref{eq:phi2} from a single fluid is that $\epsilon\to\sum_X\epsilon_X,p\to\sum_Xp_X$.
The equations of motion for the mass aspect $m(r)$ and the metric function $\Phi(r)$ are then
\begin{align}
    \frac{d }{dr}m&=4\pi r^2\sum_X\epsilon_X\label{eq:2m} \ , \\
    \frac{d}{dr}\Phi&=-\frac{1}{r^2}\frac{1}{1-\frac{2m}{r}}\left(m+4\pi r^3\sum_X p_X\right)\label{eq:2phi2} \ .
\end{align}
The generalization of Eq.~\eqref{eq:diff} for a multi-fluid star is
\begin{align}\label{eq:2diff}
    \frac{d}{dr}\Lambda-\frac{d}{dr}\Phi=4\pi re^{2\Lambda}\sum_X(\epsilon_X+p_X) \ . 
\end{align}

We can then integrate the system of Eqs.~\eqref{eq:px}, \eqref{eq:2m} and \eqref{eq:2phi2} in almost the same way as for a single-fluid star. First, we need to specify a one-parameter EoS $p_X(\epsilon_X)$ for each fluid.
Then, we integrate Eq.~\eqref{eq:2m} together with the two pressure equations Eq.~\eqref{eq:px}, substituting for $d\Phi/dr$ using Eq.~\eqref{eq:2phi2} in each case. 
The initial conditions and the termination conditions, however, must be chosen carefully.  
The initial condition for the mass remains $m(0)=0$, while the total central pressure of the star is equal to the sum of the central pressures $p_X^c$ of each fluid, which in turn are functions of the central densities $\epsilon_X^c$ and equation of state $p_X^c = p_X\left(\epsilon_X^c\right)$.
Unlike the case for a single-fluid star, a two-fluid star generally has two ``outer'' boundaries, as the pressure can vanish for each fluid at different radii.
%That is, the pressure for one of the fluids will generally vanish before that of the other fluid.
Without loss of generality, we assume that the pressure of fluid ``1'' vanishes at a radius smaller than or equal to the radius of fluid ``2''; that is $p_1$ vanishes at $r=R_1$, and $p_2$ vanishes at $R_2\geq R_1$.
We can then integrate the two-fluid TOV equations until radius $R_1$, and then evolve only the single-fluid TOV equations for fluid two until we reach radius $R_2$.
Once we have obtained $m$, $\epsilon_X$, and $p_X$, we can solve for $\Phi(r)$ by integrating from the outer boundary $R_2$ to the center of the star, using the boundary condition $\Phi(R_2)=-\frac12\log\left(1-\frac{2m(R_2)}{R_2}\right)$.
We impose continuity across the boundary $r=R_1$.
The number densities are obtained from integrating the first law of Eq.~\eqref{eq:2first} from the outer boundary to the center with boundary condition $n_X(R_X)=0$.
The total mass is obtained from $M=m(R_2)$, and the total particle numbers $N_X$ can be obtained from integrating $n_X$ over the $3$-volume of the star.
%As in the single-fluid case, we can compute the gravitational potential $\Lambda(r)$ directly from $m(r)$ and we can define the total mass $M=m(R_2)$.
Finally, we note that the above equations and algorithmic procedure can be applied essentially unchanged for an arbitrary number of fluids (c.f. \cite{Kain2020});
we discuss the arbitrarily many fluids case in Sec.~\ref{sec:multi}.
%As can be seen, all background equations change in a trivial matter, and the way to integrate them remains practically the same.
%Since these changes are rather simple, they will not make the derivation of the perturbations much more difficult.

%It is worthwhile to note that up until this point, nothing has been assumed for there being specifically two fluids.
%Thus the equations still apply for any finite number of fluids.

%=========================================================================================
\subsection{Radial Perturbations}\label{ssec:2pert}

% {\ny{There is a notational issue in this section. All of a sudden, the background fluid variables have a (0), while in the previous section there were no parenthesis around the 0. Pick one notation please (ie. drop the parenthesis) and propagate throughout}}

Here we summarize the perturbative equations of motion for a multi-fluid star.
%In particular, we derive the equations of motion for $\delta\Lambda,\partial_r\delta\Phi,\delta\epsilon_X,\delta p_X$ and of the perturbed Euler equation for multiple fluids, resulting in
Our main results are Eqs.~\eqref{eq:2dlam}, \eqref{eq:2dphi}, \eqref{eq:2de}, \eqref{eq:2dp} and \eqref{eq:2eul},
which are the same as equations (43), (44), (46), (47) and (50) in \cite{Kain2020}. We rederive these equations here for consistency and to establish notation.

As for the single-fluid case, we denote the Eulerian perturbation of a field/fluid quantity via $\delta$; e.g. $\Phi(t,r)=\Phi^{(0)}(r)+\delta\Phi(t,r)$ for field variables and $p_X(t,r)=p_X^{(0)}(r)+\delta p_X(t,r)$ for fluid variables.
To define the Lagrangian perturbation for each fluid, we define separate displacement vectors $\xi_X$ for each fluid.
For example, the Lagrangian perturbation of the pressure for fluid $X$ is
\begin{equation}
    \Delta p_X=\delta p_X+\xi_X\frac{d}{dr}p_X^{(0)} \ .
\end{equation}
%We note though that the perturbed fluid four-velocities can now  \JR{How? also, doesn't change to linear order, so maybe don't bring up unless you want to add more details}.
We also note that when we expand to first order, Eq. \eqref{eq:u} will have the same functional form as in the single-fluid case, but with $u\to u_X$ and $\xi\to \xi_X$.

The radial perturbative Euler equation [Eq. \eqref{eq:eul}] for a multi-fluid star ends up taking essentially the same form as it does for a single-fluid star.
The equation of motion for the fluid displacement vector $\xi_X$ is
\begin{align}\label{eq:2eul}
    e^{2(\Lambda^{(0)}+\Phi^{(0)})}(\epsilon_X^{(0)}+p_X^{(0)})\ddot\xi_X=&-\partial_r\delta p_X+(\epsilon_X^{(0)}+p_X^{(0)})\partial_r\delta\Phi \nonumber\\
    &+(\delta\epsilon_X+\delta p_X)\frac{d}{dr}\Phi^{(0)} \ .
\end{align}
%Once we find expressions for $\delta p_X,\delta\epsilon_X,\partial_r\delta\Phi,\delta\Lambda$ in terms of $\xi_X$, these will serve as the equations of motion just as it happened for one fluid.
The two-fluid generalization of Eq. \eqref{eq:dphi} simply involves setting $\delta p\to\sum_X\delta p_X$ as follows
\begin{align}\label{eq:2dphi}
    \partial_r\delta\Phi=-4\pi re^{2\Lambda^{(0)}}\left(\sum_X\delta p_X\right)+\delta\Lambda\left(2\frac{d}{dr}\Phi^{(0)}-\frac1r\right) \ .
\end{align}
Likewise, the generalization of Eq. \eqref{eq:dlam} is 
\begin{align}\label{eq:2dlam}
    \delta\Lambda&=-4\pi re^{2\Lambda^{(0)}}\sum_X(\epsilon_X^{(0)}+p_X^{(0)})\xi_X \nonumber\\
    &=
    \left(\frac{d}{dr}\Phi^{(0)}-\frac{d}{dr}\Lambda^{(0)}\right)
    \bar\xi \ ,
\end{align}
% \DC{Next sentence is new}
where $\bar\xi$ is a ``weighted average'' of the Lagrangian displacements defined as
\begin{align}\label{eq:barxi}
    \bar\xi\equiv \frac{\sum_X(\epsilon_X^{(0)}+p_X^{(0)})\xi_X}{\sum_X(\epsilon_X^{(0)}+p_X^{(0)})} \ .
\end{align}

% \DC{The following is now redundant}

% {\ny{This seems redundant, since you have the Delta Lambda equation above...what's going on? why do we have it twice?}}
% \DC{The idea was to show how it relates to an ``averaged'' displacement vector. I'm changing it.}
% We generalize the relation between $\delta\Lambda$ and $(\frac{d}{dr}\Phi^{(0)}-\frac{d}{dr}\Lambda^{(0)})\xi$ (see second line of Eq.~\eqref{eq:dlam} for the single-fluid case {\ny{This equation only has 1 line...}}) by introducing a ``weighted average'' $\bar{\xi}$ of the displacements, with the weights being given by the quantity $(\epsilon_X+p_X)$.
% That is, we define

% so that we can write $\delta\Lambda$ as
% \begin{align}\label{eq:2dlamv2}
%     \delta\Lambda
%     =
%     \left(\frac{d}{dr}\Phi^{(0)}-\frac{d}{dr}\Lambda^{(0)}\right)\bar\xi \ .
% \end{align}

% \DC{End redundant section}

We next turn to deriving the equations of motion for the perturbed fluid variables $p_X$ and $\epsilon_X$.
The form of Eq.~\eqref{eq:dn} is essentially the same for a multifluid star, except that for a multifluid star $\delta\Lambda+\xi_X(\frac{d}{dr}\Lambda^{(0)}-\frac{d}{dr}\Phi^{(0)})\ne 0$ as a result of Eq.~\eqref{eq:2dlam}.
%\JR{I think mixing primes and $d/dr$ is confusing}
The generalization of Eq.~\eqref{eq:dn} for a two-fluid star is then
\begin{align}
    \delta n^X
    &=
    -
    \frac{e^{-\Phi^{(0)}}}{r^2}\partial_r\left(n_X^{(0)}r^2\xi_X e^{\Phi^{(0)}}\right)
    \nonumber\\
    &
    +
    n_X^{(0)}\left(\frac{d}{dr}\Phi^{(0)}-\frac{d}{dr}\Lambda^{(0)}\right)(\xi_X-\bar\xi) 
    \ .
\end{align}
Written this way, it is straightforward to see how for one fluid the above equation reduces to the simpler expression in Eq.~\eqref{eq:pertn}, because then $\bar\xi=\xi_X$.
We can then use the definition of $\gamma_X$ to obtain the perturbations for $\delta p_X$ just as we did in the single-fluid case [see Eq.~\eqref{eq:ppnn}]. 
We obtain
\begin{equation}\label{eq:2dp}
\begin{split}
    \delta p^X=&-\xi_X\frac{d}{dr}p_X^{(0)}-\frac{\gamma_X p_X^{(0)}e^{-\Phi^{(0)}}}{r^2}\partial_r\left(r^2\xi_Xe^{\Phi^{(0)}}\right)\\
    &+\gamma_Xp_X^{(0)}\left(\frac{d}{dr}\Phi^{(0)}-\frac{d}{dr}\Lambda^{(0)}\right)(\xi_X-\bar\xi) \ .
\end{split}
\end{equation}
We can get a relation between $\delta p_X$ and $\delta\epsilon_X$ in the same way as in Eq.~\eqref{eq:de} for the one fluid case. 
We find
\begin{equation}\label{eq:2de}
    \delta\epsilon_X=\delta p_X\frac{\epsilon_X^{(0)}+p_X^{(0)}}{\gamma_Xp_X^{(0)}} \ .
\end{equation}

\subsection{Radial Lagrangian fluid displacement}

%We next turn to rewriting the equations of motion for the Lagrangian radial fluid displacements.
Up until now, the equations of motion we derived hold for a star made up of an arbitrary number of fluids.
Here we focus on a two-fluid star and show how the equations of motion for the radial Lagrangian displacement can be simplified and written in a symmetric form.
%We note that our derivation of the equations of motion did not rely on there being only two fluids--that is they hold for $N$ non-interacting fluids as well as for two non-interacting fluids (or one fluid).
Having said that, the equations of motion we will present for a two fluid star can easily be generalized for a star made up of $N$ non-interacting fluids, as we briefly discuss in Sec.~\ref{sec:multi}.
%although some of the following work can be easily extended from two fluids to multiple fluids.
%
%\JR{I think it makes more sense to put this discussion at the beginning of this section, not the end, so the reader understands immediately that the main results are not ``new''}.
To simplify our notation, we drop the $(0)$ superscript for background quantities,
and take $\Phi'\equiv \frac{d}{dr}\Phi$, just as we did for a single fluid.

Substituting Eq.~\eqref{eq:2dphi}, \eqref{eq:2dlam}, \eqref{eq:2dp} and \eqref{eq:2de} into the perturbed Euler equation Eq.~\eqref{eq:2eul} results in a second-order equation of motion for each of the $\xi_X$.
Defining the scaled Lagrangian displacements $\phi_X\equiv r^2e^{\Phi}\xi_X$ and multiplying the equations by $e^{\Lambda-2\Phi}$ as in Sec.~\ref{sec:onefluid} gives us
% {\ny{Uff...I hate the notation because Ax is *both* an operator and a function. Can we add a decoration to the operators please? Like a hat maybe?}}\DC{Done}
\begin{equation}\label{eq:op}
    \hat A_X\ddot\phi_X+\hat C_X\phi_X+\hat D_X\phi_Y=0 \ ,
\end{equation}
where $(\hat A_X,\hat C_X,\hat D_X)$ are operators given by
\begin{subequations}
    \begin{gather}
        \hat A_X \ddot\phi_X\equiv A_X(r)\ddot\phi_X \ ,\\
        \hat C_X\phi_X\equiv -\partial_r(P_X(r)\phi_X)+[Q_X(r)+R(r)]\phi_X \ ,\\
        \hat D_X\phi_Y\equiv S(r)\phi_Y+\partial_r(f_X(r)\phi_Y)-f_Y(r)\partial_r\phi_Y \ . \label{eq:dx}
    \end{gather}
\end{subequations}
The functions $[A_X(r),P_X(r),Q_X(r),S(r),R(r),f_X(r)]$ are derived in detail in Appendix \ref{app:symm} and they are explicitly given by
\begin{subequations}\label{eq:2funcs}
    \begin{align}
        A_X(r)=&e^{3\Lambda-\Phi}\frac{(\epsilon_X+p_X)}{r^2} \ ,\label{eq:2AX}\\
        P_X(r)=&e^{\Lambda-3\Phi}\frac{\gamma_Xp_X}{r^2} \ , \label{eq:2PX}\\
        Q_X(r)=&-e^{\Lambda-3\Phi}\frac{(\epsilon_X+p_X)}{r^2}\left[\Phi''-\frac2r\Phi'-(\Phi')^2\right]  \nonumber\\
        &+\frac{4\pi}{r} e^{3\Lambda-3\Phi}(\epsilon_X+p_X)^2\left(\Phi'-\frac1r\right)-\frac{d}{dr}f_X \ , \label{eq:2Qb}\\
        R(r)=&4\pi\frac{e^{3\Lambda-3\Phi}}{r}\Phi'(\epsilon_X+p_X)(\epsilon_Y+p_Y) \nonumber\\
        &+16\pi^2e^{5\Lambda-3\Phi}\gamma_Xp_X(\epsilon_Y+p_Y)^2 \nonumber\\
        &+16\pi^2e^{5\Lambda-3\Phi}\gamma_Yp_Y(\epsilon_X+p_X)^2 \ , \label{eq:Rb}\\
        S(r)=&-4\pi \frac{e^{3\Lambda-3\Phi}}{r^2}(\epsilon_X+p_X)(\epsilon_Y+p_Y) \nonumber\\
        &-16\pi^2e^{5\Lambda-3\Phi}\gamma_Xp_X(\epsilon_Y+p_Y)^2 \nonumber \\
        &-16\pi^2e^{5\Lambda-3\Phi}\gamma_Yp_Y(\epsilon_X+p_X)^2 \ , \label{eq:Sb}\\
        f_X(r)=&\frac{4\pi}{r}e^{3\Lambda-3\Phi}\gamma_Xp_X(\epsilon_Y+p_Y) . \label{eq:fx}
    \end{align}
\end{subequations}
Observe that these functions depend only on the background fields --i.e.~they are functions of just the radial coordinate $r$.
%Equation ~\eqref{eq:op} is the two-fluid analog to Eq.~\eqref{eq:wop} for a single-fluid star.
Compared to Eq.~\eqref{eq:funcs1}, $A_X(r)$ and $P_X(r)$ keep the same essential functional form, while $Q_X(r)$ is modified due to the presence of the multiple fluids, which interact with each other only through the gravitational fields.
The functions $[S(r),R(r),f_X(r)]$ all arise due to the gravitational interaction between the two fluids.
Observe also that $[A_X(r),P_X(r),f_X(r)]$ are all positive definite, and that $S(r)$ is negative definite.
Moreover, the functions $[f_X(r),S(r),R(r)]$ are all identically zero when $r>R_1$ (assuming $R_2\geq R_1$), while $Q_X(r)$ goes to the one-fluid expression in this region.
%\JR{goes to zero at the boundary? or are zero everywhere?}\DC{goes to $0$ for any radius larger than the boundary}.
In Sec.~\ref{sec:energy}, we will use Eq.~\eqref{eq:op} to prove radial mode stability. 

% \DC{I see that the little discussion got removed. To keep consistency with how we did the two fluid, we should put the mode expansion here before the discussion on the boundary conditions. Alternatively, we put the boundary conditions on the time dependent and then expand into modes.}

We now rewrite the radial perturbative equations as an eigenvalue problem.
% We first decompose the displacement vector 
% {\ny{Same problem here as before with the notation for zeta meaning two things}} via $\phi_X(t,r)=e^{i\omega t}\zeta_X(r)$. 
% Observe that this decomposition assumes that the radial displacements of both fluids oscillate with the same frequency $\omega$. 
First, we assume that we can separate the solution in the form $\phi_X\left(t,r\right)=\zeta_X\left(r\right)\Theta_X\left(t\right)$.
Then, the radial perturbation equations demand that the two fluid elements have the same time dependence $\Theta_X\left(t\right)$ (up to an additive constant). 
We then decompose $\phi_X\left(t,r\right)= e^{i \omega t} \zeta_X\left(r\right)$, since the frequencies of oscillation of the two fluids must be the same.
Substituting this decomposition into Eq.~\eqref{eq:op}, we obtain 
\begin{equation}\label{eq:2sl}
    \omega^2\hat A_X\zeta_X=\hat C_X\zeta_X+\hat D_X\zeta_Y \ .
\end{equation}
% {\ny{You have here implicitly assumed that omega is the same for the two fluids. This makes perfect sense, since otherwise there would be collisions of the fluids, I imagine, and that would be bad. But I think you should add 1 sentence about that here.}}
Equation~\eqref{eq:2sl} represents a generalized eigenvalue problem for the mode frequencies $\omega^2$, given a set of boundary conditions for the variables $\zeta_X$.

The boundary condition for $\zeta_X$ is similar to that of a single fluid.
First, near the origin we must have $\xi_X(r\to0)\sim r$, which implies
\begin{equation}\label{eq:zero_boundary_2f}
    \lim_{r\to0}\frac{\zeta_X}{r^3}={\rm const}.
\end{equation}
We have two ``outer'' boundaries at the radii $R_1$ and $R_2$ and, as we mentioned before, we assume that $R_1<R_2$ without loss of generality. 
At each radial boundary we require the Lagrangian perturbation of the pressure for the fluid that vanishes at that boundary to go to zero
\begin{align}
    \Delta p_X\Big|_{r=R_X}
    =
    0
    .
\end{align}
At the inner radius $r=R_1$, we impose that all fluid quantities for fluid $2$ are continuous across that boundary.

The radial Lagrangian displacement for the pressure at the outer boundary $r=R_2$ takes the same form as it does for a single fluid
[Eq.~\eqref{eq:boundaryc}].
The radial Lagrangian displacement at the inner boundary though is modified as follows:
\begin{equation}\label{eq:outer_boundary_2f}
\begin{split}
    \Delta p_1(R_1)=\Bigg[&-\gamma_1p_1\frac{e^{-\Phi}}{r^2}\frac{d\zeta_1}{dr}\\
    &+4\pi \gamma_1p_1(\epsilon_2+p_2)\frac{e^{2\Lambda-\Phi}}{r}(\zeta_2-\zeta_1)\Bigg]\eval_{R_1}=0 \ .
\end{split}
\end{equation}
Since $p_1\to 0$ at $R_1$, this boundary condition, in turn, is taken to be $(\zeta_1,\frac{d\zeta_1}{dr},\zeta_2)$ remaining finite at $R_1$.
Moreover, we require continuity for $\zeta_2$ and $\frac{d\zeta_2}{dr}$ at $R_1$.

Even though Eqs.~\eqref{eq:2sl}, \eqref{eq:zero_boundary_2f} and \eqref{eq:outer_boundary_2f} define a generalized eigenvalue problem for the mode frequencies, it is \textit{not} a Sturm-Liouville problem, unlike in the single-fluid case of Sec.~\ref{ssec:var1}. This is because of the second term in the right-hand side of Eq.~\eqref{eq:2sl}, which couples the two fluids together. Therefore, to establish that the eigenvalues $\omega^2$ are real and bounded from below, one must prove so, as we do below. 

% $\omega^2$, given a set of boundary conditions for the variables $\zeta_X$.
%==================================================================================
\section{Variational Principle for the radial modes}\label{sec:energy}

In this section, we extend the results of Secs.~\ref{ssec:var1} and \ref{ssec:bound} to two fluids, which is needed to establish radial-mode stability.
Our analysis is based on finding a canonical energy of the perturbations, analogous to what was done for two-fluid Newtonian stars in \cite{AnderssonComerGrosart2004}.
The canonical energy is then used to obtain an action principle for the equations of motion and a variational principle for the eigenmodes of the two-fluid star, equivalent to the Chandrasekhar variational principle of single fluid stars.

%----------------------------------------------------------------------------------------------------------------------
\subsection{Deriving the Canonical Energy}\label{ssec:canon}

% \DC{Since want to keep variables with the time dependence separate from this from the solely radial dependence ones, changed $\zeta\to\phi$ and $\eta\to\psi$. After looking at it, it looks horrible $\phi_X$, so we can discuss what to change it to later. I will keep it as is to make an easy replace all.}

%\sout{For this section, we will work with the time-dependent version of the time dependent variables, and not specifically the modes.}
In this subsection, we work in the time domain --that is, we will not Fourier decompose functions in time.
We introduce the standard $L^2$ inner product on $(0,R_2)$, namely
\begin{equation}
    \label{eq:unweighed-inner-product}
    \langle f,g\rangle=\int_0^{R_2}dr\, \bar f \; g \ ,
\end{equation}
where the overhead bar stands for complex conjugation.
In general, up to the outer radius, we use the same inner product for both fluid perturbations.

Following the derivation in \cite{AnderssonComerGrosart2004}, we start by defining a symplectic structure $W$ on the phase space of solutions.
We divide the symplectic product into a symplectic product for each fluid $W_X$
\begin{equation}\label{eq:symplec}
    W_X(\psi_X,\phi_X)=\langle \psi_X,\hat A_X\dot\phi_X\rangle-\langle \hat A_X\dot\psi_X,\phi_X\rangle \ .
\end{equation}
The full symplectic structure is then the sum of the symplectic structures for each fluid,
\begin{equation}
    W[\boldsymbol\psi,\boldsymbol\phi]=W_1(\psi_1,\phi_1)+W_2(\psi_2,\phi_2) \ ,
\end{equation}
where $\boldsymbol\phi=(\phi_1,\phi_2)$ and $\boldsymbol\psi = (\psi_1,\psi_2)$.

Given this symplectic structure, we can now state the following theorem. 
\begin{theorem}\label{th:symplectic}
    The symplectic structure $W[\boldsymbol\psi,\boldsymbol\phi]$ is a conserved quantity in time when evaluated on solutions of the equations of motion, Eq.~\eqref{eq:op}.
\end{theorem}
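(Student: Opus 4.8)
\emph{Proof proposal.} The plan is to show $\frac{d}{dt}W=0$ by direct differentiation, reducing the time derivative to ``self-adjointness defects'' (i.e.\ boundary terms) of the spatial operators and then verifying these vanish under the boundary conditions of Eqs.~\eqref{eq:zero_boundary_2f} and~\eqref{eq:outer_boundary_2f}. First I would differentiate each $W_X$ in Eq.~\eqref{eq:symplec}. Because $\hat A_X$ is multiplication by the real positive function $A_X(r)$, it is self-adjoint in the inner product of Eq.~\eqref{eq:unweighed-inner-product}, so the terms quadratic in the first time derivatives cancel and one is left with $\frac{d}{dt}W_X=\langle\psi_X,\hat A_X\ddot\phi_X\rangle-\langle\hat A_X\ddot\psi_X,\phi_X\rangle$. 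Eliminating the second time derivatives with the equation of motion Eq.~\eqref{eq:op}, $\hat A_X\ddot\phi_X=-\hat C_X\phi_X-\hat D_X\phi_Y$ (and likewise for $\psi$), and summing over $X$ gives
\[
\frac{d}{dt}W=\sum_X\Big[\langle\hat C_X\psi_X,\phi_X\rangle-\langle\psi_X,\hat C_X\phi_X\rangle\Big]+\sum_X\Big[\langle\hat D_X\psi_Y,\phi_X\rangle-\langle\psi_X,\hat D_X\phi_Y\rangle\Big].
\]

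The whole statement thus reduces to showing that the combined spatial operator is formally self-adjoint, i.e.\ that both brackets vanish up to boundary terms. For the diagonal bracket, each $\hat C_X$ is a Sturm--Liouville operator with real coefficients $P_X,Q_X,R$; two integrations by parts turn $\langle\hat C_X\psi_X,\phi_X\rangle-\langle\psi_X,\hat C_X\phi_X\rangle$ into the Wronskian boundary term $\big[P_X(\partial_r\bar\psi_X\,\phi_X-\bar\psi_X\,\partial_r\phi_X)\big]$ evaluated at the endpoints. The off-diagonal bracket is where the symmetric form of the equations (the central construction of Sec.~\ref{sec:lang}) does the work: using the explicit $\hat D_X$ in Eq.~\eqref{eq:dx}, the fact that $S(r)$ is symmetric under $X\leftrightarrow Y$ together with the way $f_X$ and $f_Y$ enter means that $\hat D_1$ and $\hat D_2$ are formal adjoints of one another. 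Concretely, after integrating the $\partial_r(f_X\phi_Y)$ piece by parts, the volume integrands of $\langle\psi_1,\hat D_1\phi_2\rangle$ and $\langle\hat D_2\psi_1,\phi_2\rangle$ coincide identically, so this bracket also collapses to boundary terms proportional to $f_X\,\bar\psi\,\phi$ at the endpoints.

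It then remains to show that every boundary term vanishes. At the origin, the regularity condition $\zeta_X\sim r^3$ [Eq.~\eqref{eq:zero_boundary_2f}] combined with the $r^{-2}$ growth of $P_X$ and the $r^{-1}$ growth of $f_X$ makes each expression scale as a positive power of $r$ and hence vanish as $r\to0$. At the outer surface $R_2$, the Sturm--Liouville term for fluid $2$ vanishes because $P_2\propto p_2\to0$ there, while $f_1,f_2$ vanish identically for $r>R_1$. I expect the delicate point---and the main obstacle---to be the inner surface $r=R_1$, where fluid $1$ terminates: there I must use $P_1\propto p_1\to0$ together with the Lagrangian pressure condition $\Delta p_1(R_1)=0$ [Eq.~\eqref{eq:outer_boundary_2f}] to kill the fluid-$1$ Wronskian, and I must check that the $f_X$ boundary contributions are continuous across $R_1$ (or cancel in pairs), paying attention to possible jumps in $\epsilon_1+p_1$ there and to the continuity imposed on $\zeta_2$ and $\partial_r\zeta_2$. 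Once all endpoint contributions are shown to vanish, $\frac{d}{dt}W=0$, establishing conservation of the symplectic structure.
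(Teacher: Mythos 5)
Your proposal is correct and follows essentially the same route as the paper's proof: differentiate $W$, use self-adjointness of $\hat A_X$ and $\hat C_X$ to reduce everything to the off-diagonal terms, and then show these cancel because $\hat D_1$ and $\hat D_2$ are formal adjoints of one another [the paper's Eq.~\eqref{eq:sym}], with the $S$ terms cancelling pointwise and the $f_X$ terms cancelling after integration by parts using $f_1\to 0$ at $R_1$ and the $r^3$ behavior at the origin. Your additional explicit tracking of the Wronskian boundary terms of $\hat C_X$ at $r=R_1$ (killed by $P_1\propto p_1\to0$ together with finiteness of $\zeta_1$ and $\partial_r\zeta_1$) is precisely what the paper's assertion that $\hat C_X$ is self-adjoint on $(0,R_2)$ implicitly relies on.
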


\begin{proof}

Let $\boldsymbol\phi=(\phi_1,\phi_2)$ and $\boldsymbol\psi=(\psi_1,\psi_2)$ be two solutions to the equations of motion.
Then, the time derivative of either of the symplectic structures results in 
\begin{equation}\label{eq:dwx}
\begin{split}
    \frac{d}{dt}W_X&=\langle \dot\psi_X,\hat A_X\dot\phi_X\rangle+\langle\psi_X,\hat A_X\ddot\phi_X\rangle
     \\
    &-\langle \hat A_X\ddot\psi_X,\phi_X\rangle-\langle \hat A_X\dot\psi_X,\dot\phi_X\rangle\\
    &=-\langle \psi_X,\hat C_X\phi_X+\hat D_X\phi_Y\rangle+\langle \hat C_X\psi_X+\hat D_X\psi_Y,\phi_X\rangle\\
    &=\langle \hat D_X\psi_Y,\phi_X\rangle-\langle \psi_X,\hat D_X\phi_Y\rangle \ .
\end{split}
\end{equation}
%\DC{(added numbering to the equation which was previously unnumbered)}
To go from the first equality to the second equality in the above equation, we used that $A_X$ is self adjoint, i.e.~$\langle \psi_X,A_X\phi_X\rangle = \langle A_X\psi_X,\phi_X\rangle$, and we substituted in Eq.~\eqref{eq:op}.
To go from the second to the equality the third line to the fourth, we used that $C_X$ is self adjoint and the inner product is distributive.
One point to highlight is that $C_1$ is self-adjoint on the interval $(0,R_2)$ because
\begin{equation}
    \int_0^{R_2} dr\, \bar\phi_1\hat C_1\phi_1=\int_0^{R_1}dr\, \bar\phi_1\hat C_1\phi_1 \ ,
\end{equation}
which follows from the fact that all functions $[P_1(r),Q_1(r),f_1(r),R(r)]$ [see Eq.~\eqref{eq:2funcs}] that enter the definition of $C_1$ vanish for all $r>R_1$.

In general, $\langle \hat D_X\psi_Y,\phi_X\rangle\ne\langle \psi_X,\hat D_X\phi_Y\rangle$ so that $\frac{d}{dt}W_X$ (as given in Eq.~\eqref{eq:dwx}) is nonzero.
However, the sum of the time derivatives of the two symplectic structures is
\begin{align}\label{eq:WW}
    \frac{dW}{dt}\left[\boldsymbol\psi,\boldsymbol\phi\right]
    =&\langle \hat D_1\psi_2,\phi_1\rangle
    -\langle \psi_2,\hat D_2\phi_1\rangle \nonumber\\
    +&\langle \hat D_2\psi_1,\phi_2\rangle-\langle\psi_1,\hat D_1\phi_2\rangle \ ,
\end{align}
and we will now show that this does vanish.
Let us write the first line explicitly using the expression for the operators $D_X$ in Eq.~\eqref{eq:dx},
\begin{equation}\label{eq:difd}
\begin{split}
        &\langle \hat D_1\psi_2,\phi_1\rangle-\langle\psi_2,\hat D_2\phi_1\rangle\\
        =&\int_0^{R_1} dr\, \left[S\bar\psi_2\phi_1+\partial_r(f_1\bar\psi_2)\phi_1-f_2\bar\psi_2\partial_r\phi_1\right]\\
        -&\int_0^{R_1} dr\, \left[\bar\psi_2S\phi_1+\bar\psi_2\partial_r(f_2\phi_1)-\bar\psi_2f_1\partial_r\phi_1\right] \ .
\end{split}
\end{equation}
The integral runs over $(0,R_1)$ because $[S(r),f_1(r),f_2(r)]$ vanish when $r>R_1$.
We note that the terms in Eq.~\eqref{eq:difd} which multiply $S$ cancel each other out.
We next integrate by parts using the boundary conditions, and find that
\begin{equation}
\label{eq:simplify-expr-fetze}
\begin{split}
    \int_0^{R_1}dr\, \partial_r(f_1\bar\psi_2)\phi_1=&f_1\bar\psi_2\phi_1\eval_{0}^{R_1}-\int_0^{R_1}dr\, \bar\psi_2f_1\partial_r\phi_1\\
    =&-\int_0^{R_1} dr\, \bar\psi_2f_1\partial_r\phi_1 \ ,
\end{split}
\end{equation}
where the boundary term vanishes because $[\phi_1(r),\psi_1(r)] \propto r^3$ near $r=0$, while at $r=R_1$ we have that $[\phi_1(r), \psi_1(r)]$ are regular, while $\lim_{r\to R_1}f_1=0$.
Thus, substituting into Eq.~\eqref{eq:difd}, we find that
\begin{equation}
\int_0^{R_1}dr\,\left[\partial_r\left(f_1\bar{\psi}_2\right)\phi_1 + \bar{\psi}_2f_1\partial_r\phi_1\right]=0 \ .
\end{equation}
This means that the second term in the first line of Eq.~\eqref{eq:difd} cancels exactly with the last term in the second line of this equation.
%\DC{Reword to make sense}
The same cancellation occurs among the remaining terms in Eq.~\eqref{eq:difd},
%with the term whose integrand is $f_2\bar\psi_2\partial_r\phi_1$ cancelling with the term whose integrand is $\bar\psi_2\partial_r(f_2\phi_1)$.
and therefore, Eq.~\eqref{eq:difd} vanishes exactly.
From the symmetry of index interchange, we conclude that
\begin{equation}\label{eq:sym}
    \langle \hat D_X\psi_Y,\phi_X\rangle=\langle\psi_Y,\hat D_Y\phi_X\rangle \ ,
\end{equation}
for any $X \in (1,2)$ and $Y \neq X$.
Therefore, both lines of Eq.~\eqref{eq:WW} vanish exactly, proving the statement of the theorem.

\end{proof}

Having defined a symplectic structure, we can now define an energy.
We note that if $\phi=(\phi_1,\phi_2)$ is a solution to the equations of motion [Eq.~\eqref{eq:op}], then $\dot\phi$ is also a solution.
Therefore, by Theorem \ref{th:symplectic}, the quantity $E_c[\boldsymbol\phi,\dot{\boldsymbol\phi}]\equiv -\frac12W[\boldsymbol\phi,\dot{\boldsymbol\phi}]$ is conserved in time. 
% {\ny{Yes, but I think as you've defined it, Ec here is negative, while below it's positive. To be consistent, I think you need a minus sign in the Ec definition above.}} \DC{Done}.
Following \cite{AnderssonComerGrosart2004,Prabhu_2016,Friedman1978}, we define the conserved energy functional in that way, which we can expand using the equations of motion to find 
\begin{align}\label{eq:2energy}
    E_c[\boldsymbol\phi,\dot{\boldsymbol\phi}]
    =&
    \frac12\left[\langle\dot\phi_1,\hat A_1\dot\phi_1\rangle+\langle\dot\phi_2,\hat A_2\dot\phi_2\rangle\right]
    \nonumber\\
    +&\frac12\left[\langle \phi_1,\hat C_1\phi_1\rangle+\langle\phi_2,\hat C_2\phi_2\rangle\right]
    \nonumber\\
    +&\frac{1}{2}\left[\langle\phi_1,\hat D_1\phi_2\rangle+\langle\phi_2,\hat D_2\phi_1\rangle\right]  \  ,
\end{align}
where we have used the fact that $A_X$ is self-adjoint.
We can think of the first line of Eq.~\eqref{eq:2energy} as representing the ``kinetic energy'' $T$ of the two-fluid system, while the second and third lines constitute the ``potential energy'' $V$ of the system.
That is,
\begin{align}
    T[\dot{\boldsymbol\phi}]
    &\equiv
    \frac12\left[\langle \dot\phi_1,\hat A_1\dot\phi_1\rangle+\langle\dot\phi_2,\hat A_2\dot\phi_2\rangle\right] 
    ,\\
    V[\boldsymbol\phi]
    &\equiv
    \frac12\big[
        \langle \phi_1,\hat C_1\phi_1\rangle+\langle \phi_2,\hat C_2\phi_2\rangle
        \nonumber\\
        &\qquad+\langle\phi_1,\hat D_1\phi_2\rangle+\langle\phi_2,\hat D_2\phi_1\rangle
    \big] \ .
\end{align}
We emphasize that the interaction between the two fluids only shows up in the potential energy.

%========================================================================================
\subsection{Variational Principle}\label{ssec:variation}

Having defined a kinetic and potential energy, we can now define a Lagrangian
\begin{equation}\label{eq:ii}
    L[\boldsymbol\phi,\dot{\boldsymbol\phi}]=T[\dot{\boldsymbol\phi}]-V[\boldsymbol\phi] \ ,
\end{equation}
and an action
\begin{align}
\label{eq:action_principle}
    S[\boldsymbol\phi,\dot{\boldsymbol\phi}]&=\int_{t_1}^{t_2} dt\,  L
    \nonumber
    \\
    &=\frac12\int_{t_1}^{t_2} dt\, \sum_X\Big(\langle \dot\phi_X,\hat A_X\dot \phi_X\rangle-\langle \phi_X,\hat C_X\phi_X\rangle
    \nonumber\\
    &\qquad\qquad\qquad-\langle \phi_X,\hat D_X\phi_Y\rangle\Big) \ .
\end{align}
% \begin{comment}
% \begin{widetext}
% \begin{equation}
% \label{eq:action_principle}
% \begin{split}
%     S[\phi_1,\phi_2]&=\int_{t_1}^{t_2} dt\, \mathcal L=\frac12\int_{t_1}^{t_2} dt\, \sum_X\left(\langle \dot\phi_X,A_X\dot \phi_X\rangle-\langle \phi_X,C_X\phi_X\rangle-\langle \phi_X,D_X\phi_Y\rangle\right) \ .
% \end{split}
% \end{equation}
% \end{widetext}    
% \end{comment}

Varying Eq.~\eqref{eq:action_principle} gives us the equations of motion Eq.~\eqref{eq:wop}.
To verify this, we follow the standard steps in the calculus of variations (e.g. Secs.~11.1 and 11.4 of \cite{Strauss2009}).
First, we set $\phi_X\to \phi_X+\varepsilon \; \psi_X$ for both fluids, where $\psi_X$ are  arbitrary functions and $\varepsilon$ is a real constant (not to be confused with the energy density $\epsilon$).
The only restrictions on these functions are that $\psi_X(t_1,r)=\psi_X(t_2,r)=0$,
and that they obey the same boundary conditions as imposed on the solutions to Eq.~\eqref{eq:wop} at the radial boundaries. We next define the function $s(\varepsilon)\equiv S[\boldsymbol\phi+\varepsilon\,\boldsymbol\psi,\dot{\boldsymbol\phi}+\varepsilon\,\dot{\boldsymbol\psi}]$, and, 
for simplicity, we assume that $(\varepsilon,\phi_X,\psi_X)$ are all real\footnote{We note that the complex case can also be easily obtained by assuming $\phi_X,\bar\phi_X$ as independent functions.}.~We obtain the equations of motion from $(ds/d\varepsilon)|_{\varepsilon = 0}=0$.

To simplify the calculation of $ds/d\varepsilon$, we separate the function $s(\varepsilon)$ into the following terms
\begin{align}
    s(\varepsilon) = \sum_X a_X(\varepsilon) - c_X(\varepsilon) - d_X(\varepsilon)\,,
\end{align}
with
\begin{subequations}\label{eq:helper}
\begin{gather}
    a_X(\varepsilon)=\frac12\int_{t_1}^{t_2} dt\, \langle \dot\phi_X+\varepsilon \, \dot\psi_X,\hat A_X(\dot\phi_X+\varepsilon\, \dot\psi_X)\rangle \ ,\\
    c_X(\varepsilon)=\frac12\int_{t_1}^{t_2} dt\, \langle \phi_X+\varepsilon\, \psi_X,\hat C_X(\phi_X+\, \varepsilon\psi_X)\rangle \ ,\\
    d_X(\varepsilon)=\frac12\int_{t_1}^{t_2} dt\, \langle \phi_X+\varepsilon\, \psi_X,\hat D_X(\phi_Y+\varepsilon\, \psi_Y)\rangle \ .
\end{gather}
\end{subequations}
Using the definition of the inner product, and integrating the derivative of $a_X$ by parts, we then find
% To calculate the derivative of $d_X$, we use Eq.~\eqref{eq:sym} {\ny{Why 89? Aren't you just using the definition of the inner product?}} and find that
% \begin{equation}\label{eq:dx_varied}
%     \left(\frac{d \, d_1}{d\varepsilon}\right)_{\varepsilon=0} \!\!\!\!+ \left(\frac{d \, d_2}{d\varepsilon}\right)_{\varepsilon=0} \!\!\!\! =\int_{t_1}^{t_2} dt\int_{0}^{R_2} \!\!\!\!dr\, \left(\psi_1D_1\phi_2+\psi_2D_2\phi_1\right),
% \end{equation}
% and thus 
\begin{subequations}
\begin{align}
    \frac{da_X}{d\varepsilon} &= \frac{1}{2} \int_{t_1}^{t_2} dt \left< \dot{\psi}_X,\hat A_X \dot{\phi}_X\right> + \left<\dot{\phi}_X,\hat A_X \dot{\psi}_X\right> + {\cal{O}}(\varepsilon)\,,
    \nonumber \\
    &= \int_{t_1}^{t_2} dt \int_0^{R_2} dr \; \dot{\psi}_X \hat A_X \dot{\phi}_X + {\cal{O}}(\varepsilon)\,,
    \label{eq:derive1}
    \\
    \frac{dc_X}{d\varepsilon} &=\frac{1}{2} \int_{t_1}^{t_2} dt \left< {\psi}_X,\hat C_X {\phi}_X\right> + \left<{\phi}_X,\hat C_X {\psi}_X\right> + {\cal{O}}(\varepsilon)\,,
    \nonumber \\
    &= \int_{t_1}^{t_2} dt \int_0^{R_2} dr \; {\psi}_X \hat C_X {\phi}_X + {\cal{O}}(\varepsilon)\,,
    \label{eq:derive2}
    \\
    \frac{dd_X}{d\varepsilon} &= \frac{1}{2} \int_{t_1}^{t_2} dt \left< {\psi}_X,\hat D_X {\phi}_Y\right> + \left<{\phi}_X,\hat D_X {\psi}_Y\right> + {\cal{O}}(\varepsilon)\,,
    \nonumber \\
    &= \frac{1}{2} \int_{t_1}^{t_2} dt \int_0^{R_2} dr \; \left({\psi}_X \hat D_X {\phi}_Y + {\psi}_Y \hat D_Y {\phi}_X \right) + {\cal{O}}(\varepsilon)\,,
    \label{eq:derive3}
\end{align}
\end{subequations}
where in the last line we used Eq.~\eqref{eq:sym}.
% {\ny{Please check the above equations}}
% \DC{Add $d_X+d_Y$.}
Thus, taking the limit as $\varepsilon$ goes to zero, we have
% {\ny{Isn't there an overall minus sign below that is missing ?}}
\begin{equation}\label{eq:variation}
    \left(\frac{d s}{d\varepsilon}\right)_{\varepsilon=0} \!\!\!\! \!\!=-\int_{t_1}^{t_2} \!\!\!\! dt\, \int_0^{R_2} \!\!\!\! dr\, \sum_X \psi_X(A_X\ddot\phi_X+C_X\phi_X+D_X\phi_Y).
\end{equation}
% where the variations of $a_X,c_X$ is standard in Sturm-Liouville theory \cite{Strauss2009}.
Setting $(ds/d\varepsilon)_{\varepsilon=0}=0$ for arbitrary $\psi_X$ implies that $\phi_X$ must satisfy Eq.~\eqref{eq:op}. 
% {\ny{(41)? don't you mean (74)?}}\DC{Yes, fixed}.

We next define a Rayleigh quotient with the \textit{time-independent} displacements $\zeta_X(r)$,
% {\ny{No. What do you mean by that? You said earlier that you are working ``in the time domain'', meaning that you weren't going to do a Fourier decomposition. Now, you seem to be doing that, and, again, using the same symbol for the zeta's when referring to time-dependent and time-independent functions. Please fix.}} 
in the same way as done in Appendix B of \cite{HTWW1965}.
Dividing the potential energy by the kinetic energy evaluated at $\zeta$ instead of $\dot\phi$ (see first equality in Eq.~\eqref{eq:action}), we then define
\begin{equation}\label{eq:2Rayleigh}
    I[\boldsymbol\zeta]\!\!=\!\!\frac{\langle \zeta_1,\hat C_1\zeta_1\rangle+\langle \zeta_2,\hat C_2\zeta_2\rangle+\langle \zeta_1,\hat D_1\zeta_2\rangle+\langle \zeta_2,\hat D_2\zeta_1\rangle}{\langle \zeta_1,\hat A_1\zeta_1\rangle+\langle\zeta_2,\hat A_2\zeta_2\rangle}.
\end{equation}
The quotient of Eq.~\eqref{eq:2Rayleigh} allows us to define a variational principle on the eigenfrequencies of the radial modes, which we express in the following theorem.
%\JR{Do the vectors in the Rayleigh quotient really need to be time independent? The variation of the quotient doesn't appear to rely on that}

\begin{theorem}\label{th:vari}
    The eigenvalues $\omega^2$ of Eq.~\eqref{eq:2sl} with the boundary conditions of Eqs.~\eqref{eq:zero_boundary_2f} and \eqref{eq:outer_boundary_2f},
    are given by the extremal values of the Rayleigh quotient of Eq.~\eqref{eq:2Rayleigh}. The converse also holds: the extrema of said Rayleigh quotient are eigenvalues of the boundary value problem.
\end{theorem}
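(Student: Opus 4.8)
The plan is to treat Theorem~\ref{th:vari} as a Rayleigh-quotient (stationarity) calculation adapted to the coupled two-fluid operators, exactly parallel to the single-fluid Chandrasekhar principle of Sec.~\ref{ssec:var1} but with the interaction operators $\hat D_X$ retained. I would first abbreviate the numerator and denominator of Eq.~\eqref{eq:2Rayleigh} as $\mathcal N[\boldsymbol\zeta]$ and $\mathcal D[\boldsymbol\zeta]$, so that $I[\boldsymbol\zeta]=\mathcal N[\boldsymbol\zeta]/\mathcal D[\boldsymbol\zeta]$, and then show that the stationarity condition $(dI/d\varepsilon)_{\varepsilon=0}=0$ under an arbitrary admissible variation $\zeta_X\to\zeta_X+\varepsilon\,\psi_X$ is equivalent to the coupled eigenvalue system Eq.~\eqref{eq:2sl}, with the common extremal value equal to $\omega^2$. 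Perturbing both displacements, with the $\psi_X$ taken to be arbitrary real functions obeying the \emph{same} boundary conditions [Eqs.~\eqref{eq:zero_boundary_2f} and \eqref{eq:outer_boundary_2f}] as the $\zeta_X$, the quotient rule gives
\[
\left(\frac{dI}{d\varepsilon}\right)_{\varepsilon=0}
=\frac{1}{\mathcal D}\left[\left(\frac{d\mathcal N}{d\varepsilon}\right)_{\varepsilon=0}-I[\boldsymbol\zeta]\left(\frac{d\mathcal D}{d\varepsilon}\right)_{\varepsilon=0}\right],
\]
so the task reduces to computing the first variations of $\mathcal N$ and $\mathcal D$.

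For the denominator, self-adjointness of $\hat A_X$ together with reality of the inner product yields $(d\mathcal D/d\varepsilon)_{\varepsilon=0}=2\sum_X\langle\psi_X,\hat A_X\zeta_X\rangle$. For the diagonal pieces of $\mathcal N$ I would use self-adjointness of $\hat C_X$ in precisely the same way to obtain $2\sum_X\langle\psi_X,\hat C_X\zeta_X\rangle$; this is the identical manipulation already carried out for the single fluid and echoes Eq.~\eqref{eq:derive2}. The only nontrivial step is the variation of the interaction terms $\langle\zeta_1,\hat D_1\zeta_2\rangle+\langle\zeta_2,\hat D_2\zeta_1\rangle$, whose first-order part is
\[
\langle\psi_1,\hat D_1\zeta_2\rangle+\langle\zeta_1,\hat D_1\psi_2\rangle+\langle\psi_2,\hat D_2\zeta_1\rangle+\langle\zeta_2,\hat D_2\psi_1\rangle.
\]
Here the two mixed terms do not have the variation function in the first slot; I would move it there using reality of the inner product together with the operator symmetry Eq.~\eqref{eq:sym} established in the proof of Theorem~\ref{th:symplectic}, for instance
\[
\langle\zeta_1,\hat D_1\psi_2\rangle=\langle\hat D_1\psi_2,\zeta_1\rangle=\langle\psi_2,\hat D_2\zeta_1\rangle,
\]
and likewise for the $(2,1)$ term. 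This collapses the four terms to $2\langle\psi_1,\hat D_1\zeta_2\rangle+2\langle\psi_2,\hat D_2\zeta_1\rangle$, mirroring Eq.~\eqref{eq:derive3}, so that altogether
\[
\left(\frac{d\mathcal N}{d\varepsilon}\right)_{\varepsilon=0}=2\sum_X\langle\psi_X,\hat C_X\zeta_X+\hat D_X\zeta_Y\rangle.
\]

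Collecting the pieces, the stationarity condition becomes $\sum_X\langle\psi_X,\hat C_X\zeta_X+\hat D_X\zeta_Y-I[\boldsymbol\zeta]\,\hat A_X\zeta_X\rangle=0$ for every admissible pair $(\psi_1,\psi_2)$. Since the $\psi_X$ are independent and otherwise arbitrary, the fundamental lemma of the calculus of variations forces $\hat C_X\zeta_X+\hat D_X\zeta_Y=I[\boldsymbol\zeta]\,\hat A_X\zeta_X$ for each $X$, which is exactly Eq.~\eqref{eq:2sl} with $\omega^2=I[\boldsymbol\zeta]$; hence every extremal value of the quotient is an eigenvalue. The converse is immediate: if $(\zeta_1,\zeta_2)$ solves Eq.~\eqref{eq:2sl} with eigenvalue $\omega^2$, substituting $\hat C_X\zeta_X+\hat D_X\zeta_Y=\omega^2\hat A_X\zeta_X$ into $\mathcal N$ gives $\mathcal N=\omega^2\mathcal D$, so $I[\boldsymbol\zeta]=\omega^2$ and each eigenvalue is realized as an extremal value.

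The main obstacle I anticipate is bookkeeping the boundary terms generated by the derivative pieces inside $\hat C_X$ and $\hat D_X$ when invoking self-adjointness and Eq.~\eqref{eq:sym}: these identities hold only because both the $\zeta_X$ and the variations $\psi_X$ satisfy the regularity condition at the origin [Eq.~\eqref{eq:zero_boundary_2f}] and the Lagrangian-pressure conditions at $R_1$ and $R_2$, so that every surface contribution from integration by parts vanishes exactly, as in Eq.~\eqref{eq:simplify-expr-fetze} and the vanishing of $f_1$ and $P_1$ at $R_1$ used in Theorem~\ref{th:symplectic}. I would therefore be careful to restrict the admissible $\psi_X$ to the same domain as the eigenfunctions; with that restriction, the remainder of the argument is a direct two-fluid generalization of the single-fluid variational principle, the interaction operators contributing only through the symmetric combination enforced by Eq.~\eqref{eq:sym}.
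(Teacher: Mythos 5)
Your proposal is correct and follows essentially the same route as the paper's proof: vary the quotient $\zeta_X\to\zeta_X+\varepsilon\,\psi_X$, apply the quotient rule, use self-adjointness of $\hat A_X$ and $\hat C_X$ together with the symmetry relation Eq.~\eqref{eq:sym} to collapse the interaction cross-terms (exactly mirroring Eqs.~\eqref{eq:derive1}--\eqref{eq:derive3}), and invoke arbitrariness of the variation to recover Eq.~\eqref{eq:2sl} with $\omega^2=I[\boldsymbol\zeta]$. Your explicit converse step (substituting the eigenvalue equation back into the quotient to get $\mathcal N=\omega^2\mathcal D$) is a small, welcome addition that the paper leaves implicit.
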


\begin{proof}
    %The proof of this variational principle is directly connected to that of the action principle above.
    We define the following function $i(\varepsilon)$ 
    \begin{equation}\label{eq:rayleigh_varied}
        i(\varepsilon)\equiv I[\boldsymbol\zeta+\varepsilon \, \boldsymbol\eta] \ ,
    \end{equation}
    where we choose $\boldsymbol{\zeta}$ such that it extremizes the Rayliegh quotient, i.e. $i'(0)=0$ and we once again assume that $\varepsilon$, $\zeta_X$, and $\eta_X$ are real for simplicity.
    We define the components $(\Tilde a_X,\Tilde c_X,\Tilde d_X)$ in the same way as the $(a_X,c_X,d_X)$ functions $\varepsilon$ in the Lagrangian [see Eq.~\eqref{eq:helper}], but we set $\phi_X\to\zeta_X$, $\dot{\phi}_X\to\zeta_X$, $\psi_X\to\eta_X$, and $\dot{\psi}_X\to\eta_X$, without integrating over time. 
    % \sout{remove the time dependence of all quantities} and do not integrate over time. {\ny{No. expand and fix this.}}
    %\DC{Put a $\Tilde a$, in this part.}
    Then, we rewrite Eq.~\eqref{eq:rayleigh_varied} as
    \begin{equation}
        i(\varepsilon)=\frac{\sum_X \Tilde c_X(\varepsilon)+\Tilde d_X(\varepsilon)}{\sum_X \Tilde a_X(\varepsilon)} \ .
    \end{equation}

    As before, our ultimate aim is to compute $(di/d\varepsilon)_{\varepsilon=0}=0$. 
    We first take the derivative to find
    \begin{equation}
    \begin{split}
        \frac{di}{d\varepsilon} &=\frac{\sum_X (d\Tilde c_X/d\varepsilon)+(d\Tilde d_X/d\varepsilon)}{\sum_X \Tilde a_X(\varepsilon)}
         \\
        &-\frac{\sum_X (d\Tilde a_X/d\varepsilon)}{\sum_X \Tilde a_X(\varepsilon)}\frac{\sum_X \Tilde c_X(\varepsilon)+\Tilde d_X(\varepsilon)}{\sum_X\Tilde a_X(\varepsilon)}\\
        &=\frac{\sum_X (d\Tilde c_X/\varepsilon)+(d\Tilde d_X/d\varepsilon)-i(\varepsilon) (d\Tilde a_X/d\varepsilon)}{\sum_X \Tilde a_X(\varepsilon)} \ .
    \end{split}
    \end{equation}
    Using the derivatives in Eqs.~\eqref{eq:derive1}-\eqref{eq:derive3}, and setting $\varepsilon=0$, gives us
    \begin{equation}
    \label{eq:linear-expansion-rayleigh-quotient}
        0=\sum_X\int_0^{R_2} dr\, \eta_X\left[\hat C_X\zeta_X+\hat D_X\zeta_Y-i(0)\hat A_X\zeta_X\right] \ .
    \end{equation}
    Once again as $\eta_X$ is arbitrary, the bracketed term in Eq.~\eqref{eq:linear-expansion-rayleigh-quotient} must be identically zero.

    This results in Eq.~\eqref{eq:2sl} but with $i(0)$ replacing $\omega^2$.
    We conclude that $\boldsymbol\zeta$ is a solution to Eq.~\eqref{eq:2sl} under boundary conditions Eqs.~\eqref{eq:zero_boundary_2f} and \eqref{eq:outer_boundary_2f} with eigenvalue $i(0)$. Thus, the extremal values of the Rayliegh quotient are eigenvalues.
    
    Now suppose that $\omega^2$ is an eigenvalue of the boundary value problem Eqs.~\eqref{eq:2sl}, \eqref{eq:zero_boundary_2f} and \eqref{eq:outer_boundary_2f} with corresponding eigenvector $\boldsymbol\zeta$.
    Then define $i_{\boldsymbol\zeta}(\varepsilon)$ in the same way as in Eq.~\eqref{eq:rayleigh_varied} for an arbitrary $\boldsymbol{\eta}$.
    It follows that $I[\boldsymbol{\zeta}]=i(0)=\omega^2$.
    Then, by the same mathematical procedure as above, it follows that
    \begin{equation}
        \left(\frac{di_{\boldsymbol\zeta}}{d\varepsilon}\right)_{\varepsilon=0}=\sum_X\int_0^{R_2}dr\, \eta_X\left[\hat C_X\zeta_X+\hat D_X\zeta_Y-i_{\boldsymbol{\zeta}}(0)\hat A_X\zeta_X\right]\ ,
    \end{equation}
    and since $i_{\boldsymbol{\zeta}}(0)=\omega^2$ is an eigenvalue with eigenvector $\boldsymbol\zeta$, the right-hand side is exactly zero.
    Therefore, $(di_{\boldsymbol{\zeta}}/d\varepsilon)_{\varepsilon=0}=0$, and thus, $\boldsymbol\zeta$ extremizes the Rayliegh quotient, and since $\omega^2=I[\boldsymbol\zeta]$, eigenvalues are extrema of the Rayliegh quotient.
    
\end{proof}

\subsection{Symmetric Operator}\label{ssec:operator}

% \DC{Put bold on vectors and operators on $H$.}

We next show that the eigenvalue equation for two fluids [Eq.~\eqref{eq:2sl}] is equivalent to an eigenvalue problem for a self-adjoint operator in a well-chosen Hilbert space.
From this, we conclude that all modes are real, and that the associated eigenvectors form a complete basis. 

To show this, we work with almost the same Hilbert space used to prove completeness for single-fluid radial perturbations.
% (whose inner product is given by Eq.~\eqref{eq:sturm-liouville-inner-product-space}). 
More precisely, for each fluid perturbation we define a Hilbert space $H_X$, which is the space of square integrable functions weighted by $A_X$ on the interval $(0,R_X)$. 
% {\ny{Hmm...this is confusing now. Before you defined a different inner product that was *not* weighted by Ax. Now you define one that is weighted by this quantity, but you use almost the same symbol for both (in one case, the inner product is separated by a comma and in another by a bar. This is not great. Please add a sentence when you define the second inner product to comment that this is *not* the same inner product as the previous one. Also, please consider decorating the second inner product with maybe a sub H (for Hilbert?) to make it clear that it's different.}}.
The total Hilbert space is the direct sum of each individual Hilbert space: $H=H_1\oplus H_2$,
% Explicitly, for a vector $\boldsymbol\zeta\in H$, $\boldsymbol\zeta=(\zeta_1,\zeta_2)$ {\ny{Here, and in fact, before too, you, again, have a clash of notation. Before (in Sec. 2), you used zeta to mean the displacement for a single fluid (a scalar). Now, however, you use zeta to mean a vector. You cannot use the same symbol for a scalar and a vector. Please decorate the vector somehow (with an arrow?) here and throughout.}}, 
where we have that $\zeta_1\in H_1$ and $\zeta_2\in H_2$. The inner product between the two vectors $\boldsymbol\zeta=(\zeta_1,\zeta_2)$ and $\boldsymbol\eta=(\eta_1,\eta_2)$ is then
\begin{equation}
    \label{eq:sturm-liouville-inner-product-space-two-fluid} 
    \braket{\boldsymbol\eta}{\boldsymbol\zeta}=\int_0^{R_1}dr\, \bar\eta_1\hat A_1\zeta_1+\int_0^{R_2}dr\, \bar\eta_2\hat A_2\zeta_2 \ .
\end{equation}
For notational simplicity, we write 
Eq.~\eqref{eq:sturm-liouville-inner-product-space-two-fluid} as follows
\begin{equation}
\label{eq:simplified-notation-two-fluid-hilbert-space}
\braket{\boldsymbol\eta}{\boldsymbol\zeta}=\int dr\, \begin{pmatrix}\bar\eta_1&\bar\eta_2\end{pmatrix}\begin{pmatrix}\hat A_1& 0\\0&\hat A_2\end{pmatrix}\begin{pmatrix}\zeta_1\\\zeta_2\end{pmatrix} \ ,
\end{equation}
where the linear algebra operations above are to be done with a flat internal space metric.
In Eq.~\eqref{eq:simplified-notation-two-fluid-hilbert-space}, the radial integral is taken in $(0,R_1)$ or in $(0,R_2)$ depending on the context. Note that this inner product is \textit{different} from that defined in Eq.~\eqref{eq:unweighed-inner-product}, which was used for single fluids and had no weights. 

Just as we did for a single fluid, we define a potential energy operator for the Hilbert space
\begin{equation}\label{eq:potoperator}
    \mathcal V =\begin{pmatrix}\hat A_1^{-1}\hat C_1&\hat A_1^{-1}\hat D_1\\\hat A_2^{-1}\hat D_2&\hat A_2^{-1}\hat C_2\end{pmatrix} \ .
\end{equation}
This definition, along with the inner product of Eq.~\eqref{eq:simplified-notation-two-fluid-hilbert-space}, allows us to write the potential energy of radial perturbations as
\begin{equation}
    V[\boldsymbol\zeta]=\frac12\braket{\boldsymbol\zeta}{\mathcal V\boldsymbol\zeta} \ .
\end{equation}
We also note that the operator $\mathcal V$ is symmetric on $H$, namely
\begin{equation}
\begin{split}
    \braket{\mathcal V\boldsymbol\eta}{\boldsymbol\zeta}=&\langle \hat A_1^{-1}\hat C_1\eta_1,\hat A_1\zeta_1\rangle+\langle \hat A_2^{-1}\hat C_2\eta_2,\hat A_2\zeta_2\rangle\\
    &+\langle \hat A_1^{-1}\hat D_1\eta_2,\hat A_1\zeta_1\rangle+\langle \hat A_2^{-1}\hat D_2\eta_1,\hat A_2\zeta_2\rangle\\
    =&\langle \eta_1,\hat C_1\zeta_1\rangle+\langle\eta_2,\hat C_2\zeta_2\rangle+\langle \eta_2,\hat D_2\zeta_1\rangle+\langle \eta_1,\hat D_1\zeta_2\rangle\\
    =&\braket{\boldsymbol\eta}{\mathcal V\boldsymbol\zeta} \ ,
\end{split}
\end{equation}
which follows from $\hat A_X$ and $\hat C_X$ being symmetric operators, and from the relation between the $\hat D_X$ operators in Eq.~\eqref{eq:sym}
%\JR{which follows from the symmetry of the $C_i$ and $D_i$ operators.}
The Rayleigh quotient [Eq.~\eqref{eq:2Rayleigh}] is then
\begin{equation}\label{eq:2action}
    I[\boldsymbol\zeta]=\frac{\braket{\boldsymbol\zeta}{\mathcal V\boldsymbol\zeta}}{\braket{\boldsymbol\zeta}} \ .
\end{equation}
Finally, we write the equations of motion and the mode equations [Eqs.~\eqref{eq:op} and \eqref{eq:2sl}] in this notation as
\begin{gather}
    \mathbf {\hat A}(\partial_t^2+\mathcal V)\boldsymbol\zeta=0 \ , \\
    \mathbf {\hat A}(\mathcal V-\omega^2\mathbbm1)\boldsymbol\zeta=0 \ ,
\end{gather}
where $\mathbf {\hat A}$ is the diagonal matrix $\operatorname{diag}(\hat A_1,\hat A_2)$ that defines the inner product in Eq.~\eqref{eq:simplified-notation-two-fluid-hilbert-space}.
% {\ny{Again, the second equation is for the radial part of zeta only, so please rewrite.}}
% \DC{Add comment for physicists discussing the self-adjoint/symmetric stuff. Or add a ``to be rigorous'' to say that the following paragaph is for mathematicians.}

While we have shown that $\mathcal{V}$ is symmetric, we have not yet shown that it is self-adjoint, which is a crucial property to prove the completeness of modes. Recall that an operator is symmetric provided that $\braket{\mathcal V\boldsymbol\eta}{\boldsymbol\zeta}=\braket{\boldsymbol\eta}{\mathcal V\boldsymbol\zeta}$ for all vectors $\boldsymbol\eta,\boldsymbol\zeta$ in the domain of $\mathcal V$, but for the operator to also be self-adjoint, one must additionally have that the domain of the adjoint of $\mathcal V$ be the same as the domain of $\mathcal V$.
For example, the operator $id/dr$ in the space of square integrable functions that are zero at both boundaries is symmetric. 
However, the domain of its adjoint has no conditions at the boundaries, and thus, the adjoint has a larger domain, meaning that $id/dr$ is not self-adjoint in such a space.
For further reading, we refer the audience to Sec. 1(c) of \cite{DysonSchutz1979}, which follows Chapter $8$, Section 119 of \cite{RieszNagy1956}, from where we obtained the exampled outlined in the previous sentence.
Fortunately, we can show that $\mathcal{V}$ is practically self-adjoint by constructing a canonical self-adjoint extension to the operator, via the Stone-Friedrichs theorem, similar to what is done in \cite{DysonSchutz1979}\footnote{It is possible that $\mathcal V$ itself is self-adjoint, or that at the very least its closure is self-adjoint, but we will not consider here an analysis of the domains of the operators.}.
The Stone-Friedrichs theorem states that every semi-bounded symmetric operator can be extended to a semi-bounded self-adjoint operator with the same bound \cite{RieszNagy1956}.
Thus, by proving that $\mathcal V$ is bounded from below, we can use the self-adjoint Friedrich extension of $\mathcal{V}$ to prove the completeness of eigenmodes.
%Moreover our argument also shows that there is minimal eigenvalue, i.e. a fundamental mode, which is a necessary condition for stability.

%\DC{Repeated stuff multiple times}
%We show that there is a minimal eigenvalue (i.e. a fundamental mode), which then lets us construct the Friedrichs extension which is a canonical self-adjoint extension of our operators \cite{DysonSchutz1979}.
%Therefore, by showing that there is a lower bound, we show there is a self-adjoint extension and thus the eigenvalues are real, with orthogonal eigenvectors and that modes are complete.

% {\ny{I feel we are being a bit loose goosy with the meaning of the word self-adjoint. I'd like to hear exactly what you mean by this with regards to the paragraph above and below, so that we can see about adding a sentence to remind the reader of the exact definition. I added something already above Eq. (84), but that's for the operator AX.}}

Before continuing, we provide a brief overview of our argument. %In particular, we use the Stone-Friedrichs theorem in the following way.
%For the dense symmetric operator $\mathcal V$, 
If the expression in Eq.~\eqref{eq:2action} is bounded from below, then there exists a self-adjoint extension with the same lower bound \cite{RieszNagy1956}.
We can then use that extension for any statement that requires a self-adjoint operator.
In particular, this shows that the equations of motion [Eq.~\eqref{eq:op}] are self-adjoint, implying that the mode solutions are complete, and the eigenvalues are real. %stability analysis based on the modes holds for all configurations.
Thus, we only need to analyze the mode solutions to establish radial stability.

\subsection{Lower bound}\label{ssec:low}

We here calculate an explicit lower bound to the Rayleigh quotient $I$ of Eq.~\eqref{eq:2Rayleigh},
in an analogous way to what we did for a single fluid in Sec.~\ref{ssec:bound}.
For an unbounded self-adjoint operator, it is possible for the spectrum to include a continuous part.
Within physics, this is seen with the scattering states of the hydrogen atom, which must be integrated over.
In principle, the continuous spectrum can have values lower than any eigenvalue, but not below the lower bound of the Rayleigh quotient.
Importantly, stability is determined by the smallest element of the spectrum of (the Friedrich extension of) $\mathcal V$, be it an eigenvalue or a continuous mode.
However, calculating the continuous modes is more difficult than calculating the eigenvalues, and, hence, we would like our statements to be in terms of the latter.
Moreover, we do not expect there to be a continuous spectrum since a continuous part does not exist for radial modes of single-fluid stars.
Therefore, we propose (and will later prove) the following stability theorem in terms of only eigenvalues, by assuming that the smallest value of the continuous spectrum is greater than the smallest eigenvalue (once the latter is shown to exist).

% \DC{I think it is confusing to say the whole ``smaller than the smallest value of the continuous spectrum'', especially in the statement of the theorem.
% We could break the theorem into two, were the first part is that $I$ is bounded from below (which doesn't change at all), and then add a corollary with the further assumption regarding the continuum spectrum.}

\begin{theorem}\label{th:fundamental}
    Let $c_{s,1}^2$ and $c_{s,2}^2$ be the (squares of the) speeds of sound for each of the fluids inside a two-component star.
    Suppose that both speeds of sound are differentiable with respect to the radial coordinate on the interval $(0,R_1)$, and non-zero inside the inner boundary of the star [i.e.~$r\in(0,R_1)$ with the assumption $R_1\leq R_2$].
    Then, there exists a fundamental mode with frequency $\omega_0^2$ that solves Eq.~\eqref{eq:2sl}, such that for all other eigenfrequencies $\omega_n^2$ with solutions to Eq.~\eqref{eq:2sl}, satisfy $\omega_0^2<\omega_n^2$.
    Furthermore, let us assume that this fundamental mode is lesser than the smallest value of the continuous spectrum of $\mathcal V$, if the latter exists.
    An equilibrium configuration will then be stable if and only if its corresponding fundamental mode $\omega_0^2>0$.
\end{theorem}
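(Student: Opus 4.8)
\emph{Strategy.} The plan is to reduce the entire statement to one analytic fact: that the Rayleigh quotient $I[\boldsymbol\zeta]$ of Eq.~\eqref{eq:2Rayleigh} is bounded from below. Once this is in hand, everything else follows from machinery already assembled in Secs.~\ref{ssec:variation} and \ref{ssec:operator}. By Theorem \ref{th:vari} the eigenvalues are the extrema of $I$, so its infimum is the smallest eigenvalue; the Friedrich-extension argument of Sec.~\ref{ssec:operator} then upgrades the symmetric operator $\mathcal V$ to a self-adjoint operator with the \emph{same} lower bound, which simultaneously delivers reality and completeness of the modes and places every point of the spectrum at or above $\inf I$. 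Under the standing hypothesis that the continuous spectrum (if any) lies above this infimum, the bottom of the spectrum is a genuine isolated eigenvalue, which we name $\omega_0^2$ and which satisfies $\omega_0^2 < \omega_n^2$ for every other eigenvalue. Since a mode $e^{i\omega t}$ is non-growing precisely when $\omega^2\geq 0$, and $\omega_0^2=\min_n\omega_n^2$, the configuration is stable if and only if $\omega_0^2>0$, giving the claimed iff.

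\emph{Main computation.} The crux is therefore the lower bound, which I would carry out in direct parallel with Sec.~\ref{ssec:bound}. First I would integrate by parts in $2V[\boldsymbol\zeta]=\braket{\boldsymbol\zeta}{\mathcal V\boldsymbol\zeta}$; the boundary contributions vanish by the conditions in Eqs.~\eqref{eq:zero_boundary_2f} and \eqref{eq:outer_boundary_2f}, exactly as in the proof of Theorem \ref{th:symplectic} (regularity $\zeta_X\sim r^3$ at the origin and the vanishing of $P_X,f_X$ at the relevant outer radius). This yields the positive-definite Dirichlet terms $\sum_X\int_0^{R_X}P_X\,|\partial_r\zeta_X|^2\,dr\geq 0$, together with non-derivative terms carrying coefficients $Q_X$, $R$, $S$, and cross-derivative coupling terms inherited from $\hat D_X$ in Eq.~\eqref{eq:dx} of the schematic form $f_X\,\partial_r\bar\zeta_X\,\zeta_Y$ and $f_Y\,\bar\zeta_X\,\partial_r\zeta_Y$.

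\emph{The obstacle.} The hard part—and precisely where the hypothesis $c_{s,X}^2\neq 0$ on $(0,R_1)$ is needed—is controlling these cross-derivative coupling terms, since a crude triangle-inequality estimate would leave uncontrolled first derivatives of $\zeta$. I would tame them with Young's inequality, $|f_X\,\partial_r\bar\zeta_X\,\zeta_Y|\leq \tfrac12 P_X|\partial_r\zeta_X|^2+\tfrac12\,(f_X^2/P_X)\,|\zeta_Y|^2$, absorbing the derivative piece into the positive Dirichlet term and leaving only a non-derivative remainder with coefficient $f_X^2/P_X$. Because $P_X=e^{\Lambda-3\Phi}\gamma_X p_X/r^2$ with $\gamma_X p_X=(\epsilon_X+p_X)c_{s,X}^2$ by Eq.~\eqref{eq:relc}, the non-vanishing sound speed guarantees $P_X>0$ throughout $(0,R_1)$, so the quotient $f_X^2/P_X$ stays finite; were $c_{s,X}^2$ to vanish anywhere, this coefficient would blow up and the bound would fail. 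The genuinely off-diagonal $S$-terms are split analogously by $|\bar\zeta_X\zeta_Y|\leq\tfrac12(|\zeta_X|^2+|\zeta_Y|^2)$.

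\emph{Closing the bound.} Having reduced the potential energy to a sum of non-derivative integrands, I would factor each as (coefficient)$\times A_X|\zeta_X|^2$ and pull out the supremum of the coefficient over $(0,R_X)$, exactly as in Eq.~\eqref{eq:onebound}. Each supremum is finite by the single-fluid arguments reused verbatim: $\epsilon_X+p_X$ attains its maximum at the center, $e^{2\Lambda}=(1-2m/r)^{-1}$ is bounded below the Buchdahl limit, and $\Phi'\sim r$ near the origin cancels the apparent $1/r^2$ singularities in $Q_X$, $R$, and the leftover $f_X^2/P_X$ remainder. Summing these contributions and using $\sum_X\int_0^{R_X}A_X|\zeta_X|^2\,dr=\braket{\boldsymbol\zeta}$ gives $I[\boldsymbol\zeta]\geq -K$ for an explicit constant $K$ assembled from the suprema. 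This finiteness is exactly the input required by the Friedrich-extension step, so the fundamental mode exists as the minimizer of $I$, and the stability criterion follows as outlined above.
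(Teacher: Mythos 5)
Your overall architecture coincides with the paper's: reduce everything to a lower bound on the Rayleigh quotient of Eq.~\eqref{eq:2Rayleigh}, invoke Theorem~\ref{th:vari} to identify eigenvalues with extrema of $I$, use the Friedrichs-extension argument of Sec.~\ref{ssec:operator} for reality and completeness, and conclude stability if and only if $\omega_0^2>0$. Where you genuinely depart from the paper is in the treatment of the cross-derivative couplings: the paper isolates them (together with the $-\frac{d}{dr}f_X$ pieces of $Q_X$) into the term $a_4$ of Eq.~\eqref{eq:a_4} and bounds it in Appendix~\ref{app:lastbound} through a delicate case analysis involving the mean and harmonic-mean sound speeds. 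Your completing-the-square (Young) step is cleaner for that piece: since $P_X=e^{\Lambda-3\Phi}\gamma_Xp_X/r^2$ and $f_X$ is given by Eq.~\eqref{eq:fx}, the remainder is
\begin{equation}
\frac{f_X^2}{P_X}=16\pi^2e^{5\Lambda-3\Phi}\,\gamma_Xp_X\,(\epsilon_Y+p_Y)^2 \ ,
\end{equation}
which is not only bounded but is precisely one of the positive-definite terms of $R(r)$ in Eq.~\eqref{eq:Rb}, so it gets cancelled exactly rather than merely estimated.

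However, there is a genuine gap, and it sits exactly where the theorem's hypotheses are supposed to do their work. First, your stated reason for needing $c_{s,X}^2\neq0$ is wrong: by the formula above, $f_X^2/P_X\propto\gamma_Xp_X=(\epsilon_X+p_X)c_{s,X}^2$, so this coefficient \emph{vanishes} (rather than blows up) as $c_{s,X}^2\to0$; the Young step by itself needs no sound-speed hypothesis at all. Second, the terms that actually require the hypotheses are the $-\frac{d}{dr}f_X$ contributions hidden inside $Q_X$ [last term of Eq.~\eqref{eq:2Qb}], which you dismiss as ``single-fluid arguments reused verbatim.'' That cannot work, because these terms have no single-fluid analogue. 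Differentiating Eq.~\eqref{eq:fx} [cf.~Eq.~\eqref{eq:dfx}] produces (i) $\frac{d}{dr}c_{s,X}^2$, which is why differentiability of the sound speeds must be assumed and why the constants $\alpha_X,\beta_X$ of Eqs.~\eqref{eq:alphax} and \eqref{eq:betax} appear in the paper's bound, and (ii) the ratio $c_{s,X}^2/c_{s,Y}^2$, which enters through $\frac{d}{dr}(\epsilon_Y+p_Y)=(\epsilon_Y+p_Y)\Phi'\left(1+1/c_{s,Y}^2\right)$ and is why non-vanishing of \emph{both} sound speeds on $(0,R_1)$ must be assumed [cf.~the $c_{s,1}^2/c_{s,2}^2+c_{s,2}^2/c_{s,1}^2$ term in Eq.~\eqref{eq:a4}]. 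Until you bound these $f_X'$ terms (essentially the paper's estimate for $a_4^{1,2}$ in Appendix~\ref{app:lastbound}), your lower bound on $I[\boldsymbol\zeta]$ is not established; as written, the proof never uses the theorem's assumptions in a place where they are actually needed.
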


\begin{proof}
Suppose we have an equilibrium configuration as described in Sec.~\ref{ssec:back2}.
By the background equations [Eqs.~\eqref{eq:px} and \eqref{eq:2phi2}], both the energy density and pressure for both fluids is maximal at the center of the star.
An equilibrium configuration is determined by specifying an equation of state, along with the central energy densities $\epsilon_X^c$ for each fluid with $X \in (1,2)$.
% choosing an equilibrium configuration we are choosing both an EoS and a value to $\rho_X^{\rm max}=\epsilon_X^c+p_X(\epsilon_X^c)$, which is a constant of the configuration.
Without loss of generality, we assume that $R_2\geq R_1$.

We want to prove that $I[\boldsymbol\zeta]$, defined in Eq.~\eqref{eq:2Rayleigh}, is bounded from below by some positive constant for all vectors $\boldsymbol\zeta=(\zeta_1,\zeta_2)$ in the domain of $\mathcal V$, so we start by recasting $I$ in a more explicit form.
We restrict the proof to the case where $\boldsymbol\zeta$ is real valued\footnote{We note that while we could in general assume that $\zeta$ is complex, the algebra is simplified in the interaction terms by assuming the real case.}, and we assume $\braket{\boldsymbol\zeta}=1$ without loss of generality.
We then define
\begin{align}\label{eq:intex}
    \mathcal I_X
    &=
    \int_0^{R_X}dr\, A_X(\zeta_X)^2
    =
    \int_0^{R_X}e^{\Lambda-3\Phi}\frac{(\epsilon_X+p_X)}{r^2}(\zeta_X)^2
    ,
\end{align}
so that setting $\braket{\boldsymbol\zeta}=1$ is equivalent to setting $\mathcal I_1+\mathcal I_2=1$.
Expanding all terms in Eq.~\eqref{eq:2Rayleigh}, and using the definition for the operators in Eq.~\eqref{eq:dx}, we then have
\begin{align}\label{eq:tobound}
    I[\boldsymbol\zeta]
    =\int dr\, \Bigg[
            &P_1\left(\frac{d}{dr}\zeta_1\right)^2+(Q_1+R)(\zeta_1)^2+P_2\left(\frac{d}{dr}\zeta_2\right)^2
            \nonumber\\
            &+(Q_2+R)(\zeta_2)^2+2S\zeta_1\zeta_2
            \nonumber\\
            &+2f_1\left(\frac{d}{dr}\zeta_1\right)\zeta_2+2f_2\zeta_1\left(\frac{d}{dr}\zeta_2\right)
        \Bigg] \ .
\end{align}
We have not explicitly written out the bounds of integration for notational simplicity, but the choice of bounds is easy to infer from the integrands.
Concretely, terms that only contain powers of $\zeta_1$, such as $(\zeta_1)^2$, are integrated inside $(0,R_1)$, while terms that contain only powers of $\zeta_2$ are integrated inside $(0,R_2)$.
%In other wor
Terms that multiply $\zeta_1$ and $\zeta_2$ are integrated over $(0,R_1)$, as $\zeta_1=0$ when $r>R_1$.

We can decompose $I[\boldsymbol\zeta]$ further into terms that are not explicitly positive definite and terms that are. By using the definitions of the functions $[P_X(r),Q_X(r),f_X(r),R(r),S(r)]$ in Eqs.~\eqref{eq:2funcs} inside of Eq.~\eqref{eq:intex}, we then find
\begin{equation}\label{eq:sumall}
    I[\boldsymbol\zeta] = a_1+a_2+a_3+a_4 + \delta a \ ,
\end{equation}
where we have defined
%\JR{then define the $a_i$ here, right after you introduced them}.
\begin{widetext}
\begin{subequations}
\begin{align}
    \label{eq:a1}
    a_1
    \equiv&
    \int dr\, \frac{e^{\Lambda-3\Phi}}{r^2}\left[(\Phi')^2-r^2\frac{d}{dr}\left(\frac{\Phi'}{r^2}\right)\right]\left[(\epsilon_1+p_1)(\zeta_1)^2+(\epsilon_2+p_2)(\zeta_2)^2\right]
    ,\\
    \label{eq:a2}
    a_2
    \equiv&
   \int dr\, \frac{4\pi}{r}e^{3\Lambda-3\Phi}\left(\Phi'-\frac1r\right)\left[(\epsilon_1+p_1)^2(\zeta_1)^2+(\epsilon_2+p_2)^2(\zeta_2)^2\right] 
    ,\\
    a_3
    \equiv&
    \int dr\, 16\pi^2e^{5\Lambda-3\Phi}\left[\gamma_1p_1(\epsilon_2+p_2)^2+\gamma_2p_2(\epsilon_1+p_1)^2\right](\zeta_1-\zeta_2)^2 \nonumber\\
    &\quad +4\pi \frac{e^{3\Lambda-3\Phi}}{r}(\epsilon_1+p_1)(\epsilon_2+p_2)\left\{\left[(\zeta_1)^2+(\zeta_2)^2\right]\Phi'-2\zeta_1\zeta_2\frac1r\right\}
    , \label{eq:a3}\\
    a_4
    \equiv&
    2\int dr\, (\zeta_2-\zeta_1)\left(f_1\frac{d}{dr}\zeta_1-f_2\frac{d}{dr}\zeta_2\right)
    \label{eq:a_4}.
\end{align}
\end{subequations}
\end{widetext}
%\JR{I suggest rereading this section and work in streamlining and clarifying your argument.}
% Before bounding each term, we first briefly discuss how we obtained each expression $a_i$.
Looking at Eq.~\eqref{eq:tobound}, the $a_1$ term comes from the $Q_X$ terms, specifically those from the first line in Eq.~\eqref{eq:2Qb}.
The $a_2$ term also comes from the $Q_X$ terms, but from the second line of Eq.~\eqref{eq:2Qb} without the $\frac{d}{dr}f_X$ term in such line.
The term $a_3$ comes from combining the terms with $S$ and $R$ in Eq.~\eqref{eq:tobound}, according to Eqs.~\eqref{eq:Rb} and \eqref{eq:Sb}.
Finally, the $a_4$ term comes from the two terms in the last line of Eq.~\eqref{eq:tobound}, added to the component $\frac{d}{dr}f_X$ in the definition $Q_X$ [see Eq.~\eqref{eq:2Qb}], and integrating by parts to simplify.

% {\ny{I don't like how this was presented. I would *much* rather you present all of the terms that take you from (114) to (115). This is a proof, and as such, we need to be rigorous. Not including these terms can be seen as "sweeping things under the rug." Since you've calculated all the terms, it should be no problem to write down the positive definite ones that you hadn't written down (and which I encapsulate inside delta a. Please do so below}}
% \DC{The only positive definite term left out is the $P(\zeta')^2$ term, added below.}

The $\delta a$ term in Eq.~\eqref{eq:sumall} stands for the terms that are obviously positive definite. More concretely, these terms are given by
\begin{align}
    \delta a &= \int dr\, P_1\left(\frac{d}{dr}\zeta_1\right)^2+P_2\left(\frac{d}{dr}\zeta_2\right)^2\, .
\end{align}
We have not included these terms in the definitions of $a_i$ because they are not needed to prove the theorem we are concerned with here. More concretely, the functions $P_X$ are positive definite, and hence, their integral multiplying $(\frac{d}{dr}\zeta_X)^2$ is also positive definite. Therefore, if we can show that $a_1$, $a_2+a_3$, and $a_4$ are all bounded from below, then so will be $I[\boldsymbol\zeta]$ regardless of these other positive-definite terms, which only change the value of the minimum, i.e.~if $a_1 + a_2 + a_3 + a_4 \geq a$ with $a \in \mathbb{R}$, then $I[\boldsymbol\zeta] \geq a$ because we can write $\delta a = (b)^2$ for some $b \in \mathbb{R}$.
% \DC{I should express and stress that the bound we find is \textit{not} strictly positive. In fact, if the bound was strictly positive, it would mean only stable configurations can exist. I removed the part where $a$ must be nonnegative.}
%We have excluded terms like this from the definitions of $a_i$, since $I[\zeta]$ is bounded from below provided $a_1$, $a_2+a_3$, and $a_4$ are all bounded from below, whether we include such other positive-definite terms or not. 

% but Eq.~\eqref{eq:sumall} holds with or without them.

%\JR{Why put this beofre the split into $a_i$?}
%We recall that the functions $P_X$ are positive definite, which implies that
%\begin{equation}
%    \int dr\, P_X(\zeta_X')^2\geq 0 \ .
%\end{equation}
%
Our goal is to place a lower bound on $I\left[\boldsymbol\zeta\right]$, and we will do so by showing that each of the $a_1$, $a_2+a_3$, and $a_4$ terms is bounded from below by a constant, which happens to multiply $\mathcal I_1+\mathcal I_2$.
%
% which, as we shall see below, will actually take the form
%
% We recall that the functions $P_X$ are positive definite, and hence, their integral multiplying $(\frac{d}{dr}\zeta_X)^2$ is positive definite; we have thus excluded them from the definitions of $a_i$, but Eq.~\eqref{eq:sumall} holds with or without them.
% Our proof consists of showing that $a_1$, $a_2+a_3$, and $a_4$ are all bounded from below by a constant multiplying $\mathcal I_1+\mathcal I_2$.
%\JR{Why not define $a_2\leftarrow a_2+a_3$ and $a_3\leftarrow a_4$ then, and only have three terms?}
%
% We now turn to calculating a bound for each of the $a_i$.
To bound the individual terms in any of the $a_i$, we go through a similar series of steps as we did for a single-fluid star
(see Sec~\ref{ssec:bound}, and especially, Eqs.~\eqref{eq:onebound}-\eqref{eq:bound} and the surrounding text). 
We first consider the bound on $a_1$.
First, looking at Eq.~\eqref{eq:a1}, the term proportional to $(\Phi')^2$ is positive definite, and thus, it is bounded from below by zero, leading to
\begin{equation}
\begin{split}
    a_1
    \geq
    -\int dr\, &\frac{e^{\Lambda-3\Phi}}{r^2}\left[r^2\frac{d}{dr}\left(\frac{\Phi'}{r^2}\right)\right]\\
    &\times\left[(\epsilon_1+p_1)(\zeta_1)^2+(\epsilon_2+p_2)(\zeta_2)^2\right]
    .
\end{split}
\end{equation}
We now focus on the term proportional to $(\epsilon_1+p_1)(\zeta_1)^2$, from which we obtain
\begin{equation}\label{eq:suprem}
\begin{split}
    -&\int dr\, \frac{e^{\Lambda-3\Phi}}{r^2}\left[r^2\frac{d}{dr}\left(\frac{\Phi'}{r^2}\right)\right](\epsilon_1+p_1)(\zeta_1)^2\\
    \geq-& \sup\left[r^2\frac{d}{dr}\left(\frac{\Phi'}{r^2}\right)\right]\int dr\, e^{\Lambda-3\Phi}\frac{(\epsilon_1+p_1)}{r^2}(\zeta_1)^2 \ ,
\end{split}
\end{equation}
using the same arguments as done below Eq.~\eqref{eq:onebound}. 
% {\ny{I don't love the overall minus sign. Is d/dr (Phi'/r2) negative? If so, I'd move the minus sign to inside the parenthesis inside the sup.}}
% \DC{It is neither positive definite nor negative definite, the math still works out, but it will be consistent with the rest of the terms. }
% \sout{where we pulled out a function that is positive definite} {\ny{I think that's what you did, but why didn't you pull out $\zeta_1^2$ too, since this is obviously also positive definite. I think I'm confused about this sup pull outs...}}
% \DC{The idea of the pullouts is to get the integrals $\mathcal I_X$, so we pull out everything \textit{but} the $A_1(\zeta_1)^2$ which is what gives us the $\mathcal I_X$.}
The leftover integral in Eq.~\eqref{eq:suprem} is just $\mathcal I_1$ from Eq.~\eqref{eq:intex}.
Doing the same for the term proportional to  $(\epsilon_2+p_2)(\zeta_2)^2$ in Eq.~\eqref{eq:a1} and adding the results together, we have that
\begin{equation}\label{eq:a1_result}
    a_1\geq -\sup\left[r^2\frac{d}{dr}\left(\frac{\Phi'}{r^2}\right)\right](\mathcal I_1+\mathcal I_2) \ .
\end{equation}
We note that the supremum in Eq.~\eqref{eq:suprem} is on the interval $(0,R_1)$, and the equivalent expression for the second fluid will have the supremum defined on $(0,R_2)$.
Therefore, to obtain the factorization in Eq.~\eqref{eq:a1_result} we must take the larger of the two supremums for the inequality to work 
--and since $R_2\geq R_1$, the supremum in Eq.~\eqref{eq:a1_result} is on $(0,R_2)$.
All the supremums in the expressions for $a_1, a_2$, and $a_3$ are also on the interval $(0,R_2)$.

We next consider the bound on $a_2$.
We follow the same steps as for $a_1$ by taking the supremum of the terms not inside the integrand of $\mathcal I_X$ and factoring them out. More concretely, the term proportional to $(\zeta_1)^2$ in Eq.~\eqref{eq:a2} can be factorized as
\begin{align}
    &- \int dr\, \frac{e^{\Lambda-3\Phi}}{r^2} \left[\left(4 \pi e^{2\Lambda}\right) \left(1-r \Phi'\right) (\epsilon_1+p_1) \right]
    \nonumber \\ 
    &\qquad \qquad \qquad \times\left[(\epsilon_1+p_1)(\zeta_1)^2\right]
    \nonumber \\
    &\geq - {\rm sup} \left[\left(4 \pi r e^{2\Lambda}\right) \left(\frac{1}{r}- \Phi'\right) (\epsilon_1+p_1) \right] {\cal{I}}_1
    \nonumber \\
    &\geq - {\rm sup} \left[\left(4 \pi r e^{2\Lambda}\right) \left(\frac{1}{r}- \Phi'\right)  \right] \rho_1^{\rm max} {\cal{I}}_1 \,,
\end{align}
where the second inequality comes from the fact that $-{\rm sup}(a \, b) \geq -{\rm sup}(a) \, {\rm sup}(b)$ for all $(a,b) \in \mathbb{R}_{>0}$, and where $\rho_X^{\rm max}\equiv(\epsilon_X^c+p_X^c)$ is a constant for any specified equilibrium configuration.
As a result, we obtain the following bound (equivalent to Eq.~\eqref{eq:onebound} for a single fluid)
\begin{equation}\label{eq:bounda2}
    a_2\geq -\sup\left[4\pi re^{2\Lambda}\left(\frac1r - \Phi'\right)\right](\rho_1^{\rm max}\mathcal I_1+\rho_2^{\rm max}\mathcal I_2) \ .
\end{equation}
% {\ny{I don't love this because it's not explicit enough (you are letting the reader fill in too many steps). First, is the Phi' - 1/r positive definite? Second, the integrand you have left is *not* the same as what's on the RHS of (119), because here you have an extra factor of (e + p). What you then seem to do is say that this sum is also positive definite and largest at rho max, which is why you can factor it out and pull it out of the integral. You need to spell these steps out. Also, and while we are at it, rho max is a *TERRIBLE* symbol to use, because in nuclear physics, rho is usually reserved for the baryon (mass) density. Pick a different symbol please.}}
% \DC{$1/r-\Phi'$ is positive definite (I had forgotten a sign). I'm leaving the $\rho^{max}$ for now.}
This expression for $a_2$ is not yet directly proportional to $\mathcal I_1+\mathcal I_2$, but we will show that we can combine it with the bound on $a_3$ to obtain an expression that is.

Let us then move on to bounding $a_3$.
We start by noticing that the first line of Eq.~\eqref{eq:a3} is positive definite and thus bounded from below by zero.
Next, we note that $-2\zeta_1\zeta_2=(\zeta_1-\zeta_2)^2-[(\zeta_1)^2+(\zeta_2)^2]$ so that we obtain
\begin{equation}
\begin{split}
    &\int dr\, 4\pi \frac{e^{3\Lambda-3\Phi}}{r^2}(\epsilon_1+p_1)(\epsilon_2+p_2)(-2\zeta_1\zeta_2)\\
    =&\int dr\, 4\pi \frac{e^{3\Lambda-3\Phi}}{r^2}(\epsilon_1+p_1)(\epsilon_2+p_2)\\
    &\qquad \times\left\{(\zeta_1-\zeta_2)^2-\left[(\zeta_1)^2+(\zeta_2)^2\right]\right\} \ .
\end{split}
\end{equation}
Once again,  term in the $(\zeta_1-\zeta_1)^2$ term, and thus, this term is bounded from below by zero.
After carrying out this substitution in Eq.~\eqref{eq:a3} and removing the terms bounded from below by zero, we obtain the following expression
\begin{align}
\label{eq:a3_sec}
    a_3
    &\geq \int dr\, 4\pi \frac{e^{3\Lambda-3\Phi}}{r}(\epsilon_1+p_1)(\epsilon_2+p_2)
    \nonumber \\
    &\times \left[(\zeta_1)^2+(\zeta_2)^2\right]\left(\Phi'-\frac1r\right) \ .
\end{align}
This result is similar to the expression for $a_2$ in Eq.~\eqref{eq:a2}, but with $(\epsilon_1+p_1)^2(\zeta_1)^2\to (\epsilon_1+p_1)(\epsilon_2+p_2)(\zeta_1)^2$, and the same under index interchange $1 \leftrightarrow 2$. 
%for $\zeta_2$ {\ny{I don't understand this clause}}.
%\DC{What I meant to say is that they are the same except that $(\epsilon_1+p_1)^2(\zeta_1)^2\to (\epsilon_1+p_1)(\epsilon_2+p_2)(\zeta_1)^2$ and $(\epsilon_2+p_2)^2(\zeta_2)^2\to (\epsilon_1+p_1)(\epsilon_2+p_2)(\zeta_2)^2$ but wanted to avoid writing many long expressions in the body.}
Therefore, we obtain a similar bound as before, namely
\begin{equation}
    a_3\geq -\sup\left[4\pi re^{2\Lambda}\left(\frac1r-\Phi'\right)\right]\left(\rho_2^{\rm max}\mathcal I_1+\rho_1^{\rm max}\mathcal I_2\right) \ .
\end{equation}
%{\ny{No, something is wrong above. Should the last I1 be I2?}}\DC{Yes, fixed.}
This bound for $a_3$ is not directly proportional to $\mathcal I_1+\mathcal I_2$ either, but when we add it with the bound for $a_2$ in Eq.~\eqref{eq:bounda2}
the result is directly proportional to $\mathcal I_1+\mathcal I_2$. 
% {\ny{Wait, what happened to the first term in the second line of (116c)? You said the first line is positive definite, but you didn't say anything about this term.}}
% \DC{The first term of the second line is still there, it is the $\Phi'$ term, the $-2\zeta_1\zeta_2/r$ becomes the $1/r$ term.}

We now look at the bound we have derived so far for the sum of $a_1, a_2$, and $a_3$, namely
\begin{equation}\label{eq:a123inequality}
\begin{split}
    a_1+a_2+a_3\geq &-\sup_{r\in(0,R_2)}\left[r^2\frac{d}{dr}\left(\frac{\Phi'}{r^2}\right)\right]\\
    &-4\pi(\rho_1^{\rm max}+\rho_2^{\rm max})\sup_{r\in(0,R_2)}\left[ e^{2\Lambda}(1-r\Phi')\right] \ ,
\end{split}
\end{equation}
where we have used that $\mathcal I_1+\mathcal I_2=1$. 
Equation~\eqref{eq:a123inequality} is the same as the one-fluid bound in Eq.~\eqref{eq:bound}, with $\rho^{\rm max}\to \rho_1^{\rm max}+\rho_2^{\rm max}$.
%This expression is the na\"ive way of extending the bound from one fluid to two {\ny{First, why naive? That's a loaded word and it seems incorrect in this context.}} {\ny{Second, you can't close it here. You have to say that the RHS is a positive (or zero) constant, and therefore a1+a2+a3 is bounded. BUT there is an overall minus sign in the first term on the RHS of (125), and so if this term is actually negative, then it could cancel the second line...let's clean this up.}}, but importantly, it neglects the interaction between the two fluids, which we tackle next.
Therefore, just as in the case of a single-fluid, $a_1 + a_2 + a_3$ is bounded from below. 
Clearly, however, the above inequality does not yet bound $I[\zeta]$, since we must also study the $a_4$ term, which importantly includes interactions between the two fluids. 
%\DC{Na\"ive as in taking $\rho^{max}=\rho_1^{max}+\rho_2^{max}$ in Eq.~\eqref{eq:bound}. 
% As for the second point, as I mention above, the RHS is generally \textit{not} positive, and it is more than likely in fact negative. This is not an issue and is in fact what should be expected. I'll also note that the term in the first line above can be either positive or negative, while the second line is strictly negative.}

Let us then try to place a bound on $a_4$.
The $a_4$ term requires more work to bound, and the process of doing so is lengthy and unilluminating, which is why we have saved it for Appendix \ref{app:lastbound}. In the end and after much work, we obtain
\allowdisplaybreaks[4] 
\begin{widetext}
\begin{equation}\label{eq:a4}
    \begin{split}
        a_4\geq &-4\pi (\rho_1^{\rm max}+\rho_2^{\rm max})\left\{\sup_{r\in(0,R_1)}\left[e^{2\Lambda}(1-2r\Phi')\right]+\frac12\sup_{r\in(0,R_1)}\left[-e^{2\Lambda}r\Phi'\left(2+\frac{c_{s,1}^2}{c_{s,2}^2}+\frac{c_{s,2}^2}{c_{s,1}^2}\right)\right]\right\}\\
        &-8\pi(\rho_1^{\rm max}+\rho_2^{\rm max})\left(\alpha_1+\alpha_2\right)+2\pi(\rho_1^{\rm max}+\rho_2^{\rm max})\left(\beta_1+\beta_2\right) \\ 
        &-96\pi^2(\rho_1^{\rm max}+\rho_2^{\rm max})^2\sup_{r\in(0,R_1)}\left(r^2e^{4\Lambda}\right) \ ,
    \end{split}
\end{equation}
\end{widetext}
% {\ny{Again, I'd move all minus signs above into the expressions to have something that is manisfestly postive.}}
where the constants $\alpha_X$ and $\beta_X$ are given by
\begin{subequations}
\begin{align}
    \alpha_X&=\max\left\{0,\sup_{r\in(0,R_1)}\left(re^{2\Lambda}\frac{d}{dr}c_{s,X}^2\right)\right\} \ , \label{eq:alphax}\\
    \beta_X&=\min\left\{0,\inf_{r\in(0,R_1)}\left(re^{2\Lambda}\frac{d}{dr}c_{s,X}^2\right)\right\} \label{eq:betax} \ ,
\end{align}
\end{subequations}
and where we recall that $c_X$ is the speed of sound for fluid $X \in (1,2)$.
We note that unlike the expressions for $a_1, a_2$, and $a_3$, all supremums and infinums in the expression for $a_4$ are on the interval $(0,R_1)$, i.e. within the inner boundary.
Provided the assumptions of the theorem hold (i.e.~that $(\alpha_X,\beta_X)$ is bounded from below and $c_{s,X}^2 \neq 0$), then $a_4$ is also bounded from below.
% {\ny{Wait, you can't close it here. You have to say that the RHS is a positive (or zero) constant, and therefore a4 is bounded.}}

Before continuing, we discuss several implications of the bound on $a_4$ [Eq.~\eqref{eq:a4}].
First, as we stated previously, taking the one fluid limit for $a_1+a_2+a_3$ gives us the single-fluid bound Eq.~\eqref{eq:bound}.
If we take the one-fluid limit (i.e. removing the first fluid $\rho_1^{\rm max}\to 0,R_1\to0$) of the right-hand side of Eq.~\eqref{eq:a4}, we obtain a nonzero result (specifically, this limit returns $-4\pi\rho_2^{\rm max}$).
Therefore, adding the right-hand side of Eq.~\eqref{eq:a4} before taking the one-fluid limit results in a \textit{lower} bound for a single fluid than expected.
However, $a_4$ does still go to zero when taking the one-fluid limit.
This can be deduced from the results in Appendix \ref{app:lastbound}, where we show that $a_4$ is proportional to $(\rho_2^{\rm max}\mathcal{I}_1+\rho_1^{\rm max}\mathcal{I}_2)$, which does in fact vanish when taking the one-fluid limit. % but the resulting quotients (after dividing by $\mathcal I_1+\mathcal I_2$) do not.

The existence of the fundamental mode then follows from the following argument. First, $I[\boldsymbol\zeta]$ as written in Eq.~\eqref{eq:sumall} is bounded from below, 
\begin{equation}\label{eq:fund1}
    \min_{\boldsymbol\zeta\in D(\mathcal V)}I[\boldsymbol\zeta]\geq a_1+a_2+a_3+a_4 \ ,
\end{equation}
because both $a_1 + a_2 + a_3$ and $a_4$ are bounded from below, as proven above.
Since the eigenvalues are in the image of the Rayleigh quotient (by Theorem~\ref{th:vari}), then the eigenvalues are also bounded from below, and the smallest of these eigenvalues is defined to be the fundamental mode.
Moreover, since the eigenvalues are extremal values, and the minimum over the domain is itself an extremum value, then the eigenvalue of the fundamental mode $\omega_0^2$ is just 
\begin{equation}\label{eq:fund2}
    \omega_0^2 = \min_{\boldsymbol\zeta\in D(\mathcal V)}I[\boldsymbol\zeta] \ .
\end{equation}
Putting these two results together, we then have that
\begin{equation}\label{eq:fund}
    \omega_0^2 = \min_{\boldsymbol\zeta\in D(\mathcal V)}I[\boldsymbol\zeta]\geq a_1+a_2+a_3+a_4 \ ,
\end{equation}
which means that fundamental frequency $\omega_0^2$ is bounded from below.

With this bound at hand, we can now arrive at the conclusion of our proof. As was discussed in Sec.~\ref{ssec:variation}, this bound allows us to define a canonical self-adjoint extension for $\mathcal V$, and thus, the mode solutions are complete.
Then, the general solution obtained from the modes will have a norm that grows over time if and only if some mode has a negative eigenvalue.
Since the fundamental mode is the smallest of all modes, given the assumptions stated in the theorem, then there can only exist a mode with a negative eigenvalue if the fundamental mode itself is negative.
Therefore, we conclude that an equilibrium configuration is stable if and only if its fundamental mode is greater than zero.

% {\ny{You need more words here. First, what follows from what you've done above is that the sum of the a's gives you a positive (or zero) constant (give it a name, like I did before). Then, you need to say that since the sum bounds I, then I itself is bounded by this constant. Finally, you need to say that since I is bounded, then the fundamental eigenvalue is also bounded (and you need to justify this connection, or remind the reader of the justification).}}
% \DC{On the first point, the constant is not positive (as explained in other comments). On the second point, the inequality expresses exactly that, but we can rewrite it in words.}

\end{proof}

The assumptions we made about the speeds of sound in the statement of Theorem \ref{th:fundamental} are necessary, so that $a_4$ is indeed bounded in Eq.~\eqref{eq:a4}.
We consider these assumptions to be physically reasonable.
First, we assumed  that $c_{s,X}^2$ is differentiable, 
which ensures that both $\inf(re^{2\Lambda}\frac{d}{dr}c_{s,X}^2)$ and $\sup(re^{2\Lambda}\frac{d}{dr}c_{s,X}^2)$ are finite.
Note that we allow for the speed of sound to change rapidly along the star, just as long as it is not infinitely fast.  
Second, we assumed that neither of the speeds of sounds can vanish inside $(0,R_1)$ to ensure that both $c_{s,1}^2/c_{s,2}^2$ and $c_{s,2}^2/c_{s,1}^2$ are finite within this interval.
The speed of sound goes to zero at first-order phase transitions \cite{Tan2022,Shahrbaf2020} and it is non-differentiable for higher-order phase transitions, which means that Theorem \ref{th:fundamental} only holds for equations of state without such phase transitions.

However, first- and higher-order phase transitions are already a problem when using the standard stability criterion for single-fluid stars.
Previous work has shown that the existence of first-order phase transitions for single-fluid stars
can lead to either additional stability branches \cite{Gerlach1968,Alford2017,Tan2022} or to a mass-radius maximum that does not coincide with a central density at which $\omega_0^2=0$ \cite{DiClemente2020,Vasquez-Flores2012,Pereira2018}.
Moreover, at any point $r^*$ where $c_{s,X}^2(r^*)=0$, the function $P_X(r^*)$ of the Sturm-Liouville operator will vanish.
This would cause the Sturm-Liouville operator (see Eq.~\eqref{eq:sl} for the one-fluid, and Eq.~\eqref{eq:2sl} for the two-fluid case) to not be well-defined at $r^*$.
Therefore, one already needs to adapt the mode equations (in this case, Eq. \eqref{eq:2sl}) appropriately to include correct interface conditions across the boundary of the phase transition \cite{Rau2023,Pereira2018,DiClemente2023}.
These first-order phase transitions have been studied in the context of neutron stars with a quark-matter core, and the ``misbehavior'' of the EoS occurs at the transition between the quark-matter core and the neutron matter.
Noting that Theorem \ref{th:fundamental} is valid as long as $c_{s,X}^2\to 0$ at radius greater than $R_1$, we conclude that, as long as all of the second fluid is contained in a radius smaller than the quark-matter transition radius, Theorem \ref{th:fundamental} will hold for such stars.

We emphasize that we likely did not find the ``best'' or most stringent lower bound possible, and we make no attempt to do so in this article.
In fact, from our calculations in Appendix \ref{app:lastbound},
the expression for $a_4$ in Eq.~\eqref{eq:a4} can be made higher (by for example, some terms having $-\sup(\frac{d}{dr}c_{s,1}^2+\frac{d}{dr}c_{s,2}^2)$ instead of the sum of the supremums). What we presented here, however, is a simpler way of writing the bound (as compared to Eq.~\eqref{eq:fulla4})  
%\JR{what is a simpler way? cite equation. and why write the simpler form? does it matter how simple it looks? if not, pick the better bound}.
Our goal in the section was to show that such a bound exists, which in turn proved that a fundamental mode exists.
Generally, we expect the actual value of the fundamental mode for an equilibrium configuration to be much higher than this bound,
and, in fact, since this bound is generally negative, any stable configuration requires for such mode to be higher than the bound found here.

%============================================================================================================
\section{Radial stability of stars made up of an arbitrary number of non-interacting fluids}\label{sec:multi}

%\JR{The main difference from the two-fluid case here is that you do not find a lower bound for the eigenmodes, correct?}

We here extend the results presented in Secs. \ref{sec:lang} and most of \ref{sec:energy} (but not those of Sec.~\ref{ssec:low}) to a star that is made up of an arbitrary number of non-interacting fluids.
%As an extension of the dark-matter-admixed neutron stars case, this would be relevant for the case of multiple dark sectors inside of the neutron star.
Multifluid stars were previously considered in \cite{Kain2021}, where the equations of structure were derived for arbitrarily many fluids, but radial stability was not analyzed beyond numerical calculations for a simple three-fluid model.
The particular case of three fluids was additionally numerically studied in \cite{Rezaei2018}. 
As we have done before, we restrict attention to non-interacting fluids, since analyzing interactions would require the use of a much more involved and general formalism, developed by Carter in \cite{Carter1989} (and later expanded on by collaborators \cite{CarterLanglois1998,ComerLanglois1994})  for multiple fluids.

\subsection{Background and Radial Perturbations}

As we discussed in Sec.~\ref{ssec:back2}, all equations derived in that section can be easily generalized to arbitrarily many fluids simply by letting the index $X$ range over $1,2,...,k$, where $k$ is the number of independently conserved fluids.
The integration of the background equations can be generalized for multiple fluids in the same way as done for two fluids (see Sec.~\ref{ssec:back2}).
Specifically, we define an EoS $p_X(\epsilon_X)$ for each fluid, along with the value of the central energy density for each fluid $\epsilon_X^c$.
Equations~\eqref{eq:px}, \eqref{eq:2m} and \eqref{eq:2phi2} are integrated together with the multiple EoSs until some radius $R_1$ where $p_1(R_1)=0$.
At each inner radius $R_i$ ($i<k$, and we assume $R_i<R_j$ when $i<j$), we impose continuity of the remaining fluid functions and the gravitational field.
The gravitational potential $\Phi(r)$ at the outer radius $R_k$ is set by imposing continuity with the exterior solution $\Phi(R_k)=-\frac12\log(1-2m(R_k)/R_k)$.
The gravitational potential $\Lambda(r)$ is obtained from $m(r)$ (see Eq.~\eqref{eq:defm}), and the identity between the radial derivatives of $\Lambda$ and $\Phi$, given by Eq.~\eqref{eq:2diff} still holds.

The perturbation equations derived in Sec.~\ref{ssec:2pert} also easily generalize from two to multiple fluids.
The equation of motion for each fluid is once again obtained from the Euler equation, and it is given by Eq.~\eqref{eq:2eul}.
The expressions for $\partial_r\delta\Phi$, $\delta\Lambda$, $\delta p_X$ and $\delta\epsilon_X$ to be substituted into Eq.~\eqref{eq:2eul}
are once again given by Eqs.~\eqref{eq:2dphi}, \eqref{eq:2dlam}, \eqref{eq:2dp} and \eqref{eq:2de}, with $\bar\xi$ the density-averaged perturbation of Eq.~\eqref{eq:barxi}. The perturbation equations need to be solved within each radial region $r\in[R_i,R_{i+1}]$, with $\xi_i=0$ when $r>R_i$.
%It is clear that if we know how to solve in the innermost region, we will know how to solve in the remaining regions.

\subsection{Symmetric form}

Once we make all the required substitutions into the equations of motion,
we obtain a second-order differential equation for each fluid displacement vector $\xi_X$, along with a term that couples each fluid displacement vector to the other fluid displacement vectors.
As in the two-fluid case, we rewrite this system of equations in a symmetric (and in fact self-adjoint) form.
We accomplish this by first defining the variables $\phi_X=r^2e^{\Phi}\xi_X$ and multiplying the Euler equation by $e^{\Lambda-2\Phi}$, as was done for both the single-fluid and two-fluid cases.
See Appendix \ref{app:symm} for a more detailed argument.

We can then write the equation of motion for multiple fluids in terms of operators as done in Eq.~\eqref{eq:op},
\begin{equation}\label{eq:multiop}
    \hat A_X\ddot\phi_X+\hat C_X\phi_X+\sum_Y\hat D_{XY}\phi_Y=0 \ ,
\end{equation}
where the $\hat{D}_{XY}$ term represents coupling operators defined between each pair of fluids.
The operators are given by expressions similar to those for two fluids, 
\begin{subequations}\label{eq:multi_funcs}
\begin{gather}
    \hat A_X\phi_X\equiv A_X(r)\phi_X \ , \label{eq:multim}\\
    \hat C_X\phi_X\equiv -\partial_r\left(P_X(r)\partial_r\phi_X\right)+Q_X(r)\phi_X \ , \label{eq:multisl}\\
    \hat D_{XY}\phi_Y\equiv S_{XY}(r)\phi_Y+\partial_r(f_{XY}(r)\phi_Y)-f_{YX}(r)\partial_r\phi_Y \label{eq:multidxy} \ ,
\end{gather}
\end{subequations}
with the important relation that $S_{XY}(r)=S_{YX}(r)$.
For repeated indices, the operator $\hat D_{XX}=0$, and the functions associated with it, $S_{XX}(r)=0$ and $f_{XX}(r)=0$, are all zero.
The exact expressions for the functions $[A_X(r),P_X(r),Q_X(r),S_{XY}(r),f_{XY}(r)]$ are given in Appendix \ref{app:symm}.
Specifically, $A_X(r)$, $P_X(r)$, and $f_{XY}(r)$ have the same expression as in the two-fluid case [Eqs.~\eqref{eq:2AX}, \eqref{eq:2PX} and~\eqref{eq:fx} respectively], while $Q_X$ and $S_{XY}$ are given by Eqs.~\eqref{eq:Q_multi} and \eqref{eq:S_multi} respectively.
These functions have the same important properties as those which appear in the two-fluid case.
In particular, $[P_X(r),A_X(r),f_{XY}(r)]$ are all positive definite and $[P_X(R_X)=0,A_X(R_X)=0,Q_X(R_X)=0]$, 
while, for the interaction terms, $[f_{XY}(R_X)=f_{XY}(R_Y)=0,S_{XY}(R_X)=S_{XY}(R_Y)=0]$.
We emphasize that while the individual operators $\hat D_{XY}$ are not symmetric, the total \emph{system} [Eq.~\eqref{eq:multiop}] can be viewed as symmetric (as we will show in Sec.~\ref{ssec:multien}).

As in the case of one or two fluids, we search for solutions in terms of a mode expansion $\phi_X(t,r)=e^{i\omega t}\zeta_X(r)$.
With this, Eq.~\eqref{eq:multiop} reduces to
\begin{equation}\label{eq:slmulti}
    \omega^2\hat A_X\zeta_X=\hat C_X\zeta_X+\sum_Y\hat D_{XY}\zeta_Y \ .
\end{equation}
At the origin we impose the regularity conditions
\begin{equation}
    \lim_{r\to 0}\frac{\zeta_X}{r^2}= {\rm const} \ ,
\end{equation}
while at each radius $R_X$ we impose that the Lagrangian displacement of the pressure $p_X$ is zero, which we explicitly write as
\begin{equation}
\label{eq:lagrangian-displacement-multi-fluid}
\begin{split}
    \Delta p_X(R_X)&=
    \bigg[ -\frac{\gamma_Xp_X}{r^2}\frac{d\zeta_X}{dr}  \\
    & +4\pi \gamma_Xp_X\sum_Y(\epsilon_Y+p_Y)\frac{e^{2\Lambda-\Phi}}{r}(\zeta_X-\zeta_Y)\bigg]_{r=R_X}   \\
    &=0 \    .
\end{split}
\end{equation}
The condition in Eq.~\eqref{eq:lagrangian-displacement-multi-fluid}, along with the regularity of all quantities at each fluid boundary, imply that $\zeta_X$ and $\frac{d}{dr}\zeta_X$ are finite at $R_X$.
These conditions must apply for those fluids with $Y>X$, for which the pressure has not yet vanished.

\subsection{Energy, Lagrangian and Modes}\label{ssec:multien}

We next define the energy, Lagrangian, and their relation to the mode solutions for a generic multi-fluids star.
First, we introduce the unweighted $L^2$ inner product on the whole star
\begin{equation}
    \label{eq:def-inner-product-multi-fluid}
    \langle f,g\rangle=\int_0^{R_k}dr\, \bar f \; g \ .
\end{equation}
From the expressions for the operators $\hat A_X$ [Eq.~\eqref{eq:multim}] and $\hat C_X$ [Eq.~\eqref{eq:multisl}], we conclude that they are both symmetric (and, in fact, self-adjoint) in the $L^2$ space defined by Eq.~\eqref{eq:def-inner-product-multi-fluid}.
For the operator $\hat D_{XY}$, we have the following relation
\begin{equation}
\begin{split}
    &\langle \hat D_{XY}\psi_Y,\phi_X\rangle-\langle\psi_Y,\hat D_{YX}\phi_X\rangle\\
    =&\int_0^{R_k} dr\, \left[S_{XY}\bar\psi_Y\phi_X+\partial_r(f_{XY}\bar\psi_Y)\phi_X-f_{YX}\bar\psi_Y\partial_r\phi_X\right]\\
    -&\int_0^{R_k}dr\,\left[\bar\psi_YS_{YX}\phi_X+\bar\psi_Y\partial_r(f_{YX}\phi_X)-\bar\psi_Yf_{XY}\partial_r\phi_X\right] \ ,
\end{split}
\end{equation}
which is a generalization of Eq.~\eqref{eq:difd} for the two-component fluid star.
Since $S_{XY}=S_{YX}$ the multiplicative terms cancel, and from integration by parts, the derivative terms also cancel as it happened for two fluids (see Sec.~\ref{ssec:canon}).
Thus, we find that for any two fluids
\begin{equation}\label{eq:symcond}
    \langle \hat D_{XY}\psi_Y,\phi_X\rangle=\langle \psi_Y,\hat D_{YX}\phi_X\rangle \ ,
\end{equation}
just as in the two fluid case. 

We introduce symplectic structures for each fluid as done before in Eq.~\eqref{eq:symplec}.
The symplectic structure for the whole system is the sum 
\begin{equation}
    \label{eq:total-symplectic-structure-multifluid}
    W[\boldsymbol\psi,\boldsymbol\phi]=\sum_XW_X(\psi_X,\phi_X) \ .
\end{equation}
The total time derivative of Eq.~\eqref{eq:total-symplectic-structure-multifluid} is zero, which like in the two-fluid case, is a result of the properties of the operators $\hat D_{XY}$ [see Eq.~\eqref{eq:symcond}]. Mathematically, we can express this via
\begin{equation}
\begin{split}
    \frac{d}{dt}W\left[\boldsymbol{\psi},\boldsymbol{\phi}\right]
%    &=
%    \sum_Y\langle D_{XY}\psi_Y,\phi_X\rangle-\langle \psi_X,D_{XY}\phi_Y\rangle
%    \nonumber\\
%    &=
%    \sum_{X,Y}\langle D_{XY}\psi_Y,\phi_X\rangle-\langle\psi_X,D_{XY}\phi_Y\rangle
%    \nonumber \\
    &=
    \sum_{X,Y} \langle \hat D_{XY}\psi_Y,\phi_X\rangle-\langle \psi_Y,\hat D_{YX}\phi_X\rangle
    \nonumber\\
    &=
    0
    .
\end{split}
\end{equation}
The derivation of the above equation follows directly from the proof of Theorem \ref{th:symplectic}. 
% \DC{This equation is currently wrong, it should either be $W[\psi,\phi]$ on the LHS, or removing the $=0$ line and the sum over $X$.}

The canonical energy is given by $E_c[\boldsymbol\phi,\dot{\boldsymbol\phi}]=-\frac12W[\boldsymbol\phi,\dot{\boldsymbol\phi}]$.
Once again, we can divide the canonical energy into a ``kinetic'' energy $T[\dot{\boldsymbol\phi}]$ and a ``potential'' energy $V[\boldsymbol\phi]$
\begin{subequations}
\begin{align}
    \label{eq:kinetic-energy-multifluid}
    T[\dot{\boldsymbol\phi}]&=\frac12\sum_X\langle \dot\phi_X,\hat A_X\dot\phi_X\rangle \ ,
    \\
    \label{eq:potential-energy-multifluid}
    V[\boldsymbol\phi]&=\frac12\sum_X\langle \phi_X,\hat C_X\phi_X\rangle+\frac12\sum_{X,Y}\langle \phi_X,\hat D_{XY}\phi_Y\rangle \ .
\end{align}
\end{subequations}
The Lagrangian is defined to be $\mathcal L[\boldsymbol\phi,\dot{\boldsymbol\phi}]=T[\dot{\boldsymbol\phi}]-V[\boldsymbol\phi]$, from which we can then define an action $S$, whose variation leads to the equations of motion [Eq.~\eqref{eq:multiop}].
As the kinetic and potential energies are equal to the sum of the single-fluid kinetic and potential energies, the derivation of the equations of motion follows just as for the two-fluid case (see Sec.~\ref{ssec:variation}).

The Rayleigh quotient for the multifluid system can be generalized similarly,
\begin{align}\label{eq:multi_rayleigh}
    I[\boldsymbol\zeta]
    &=
    \frac{V[\boldsymbol\zeta]}{T[\boldsymbol\zeta]}
    \nonumber\\
    &=
    \frac{\sum_{X}\langle \zeta_X,\hat C_X\zeta_X\rangle+\sum_{X,Y}\langle \zeta_X,\hat D_{XY}\zeta_Y\rangle}{\sum_X\langle\zeta_X,\hat A_X\zeta_X\rangle} \ .
\end{align}
The variation of Eq.~\eqref{eq:multi_rayleigh} results in a mode equation analogous to Eq.~\eqref{eq:slmulti}, but with $\omega^2=I[\Tilde{ \boldsymbol\zeta}]$, where $\Tilde{\boldsymbol\zeta}$ is a solution to the field equations.
In other words, Theorem \ref{th:vari} also applies to multi-fluid stars.
Since both $V[\boldsymbol\zeta]$ and $T[\boldsymbol\zeta]$ are real quantities, then the eigenvalues must be real.

To write Eq.~\eqref{eq:multi_rayleigh} in terms of a symmetric operator, we introduce a weighted Hilbert space $H_X$ for each fluid.
Each space $H_X$ has an $L^2$ inner product weighted by $A_X$ on the interval $(0,R_X)$.
Then, the Hilbert space we are interested in is
\begin{equation}
    H=\bigoplus_XH_X \ .
\end{equation}
As in Sec.~\ref{ssec:operator} we introduce matrix notation for weighting of the inner product 
\begin{equation}
    \mathbf A=\begin{pmatrix}\hat A_1&\cdots& 0\\\vdots&\ddots&\vdots\\0&\cdots&\hat A_k\end{pmatrix} \ ,
\end{equation}
that is, $\mathbf A_{ij}=\delta_{ij}\hat A_i$ in component notation (no summation), so that the inner product on $H$ can then be written as
\begin{equation}
    \braket{\boldsymbol\eta}{\boldsymbol\zeta}=\int dr\, \boldsymbol\eta^\dag \mathbf A\boldsymbol\zeta \ ,
\end{equation}
where the linear algebra operations are again understood above as performed with a flat metric. The bounds of the integral above are the same as in the case of two fluids. We can use the same matrix notation to define the potential energy operator,
\begin{equation}
    \mathcal V=\begin{pmatrix}\hat A_1^{-1}\hat C_1&\cdots &\hat A_1^{-1}\hat D_{1k}\\\vdots&\ddots&\vdots\\\hat A_k^{-1}\hat D_{k1}&\cdots&\hat A_k^{-1}\hat C_{k}\end{pmatrix} \ ,
\end{equation}
or equivalently, $\mathcal V_{ij}=\hat A_i^{-1}\hat D_{ij}$ for $i\ne j$ and $\mathcal V_{ii}=\hat A_i^{-1}\hat C_i$ for diagonal elements in component form.
The potential energy operator is symmetric due to Eq.~\eqref{eq:symcond}, and the Rayleigh quotient can then be written as
\begin{equation}
    I[\boldsymbol\zeta]=\frac{\braket{\boldsymbol\zeta}{\mathcal V\boldsymbol\zeta}}{\braket{\boldsymbol\zeta}} \ .
\end{equation}
Given this abstracted form, we see that, as in the two-fluid case, $\braket{\boldsymbol\zeta}{\mathcal V\boldsymbol\zeta}$ is a real quantity, and thus, the eigenvalues of $\mathcal V$ are real.
Similarly, one can show that the eigenvectors with different eigenvalues are orthogonal.

We will not prove here the results of Sec.~\ref{ssec:low} for the multi-fluid case,
but we expect the following to hold.
Due to the similarity with the two-fluid case,
we expect to be able to obtain a canonical self-adjoint extension.
Similarly, we expect that there is a lower bound to $I[\boldsymbol\zeta]$,
and that this bound can be derived similarly to what we did in Sec.~\ref{ssec:low}. Doing so would prove the existence of a fundamental mode and the self-adjoint Friedrichs extension for a multi-fluid case.

%================================================================================================
\section{Implications and Future Work\label{sec:conclusion}}

We have established the first set of rigorous conditions for the radial stability of two-fluid stars (summarized into Theorems \ref{th:symplectic}, \ref{th:vari} and \ref{th:fundamental}).
To achieve these results, we derived a canonical energy and a variational principle for generic perturbations of spherically symmetric stars (Eq.~\eqref{eq:2energy} and Eq.~\eqref{eq:2Rayleigh} respectively).
We then showed the existence of a fundamental mode for the perturbations of a two-fluid star, by providing a lower bound to its eigenfrequencies [Eq.~\eqref{eq:fund}].
While our results only hold for stars composed of fluids that interact with each other only gravitationally, such fluids can still (at least approximately) model systems of astrophysical interest, including dark-matter admixed stars~\cite{Tan2022}.
%We also extended the canonical energy and the variational principle to stars with arbitrarily many non-interacting fluids.
We also extended most of these results to fluids composed of an arbitrary number of fluids that interact with each other only gravitationally. 
%definition of the energy and action for radial perturbations of a single-fluid star to the radial perturbations of a star with three or more fluids.

Our work provides a formal justification for the stability criterion of two-fluid stars introduced in \cite{HENRIQUES1990} and used in other works, such as \cite{Hippert2023,Kain2021,GOLDMAN2013,DiGiovanni2022}.
The space of equilibrium configurations (i.e.~the configuration space) for two-fluid stars is described by the (positive) central energy density quarter plane, with each point of the plane corresponding to an equilibrium configuration, as shown in Fig.~\ref{fig:conf-space}. 
\begin{figure}[ht!]
    \includegraphics[clip=true,width=8.75cm]{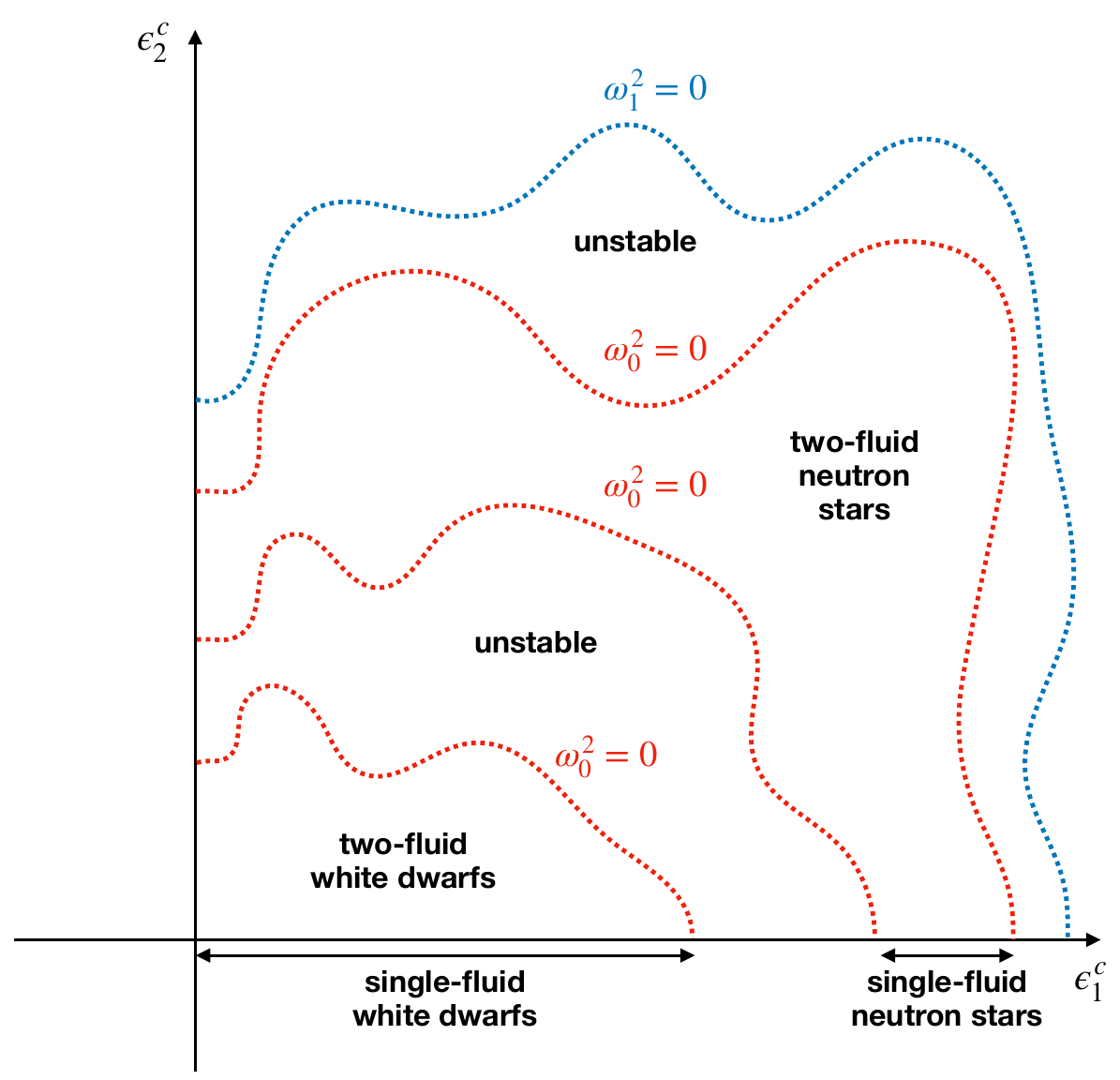}
    \caption{\label{fig:conf-space} Schematic representation of the configuration space. Each point in this plane represents an equilibrium  two-fluid star, characterized by certain observables. The red and blue dotted curves in this space represent extremal configurations for the fundamental and first-overtone mode, where the respective eigenfrequencies of the perturbations vanish. The areas enclosed by these curves define stable and unstable regions. }
\end{figure}
Each equilibrium configuration can be characterized by observable quantities $\{M,R_1,R_2,N_1,N_2\}$ (mass, radii, and total particle number for each fluid),
each of which can be thought of as a function of the central energy densities. Although these are equilibrium configurations, one does not know \textit{a priori} which are stable and which are unstable to perturbations. 

As we proved rigorously in this paper, the perturbations of each equilibrium configuration can be expressed as a sum of normal modes (because the perturbation are described by a self-adjoint problem), whose eigenfrequencies $\omega_n^2$ are bounded from below (Theorem \ref{th:fundamental}). 
%and can be written as a real function of the central densities. 
Moreover, we also proved that for some mode labeled by its overtone number $n$, $\omega_n^2(\epsilon_1^c,\epsilon_2^c)$ corresponds to an extremal of the Rayleigh quotient, with $\omega_0^2(\epsilon_1^c,\epsilon_2^c)$ its minimum.
The condition $\omega_n^2(\epsilon_1^c,\epsilon_2^c)=0$ defines the $(\epsilon_1^c,\epsilon_2^c)$ curve that separates stable from unstable configurations for the $n$th mode,
%\DC{Welp, I missed this when we wrote this. Every $\omega_n^2$ is an extremal value of the quotient, not just when it is equal to $0$. $\omega_n^2=0$ is an extremal of the ``configuration'' in the sense of the determinant. The phrase after the i.e. is still mostly correct though (technically $\omega_n^2=0$ could be a minimmum/maximum but it would be unlikely so we and everyone else always ignores this).}
i.e.~a configuration with higher central densities would force that mode to switch its stability properties and become unstable if previously stable or vice versa.  
Therefore, for a given $n$, the $\omega_n^2(\epsilon_1^c,\epsilon_2^c)=0$ condition defines a submanifold of co-dimension 1 in the configuration space, i.e.~a set of curves in the $(\epsilon_1^c,\epsilon_2^c)$ space, where the $n$th mode is extremal.
A graphical representation of these curves is shown in Fig.~\ref{fig:conf-space} for the fundamental mode (in red) and the first overtone (in blue).  
The union of all such submanifolds defines another submanifold of co-dimension 1,
i.e.~the union of all the extremal curves in the $(\epsilon_1^c,\epsilon_2^c)$ space, where there is at least one mode that is critical on each curve;
in Fig.~\ref{fig:conf-space}, this submanifold would be the union of all the red and blue dotted curves. 

Figure \ref{fig:conf-space} also shows how to use the results above to determine stability of astrophysical objects. White dwarfs reside at low enough central densities (in the lower left corner of Fig.~\ref{fig:conf-space}),
either when treated as a single-fluid star (i.e.~on the x-axis of the figure) or as a two-fluid star.
Since we know that white dwarfs are stable, then the area enclosed by the first, dotted red line and the $x$ and $y$ axis, corresponds to stable stellar configurations,
i.e.~stars whose perturbations do not grow exponentially because $\omega_0^2 > 0$ and $\omega_{n>0}^2 > \omega_0^2$.
As we consider higher central densities and cross this first boundary, we enter a regime where stellar configurations are unstable.
At even higher central densities, we cross a second, dotted red line that separates unstable from stable configurations.
Therefore, the area between the second and the third dotted red lines contains all two-fluid neutron star configurations that are stable, including the single-fluid sequence on the x-axis.
At even higher central densities, we again return to a regime of unstable stellar configurations.
The boundary between the two-fluid neutron star stable region and the higher central density unstable region is usually the separatrix that is searched for when studying stellar configurations with the highest masses and compactnesses. 

How can we relate this stability criteria to a simpler method to determine stability from the equilibrium configurations themselves? 
References~\cite{HENRIQUES1990,GOLDMAN2013, Hippert2023} argued that there is an $n$ mode for which $\omega_n^2(\epsilon_1^c,\epsilon_2^c)=0$ if and only if $\det(\partial N_i/\partial \epsilon_j^c)=0$.
The determinant condition is easier to employ because it relies only on information about the equilibrium (background) configurations.
For a single fluid, this determinant condition is equivalent to searching for extrema in the mass-central density curve, or equivalently, in the mass-radius curve.
Therefore, the determinant criteria is a generalization of the mass-radius criterion for single-fluid stars.
This generalization by itself, however, is not enough because it could have been possible that
(i) $\omega_n^2$ is not bounded from below (in which case there would not have been a fundamental mode),
or (ii) $\omega_n^2$ could have been complex (in which case the perturbative solutions would have been quasi-normal, with exponentially decaying or growing parts),
or (iii) the mode expansion of the perturbations could not have been complete (in which case there would have been other possible solutions to the perturbation equations that cannot be described by normal modes and could be unstable).
Our work here rigorously proves that none of these three possibilities are realized (for perfect fluids without phase transitions), thus establishing a robust connection between stability and the determinant condition. 

Although the determinant criterion (which is the generalization of the mass-radius criterion for single-fluid stars) is easy to use, the mass-radius criterion for single fluids fails in certain cases, and we expect the same to be true when considering more than one fluid. 
%this criterion (as an extension of the mass-radius curve) to run into the same issues.
In particular, research has shown that first-order phase transitions can make the mass-radius maximum not coincide with the onset of radial instability \cite{DiClemente2020,Vasquez-Flores2012,Pereira2018}.
Moreover, first-order phase transitions can also lead to the existence of additional stability branches \cite{Gerlach1968,Alford2017,Tan2022}, in which case there will exist stable equilibrium configurations with higher central densities than the one usually found via the method of \cite{HENRIQUES1990}.
On the other hand, the calculation of eigenvalues will \textit{always} determine linear radial stability, provided the modes are complete.
Therefore, computing the fundamental mode and determining whether its square is positive or negative is the preferred method for determining stability in systems in which the determinant criterion fails.
However, one must be careful when computing the eigenvalues in such cases because the eigenvalue problem is altered when phase transitions are present, and appropriate boundary conditions must be employed across the phase transition \cite{Rau2023,Pereira2018,DiClemente2023}.

% and shows the expected curves from what we see for single-fluid stars\cite{BardeenThorneMeltzer1966}.

% {\ny{UP TO HERE}}

Our work, however, does not exhaust certain behaviors of the extremal curves. In particular, we have not proven that these curves are simple, closed and non-intersecting, which are necessary conditions for the curves to define ``interior'' and ``exterior'' regions.
We expect this to be necessary to extend, in a sensible manner, the mode-change criterion for modes beyond the fundamental mode of two-fluid stars (see \cite{BardeenThorneMeltzer1966} for the single-fluid case).
Particularly troublesome could be the results of \cite{Leung2012} in which certain eigenvalues found for a two-fluid star disappear when taking the single-fluid limit (and thus the curves are not closed).
All of this can be investigated through a deeper analysis of the spectrum of the operator $\mathcal V$ of Eq.~\eqref{eq:potoperator},
and, in particular, through the study of how this operator changes in the configuration space.

Our work also does not investigate the possibility that the spectrum of eigenvalues may be continuous, instead of discrete. The property of completeness of modes for a self-adjoint operator applies to the \textit{whole} spectrum, including any continuous part that may exist. If the spectrum were continuous, then the curves of Fig.~\ref{fig:conf-space} would become two-dimensional areas, and the analysis of stability would be much more involved. 
However, just as in the case for single-fluid stars, we expect there to be no continuous spectrum for radial modes \cite{DysonSchutz1979}. 
%\sout{Even if there was a part of the spectrum that is continuous, the continuous eigenvalues must be larger than the eigenvalue of the fundamental mode (for a self-adjoint operator), implying the stability analysis presented above should still hold.}
%Moreover, we have not analyzed whether there is a continuous part to the spectrum of $\mathcal V$.

We have also not proven that one can extend Theorem \ref{th:fundamental} for radial eigenmodes of stars with three or more fluids or to two-fluid stars with phase transitions. 
%The method we employed in this paper as a possible way to find a lower bound for arbitrarily many fluids (i.e. showing that every term in Eq.~\eqref{eq:multi_rayleigh} is greater than or equal to a constant term multiplying $\mathcal I_1+\dots+\mathcal I_k$ with $\mathcal I_X$ as given in Eq.~\eqref{eq:intex}).
% Another possible extension of our work involves improving the lower bound of Eq.~\eqref{eq:fund}.
In particular, it would be useful to relax the assumptions that $c_{s,X}^2\ne 0$ and that $c_{s,X}^2$ is differentiable within the inner radius of the star.
Stars with first-order phase transitions can violate these two assumptions \cite{Tan2022,Shahrbaf2020}, giving rise to secondary and higher stability branches. 
One could achieve such an extension by finding a better way to bound the $a_4$ term  of Eq.~\eqref{eq:a4}. 
Carrying out such an extension would therefore make the stability analysis of two-fluid stars with first-order phase transition as rigorous as that of stars without them. 
% This extension would constitute finding a better bound for the term $a_4$ given in Eq.~\eqref{eq:a4} and  calculated in Appendix \ref{app:lastbound}.

A final extension of this work that we would like to discuss involves the inclusion of non-gravitational interactions between the two fluids.
Such an extension would permit the study of a wider range of physical scenarios beyond dark-matter-admixed neutron stars, such as superfluid neutron stars \cite{AnderssonComer2001,AnderssonComerGrosart2004,CarterLanglois1998}.
This would involve using the formalism developed in \cite{Carter1989}, which leads to the equations in \cite{Comer1999}.
For the case with no ``entrainment'' ($\mathcal A=0$ in the equations in \cite{Comer1999}),
it would suffice to show that any additional terms stemming from the interactions do not affect the symmetric property of the operators (i.e.~Eq.~\eqref{eq:sym} still holds),
to obtain a canonical energy and a corresponding variational principle.
Additionally, one should show that any additional terms are also bounded from below, in the sense described in Sec.~\ref{ssec:bound}, to prove the existence of a fundamental mode.
The work presented in this paper provides a roadmap for such future extensions. 

%%%%%%%%%%%%%%%%%%%%%%%%%%%%%%%%%%%%%%%%%%%%%%%%%%%%%%%%%%%%%%%%%%%%%%%%%%%%%%%%%%%%%%%%%
\acknowledgments
% {\ny{You forgot to write this! Please add a paragraph here acknowledging all support you have, all support I have (see most recent paper), and also thanking any people we've talked to about this paper (e.g. Bob W.)}}
We thank Jacquelyn Noronha-Hostler for helpful comments on this paper.
D. A. C. is thankful for the support from UIUC Graduate College and the Grainger College of Engineering, and from the Sloan Foundation.
J.~R.~and N.~Y.~also acknowledge support from the Simons Foundation through Award No. 896696, the National Science Foundation (NSF) Grant No. PHY-2207650 and NASA through Grant No. 80NSSC22K0806.

% \DC{IDK Justin's support (is it the same as Nico's?) and I'll ask how to properly cite my funding.
% As for discussions, Wald only said ``read these papers'' which I've cited the directly relevant one (I tried to get citations for the other two but at some point ended up getting removed),
% I never had the discussion with Mike Stone, and I believe the only other discussion I had was with Kent but it was more about his work than mine.}

% {\ny{Do the appendices go before or after the bibliography?}}

\appendix

\section{Writing into symmetric form for multiple fluids}\label{app:symm}

In this appendix, we provide more details on how to derive the mode equations for the multi-fluid description, i.e.~Eqs.~\eqref{eq:multim}, \eqref{eq:multisl} and \eqref{eq:multidxy}, which appear in Eq.~\eqref{eq:multiop},
To obtain the two-fluid description, it suffices to take the two-fluid limit of the results of this appendix.
We also provide here explicit forms of the functions $[A_X(r),P_X(r),Q_X(r),S_{XY}(r),f_{XY}(r)]$ for multiple fluids. As we mentioned in Sec.~\ref{ssec:2pert}, these expressions can be found by substituting Eqs.~\eqref{eq:2dphi}, \eqref{eq:2dlam}, \eqref{eq:2dp} and \eqref{eq:2de} into Eq.~\eqref{eq:2eul},
using the variable $\phi_X=r^2e^{\Phi}\xi_X$ and multiplying both sides of the Eq.~\eqref{eq:2eul} by $e^{\Lambda-2\Phi}$.
To simplify the presentation of our calculations, we decompose our expressions into modes; i.e. $\phi_X(t,r)\to e^{i\omega t}\zeta_X(r)$.
Then, the full expression for Eq.~\eqref{eq:multiop} is
\begin{widetext}
\begin{equation}\label{eq:multi2}
\begin{split}
    -\omega^2\frac{\epsilon_X+p_X}{r^2}e^{3\Lambda-\Phi}\zeta_X=&-e^{\Lambda-2\Phi}\frac{d}{dr}\left[-\frac{(\epsilon_X+p_X)\Phi'}{r^2}e^{-\Phi}\zeta_X-\frac{\gamma_X p_Xe^{-\Phi}}{r^2}\frac{d\zeta_X}{dr}+4\pi\gamma_Xp_X\frac{e^{2\Lambda-\Phi}}{r}\sum_{Y}(\epsilon_Y+p_Y)(\zeta_Y-\zeta_X)\right]\\
    &+\Phi'\frac{\gamma_Xp_X+(\epsilon_X+p_X)}{\gamma_Xp_X}\left[-\frac{(\epsilon_X+p_X)\Phi'}{r^2}e^{\Lambda-3\Phi}\zeta_X-\frac{\gamma_Xp_X}{r^2}e^{\Lambda-3\Phi}\frac{d\zeta_X}{dr}\right]\\
    &+\Phi'[\gamma_Xp_X+(\epsilon_X+p_X)]\frac{4\pi}{r}e^{3\Lambda-3\Phi}\sum_Y(\epsilon_Y+p_Y)(\zeta_Y-\zeta_X)\\
    &-(\epsilon_X+p_X)\frac{4\pi}{r}e^{3\Lambda-3\Phi}\left[\sum_Y(\epsilon_Y+p_Y)\left(\Phi'-\frac1r\right)\zeta_Y-\gamma_Yp_Y\frac{d\zeta_Y}{dr}\right]\\
    &-(\epsilon_X+p_X)\frac{4\pi}{r}e^{3\Lambda-3\Phi}\sum_{Y,Z}\gamma_Yp_Y4\pi re^{2\Lambda}(\epsilon_Z+p_Z)(\zeta_Z-\zeta_Y) \ .
\end{split}
\end{equation}
\end{widetext}
Our goal is to show that Eq.~\eqref{eq:multi2} can be written as Eq.~\eqref{eq:multiop}.
We see that for the two equations to match, the term multiplying the eigenvalue must be identified with $A_X$, that is
\begin{equation}\label{eq:A_app}
    A_X(r)=\frac{\epsilon_X+p_X}{r^2}e^{3\Lambda-\Phi} \ .
\end{equation}

First, we look to obtain a term of the form $\frac{d}{dr}(P_X\frac{d}{dr}\zeta_X)$ in Eq.~\eqref{eq:multisl}.
We can bring the $e^{\Lambda-2\Phi}$ into the parenthesis of the first line of Eq.~\eqref{eq:multi2} by using the product rule 
\begin{equation}\label{eq:productrule}
\begin{split}
    e^{\Lambda-2\Phi}\frac{d}{dr}\left(\frac{\gamma_Xp_Xe^{-\Phi}}{r^2}\frac{d\zeta_X}{dr}\right)=\frac{d}{dr}\left(\frac{\gamma_Xp_X}{r^2}e^{\Lambda-3\Phi}\frac{d\zeta_X}{dr}\right)\\
    -\frac{\gamma_Xp_Xe^{\Lambda-3\Phi}}{r^2}\frac{d\zeta_X}{dr}\left[\frac{d}{dr}(\Lambda-\Phi)-\Phi'\right] \ .
\end{split}
\end{equation}
We now compare with Eq.~\eqref{eq:multi_funcs} from which we can identify that
\begin{equation}\label{eq:P_app}
    P_X(r)=\frac{\gamma_Xp_X}{r^2}e^{\Lambda-3\Phi} \ .
\end{equation}
Note that both $A_X$ and $P_X$ have the same form as in the single-fluid case [see Eq.~\eqref{eq:funcs1}].
For the term in the second line of Eq.~\eqref{eq:productrule}, we use Eq.~\eqref{eq:2diff} to obtain
\begin{equation}
    \frac{\gamma_Xp_X}{r^2}e^{\Lambda-3\Phi}\frac{d}{dr}(\Lambda-\Phi)=4\pi \frac{\gamma_Xp_X}{r}e^{3\Lambda-3\Phi}\sum_Y(\epsilon_Y+p_Y) \ .
\end{equation}
Substituting this result into Eq.~\eqref{eq:multi2}, we obtain that all remaining first derivative terms not included in $\frac{d}{dr}(P_X\frac{d}{dr}\zeta_X)$ vanish.
\begin{comment}
as follows
\begin{align*}
    0\ =&\zeta_X'\frac{\gamma_Xp_X}{r^2}e^{\Lambda-3\Phi}4\pi re^{2\Lambda}\sum_Y(\epsilon_Y+p_Y)-\zeta_X'\Phi'\frac{\gamma_Xp_X}{r^2}e^{\Lambda-3\Phi}\\
    &-\zeta_X'\frac{(\epsilon_X+p_X)\Phi'e^{\Lambda-3\Phi}}{r^2}+\zeta_X'4\pi \gamma_Xp_X\frac{e^{3\Lambda-3\Phi}}{r}\sum_{Y\ne X}(\epsilon_Y+p_Y)\\
    &+\zeta_X'\Phi'[\gamma_Xp_X+(\epsilon_X+p_X)]\frac{e^{\Lambda-3\Phi}}{r^2}\\
    &-(\epsilon_X+p_X)\frac{4\pi}{r}e^{3\Lambda-3\Phi}\sum_Y \delta_{XY}\gamma_Yp_Y\zeta_Y' \ .
\end{align*}
\end{comment}

Therefore, all the terms in the right-hand side of Eq.~\eqref{eq:multi2} that act on $\zeta_X$ can be written as a Sturm-Liouville operator
\begin{equation}
    \label{eq:inter-multi-eom-appendix}
    \hat C_X\zeta_X=-\frac{d}{dr}\left(P_X(r)\frac{d}{dr}\zeta_X\right)+Q_X(r)\zeta_X \ ,
\end{equation}
with $Q_X$ given by all multiplicative terms of $\zeta_X$ in Eq.~\eqref{eq:multi2}.
Using the product rule similarly as above, substituting in the background equations to simplify, and expanding and cancelling terms, we find that the coefficient of $\zeta_X$ in Eq.~\eqref{eq:inter-multi-eom-appendix} can be written as
\begin{widetext}
\begin{equation}\label{eq:Q_multi}
\begin{split}
        Q_X(r)=&-e^{\Lambda-3\Phi}\frac{(\epsilon_X+p_X)}{r^2}\left[\Phi''-\frac2r\Phi'-(\Phi')^2\right]+4\pi e^{3\Lambda-3\Phi}\frac{(\epsilon_X+p_X)^2}{r}\left(\Phi'-\frac1r\right)\\
        &+\sum_{Y\ne X}4\pi e^{3\Lambda-3\Phi}\frac{(\epsilon_X+p_X)(\epsilon_Y+p_Y)}{r}-\frac{d}{dr}\left[4\pi e^{3\Lambda-3\Phi}\frac{\gamma_Xp_X(\epsilon_Y+p_Y)}{r}\right]\\
        &+16\pi^2 e^{5\Lambda-3\Phi}\sum_{Y\ne X}\left[(\epsilon_X+p_X)^2\gamma_Yp_Y-\gamma_Xp_X(\epsilon_X+p_X)(\epsilon_Y+p_Y)+\gamma_Xp_X(\epsilon_Y+p_Y)\left(\sum_Z(\epsilon_Z+p_Z)\right)\right] \ .
\end{split}
\end{equation}
\end{widetext}
The first line of Eq.~\eqref{eq:Q_multi} is the $Q_X$ coefficient for a single fluid (see Eq.~\eqref{eq:Qb}). 
Lines two and three are due to the gravitational interaction between each fluid. 
In the specific case of two fluids, the last line results in only two terms, which together are symmetric under index interchange.
Likewise, the first term in the second line is also symmetric under index interchange,
which allowed for writing $Q_X$ as the sum of two terms, one that depends on the index and one which does not (called $R$), as given in Eqs.~\eqref{eq:2Qb}-\eqref{eq:Rb}.

The $\zeta_Z$ term in Eq.~\eqref{eq:multi2} can just be changed to a $\zeta_Y$ since $Y$ and $Z$ are just dummy indices under the sum,
\begin{widetext}
\begin{equation}
\begin{split}
    &-16\pi^2(\epsilon_X+p_X)e^{5\Lambda-3\Phi}\sum_{Y\ne X}\gamma_Yp_Y\sum_{Z\ne X,Y}(\epsilon_Z+p_Z)\zeta_Z
    =-16\pi^2(\epsilon_X+p_X)e^{5\Lambda-3\Phi}\sum_{Y\ne X}(\epsilon_Y+p_Y)\zeta_Y\left(\sum_{Z\ne Y,X}\gamma_Zp_Z\right) \ .
\end{split}
\end{equation}
\end{widetext}
Thus, all the remaining terms that need to be taken into account in Eq.~\eqref{eq:multi2} can be written as some operator acting on all the other fluid variables.
Mathematically, this is expressed using operators $\hat D_{XY}$ and summing $\sum_{Y\ne X}\hat D_{XY}\zeta_Y$ for the operator $\hat D_{XY}$ given by
% \allowdisplaybreaks[1]
\begin{widetext}
\begin{equation}\label{eq:fulldxy}
\begin{split}
    \hat D_{XY}\zeta_Y=+&e^{\Lambda-2\Phi}\frac{d}{dr}\left[4\pi \gamma_Xp_X\frac{e^{2\Lambda-\Phi}}{r}(\epsilon_Y+p_Y)\zeta_Y\right]-\frac{4\pi}{r}e^{3\Lambda-3\Phi}(\epsilon_X+p_X)\gamma_Yp_Y\frac{d\zeta_Y}{dr}\\
    +&\frac{4\pi}{r}e^{3\Lambda-3\Phi}(\epsilon_Y+p_Y)\left\{-\Phi'\left[\gamma_Xp_X+(\epsilon_X+p_X)\right]+(\epsilon_X+p_X)\left(\Phi'-\frac1r\right)\right\}\zeta_Y\\
    +&16\pi^2e^{5\Lambda-3\Phi}(\epsilon_X+p_X)\left[(\epsilon_Y+p_Y)\left(\sum_{Z\ne X,Y}\gamma_Zp_Z\right)-\gamma_Yp_Y\left(\sum_{Z\ne X,Y}(\epsilon_Z+p_Z)\right)\right]\zeta_Y \ .
\end{split}
\end{equation}
\end{widetext}

Just as we did for the operator $\hat C_X$, we look to express the above equation in a simpler form.
Using the product rule, we can move the $e^{\Lambda-2\Phi}$ inside of the derivative in the first line (akin to Eq.~\eqref{eq:productrule}).
Then, we can define the function $f_{XY}$ as the term inside the derivative of the first line of Eq.~\eqref{eq:fulldxy},
\begin{equation}\label{eq:fxy}
    f_{XY}=4\pi e^{3\Lambda-3\Phi}\frac{\gamma_Xp_X(\epsilon_Y+p_Y)}{r} \ ,
\end{equation}
which has the same form as in the two-fluid case, which we called $f_X$ (see Eq.~\eqref{eq:fx}).
The only remaining term with a derivative on $\zeta_Y$ of Eq.~\eqref{eq:fulldxy} can incidentally be written as $f_{YX}\frac{d}{dr}\zeta_Y$ (second term of the first line).
All remaining terms in $D_{XY}\zeta_Y$ must then be multiplicative terms, and thus, we write
\begin{equation}\label{eq:inter-fluid-d-appendix}
    \hat D_{XY}\zeta_Y=\frac{d}{dr}(f_{XY}\zeta_Y)-f_{YX}\frac{d}{dr}\zeta_Y+S_{XY}\zeta_Y \ .
\end{equation}

The $S_{XY}$ term must include (aside from the explicit multiplicative terms in Eq.~\eqref{eq:fulldxy}) the term that shows up from the product rule
$-f_{XY}\zeta_Y\left[\frac{d}{dr}(\Lambda-\Phi)-\Phi'\right]$. We must then substitute for $\frac{d}{dr}\Lambda-\frac{d}{dr}\Phi$ via Eq.~\eqref{eq:2diff}, with the sum indexed by $Z$.
After cancellations, we find that the multiplicative terms are
\begin{widetext}
\begin{equation}\label{eq:S_multi}
\begin{split}
    S_{XY}=&-4\pi e^{3\Lambda-3\Phi}\frac{(\epsilon_X+p_X)(\epsilon_Y+p_Y)}{r^2}-16\pi^2e^{5\Lambda-3\Phi}\left[\gamma_Xp_X(\epsilon_Y+p_Y)^2+\gamma_Yp_Y(\epsilon_X+p_X)^2\right]\\
    &-16\pi^2e^{5\Lambda-3\Phi}\left[\gamma_Xp_X(\epsilon_Y+p_Y)+\gamma_Yp_Y(\epsilon_X+p_X)\right]\left[\sum_{Z\ne X,Y}(\epsilon_Z+p_Z)\right] \\
    &+16\pi^2e^{5\Lambda-3\Phi}(\epsilon_X+p_X)(\epsilon_Y+p_Y)\left(\sum_{Z\ne X,Y}\gamma_Zp_Z\right) \ ,
\end{split}
\end{equation}
\end{widetext}
Note that the function $S_{XY}$ is symmetric under index interchange -- i.e. $S_{XY}=S_{YX}$.
Moreover, for the specific case of two fluids, all the sums over $Z\ne X,Y$ are empty, so only the first line contributes, giving us exactly Eq.~\eqref{eq:Sb}.

To conclude, we can write Eq.~\eqref{eq:multi2} in the form of Eq.~\eqref{eq:multisl} as follows:
$\hat A_X\zeta_X=A_X(r)\zeta_X$ and $A_X$ is given by Eq.~\eqref{eq:A_app}; $\hat C_X\zeta_X$ is  given by Eq.~\eqref{eq:inter-multi-eom-appendix} with the functions $P_X,Q_X$ given by Eqs.~\eqref{eq:P_app} and \eqref{eq:Q_multi} respectively; and $\hat D_{XY}$ is given by Eq.~\eqref{eq:inter-fluid-d-appendix} with the functions $f_{XY},S_{XY}$ as given in Eqs.~\eqref{eq:fxy} and \eqref{eq:S_multi} respectively.

%====================================================================

\section{Bound for $a_4$ term}\label{app:lastbound}

Here, we show that the $a_4$ term in Eq.~\eqref{eq:a_4} is bounded from below by the expression in Eq.~\eqref{eq:a4}.
First, let us expand $a_4$ and separate it into three terms ($a_4^1,a_4^2,a_4^0$), where
\begin{align}
    a_4^{1,2}&=-\int dr\, \frac{d}{dr}f_{1,2}(\zeta_{1,2})^2 \ ,\\
     a_4^0&=-2\int dr\,f_1\zeta_1'\zeta_2+f_2\zeta_1\zeta_2' \ , \label{eq:appa4}
\end{align}
where $\zeta_X' \equiv \frac{d}{dr}\zeta_X$ and $f_X'\equiv \frac{d}{dr}f_X$.

To obtain a bound for $a_4^{1,2}$, we simply take the derivative of $f_X$ according to Eq.~\eqref{eq:fx}.
When doing this, we need to take the derivative of $\gamma_Xp_X$, which (by virtue of Eq.~\eqref{eq:relc}) is
\begin{equation}
\begin{split}
    \frac{d}{dr}(\gamma_Xp_X)=&\frac{d}{dr}[c_{s,X}^2(\epsilon_X+p_X)]\\
    =&\Phi'(\epsilon_X+p_X)\left[1+\frac{(\epsilon_X+p_X)}{\gamma_Xp_X}\right]c_{s,X}^2\\
    &+(\epsilon_X+p_X)\left(\frac{d}{dr}c_{s,X}^2\right) \ .
\end{split}
\end{equation}
Then, using Eq.~\eqref{eq:2diff}, we find that the derivative of $f_X$ is
\begin{widetext}
\begin{equation}\label{eq:dfx}
\begin{split}
    \frac{d}{dr}f_X=&-\frac{4\pi}{r} e^{3\Lambda-3\Phi}(\epsilon_X+p_X)(\epsilon_Y+p_Y)c_{s,X}^2\left[\frac1r-\Phi'\left(2+\frac{\epsilon_X+p_X}{\gamma_Xp_X}+\frac{\epsilon_Y+p_Y}{\gamma_Yp_Y}\right)\right]\\
    &+48\pi^2 e^{5\Lambda-3\Phi}(\epsilon_X+p_X)(\epsilon_Y+p_Y)c_{s,X}^2\left[\sum_Z(\epsilon_Z+p_Z)\right]+4\pi e^{3\Lambda-3\Phi}\frac{(\epsilon_X+p_X)(\epsilon_Y+p_Y)}{r}\left(\frac{d}{dr}c_{s,X}^2\right) \ ,
\end{split}
\end{equation}
\end{widetext}
where we separated the result into two lines, with the first line being negative definite,
while, in the second line, the first term is positive definite and the second term can be either negative or positive.
Since the first terms are negative definite, their resulting terms in $a_4^{1,2}$ will be positive definite, and thus, bounded by zero.
Hence, we have that $a_4^1$ is bounded by
\begin{equation}
\label{eq:sup-inter-a41}
\begin{split}
    a_4^1\geq-\int dr\, &e^{\Lambda-3\Phi}\frac{(\epsilon_1+p_1)}{r^2}(\zeta_1)^2\\
    &\times\Bigg[4\pi re^{2\Lambda}(\epsilon_2+p_2)\left(\frac{d}{dr}c_{s,1}^2\right)\\
    &+48\pi^2r^2e^{4\Lambda}(\epsilon_2+p_2)c_{s,1}^2\left(\sum_Z(\epsilon_Z+p_Z)\right)\Bigg] \ ,
\end{split}
\end{equation}
and a similar result holds for $a_4^2$.
Then, similar to the method in Sec.~\ref{ssec:low},
if the term inside the large square brackets has a supremum, Eq.~\eqref{eq:sup-inter-a41} gives a lower bound for $a_4^1$.
Using that $(\epsilon_X+p_X)\leq \rho_X^{\rm max},c_{s,X}^2\leq 1$ and $\mathcal I_X$ is given by Eq.~\eqref{eq:intex}, we obtain
\begin{equation}
\begin{split}
    a_4^1\geq &-4\pi \mathcal I_1\rho_2^{\rm max}\sup\left(re^{2\Lambda}\frac{d}{dr}c_{s,1}^2\right)\\
    &-48\pi^2\mathcal I_1\rho_2^{\rm max}(\rho_1^{\rm max}+\rho_2^{\rm max})\sup\left(r^2e^{4\Lambda}\right) \ ,
\end{split}
\end{equation}
and similarly for $a_4^2$.
Then, the sum of the two terms is bounded by
\begin{equation}
\begin{split}
    a_4^1+a_4^2\geq &-48\pi^2(\rho_1^{\rm max}+\rho_2^{\rm max})\sup(r^2e^{4\Lambda})(\mathcal I_1\rho_2^{\rm max}+\mathcal I_2\rho_1^{\rm max})\\
    &-4\pi \mathcal I_1\rho_2^{\rm max}\sup(re^{2\Lambda}\frac{d}{dr}c_{s,1}^2)\\
    &-4\pi \mathcal I_2\rho_1^{\rm max}\sup(re^{2\Lambda}\frac{d}{dr}c_{s,2}^2) \ .
\end{split}
\end{equation}
Since $\mathcal I_X$ and $\rho_X$ are both positive definite, in the case that both $\sup(re^{2\Lambda}\frac{d}{dr}c_{s,X}^2)$ are also positive,
the following inequality holds:
\begin{equation}\label{eq:semicomb}
\begin{split}
    &-\left[\mathcal I_1\rho_2^{\rm max}\sup(re^{2\Lambda}\frac{d}{dr}c_{s,1}^2)+\mathcal I_2\rho_1^{\rm max}\sup(re^{2\Lambda}\frac{d}{dr}c_{s,2}^2)\right]\\
    \geq &-(\mathcal I_1\rho_2^{\rm max}+\mathcal I_2\rho_1^{\rm max})\left[\sup(re^{2\Lambda}\frac{d}{dr}c_{s,1}^2)+\sup(re^{2\Lambda}\frac{d}{dr}c_{s,2}^2) \right] \ .
\end{split}
\end{equation}
Generally speaking, $\frac{d}{dr}c_{s,X}^2$ can be negative everywhere.
However, this does not pose a problem because in the case that $\sup(re^{2\Lambda}\frac{d}{dr}c_{s,X}^2)\leq 0$, the expression multiplying it will be bounded from below by zero.
We can express this choice by defining $\alpha_X$ as in Eq.~\eqref{eq:alphax}, so that we find
\begin{equation}\label{eq:a1a2}
\begin{split}
    a_4^1+a_4^2\geq &    
    -48\pi^2(\mathcal I_1\rho_2^{\rm max}+\mathcal I_2\rho_1^{\rm max})(\rho_1^{\rm max}+\rho_2^{\rm max})\sup(re^{4\Lambda})\\
    &-4\pi(\mathcal I_1\rho_2^{\rm max}+\mathcal I_2\rho_1^{\rm max})(\alpha_1+\alpha_2) \ .
\end{split}
\end{equation}

Let us now focus on the $a_4^0$ term and define $f_q$ when given a function $c_q^2(r)\in[0,1]$ by
\begin{equation}\label{eq:fq}
    f_q=4\pi re^{3\Lambda-3\Phi}\frac{(\epsilon_1+p_1)(\epsilon_2+p_2)}{r}c_q^2 \ .
\end{equation}
In particular, for $c_q^2=c_{s,X}^2$, we find that $f_q=f_X$.
Notice that, if we integrate by parts our definition for $a_4^0$ in Eq.~\eqref{eq:appa4} and if we apply the boundary conditions, we find that
\begin{equation}
    a_4^0=2\int dr\, f_1\zeta_1\zeta_2'+f_2\zeta_1'\zeta_2+2\int dr\, (f_1+f_2)'(\zeta_1\zeta_2) \ .
\end{equation}
The second integral can be integrated by parts, and defining the mean speed of sound $c_A^2=\frac12(c_{s,1}^2+c_{s,2}^2)$ and its respective $f_A$ (by Eq.~\eqref{eq:fq}), we then find that $a_4^0$ is also expressible as
\begin{equation}\label{eq:alta4}
    a_4^0=2\int dr\, f_1\zeta_1\zeta_2'+f_2\zeta_1'\zeta_2-4\int dr\, f_A(\zeta_1\zeta_2)'
\end{equation}
Now let us fix $c_q^2$, and suppose we have a vector $(\zeta_1,\zeta_2)$, such that
\begin{equation}\label{eq:cond}
    \int dr\, [(f_1-f_q)\zeta_1'\zeta_2+(f_2-f_q)\zeta_1\zeta_2']\geq 0 \ .
\end{equation}
Then, from the definition of $a_4^0$ in Eq.~\eqref{eq:appa4}, 
\begin{equation}\label{eq:a401}
    a_4^0\geq -2\int dr\, f_q(\zeta_1\zeta_2)' \ .
\end{equation}
Alternatively, suppose the vector satisfies the negation of Eq.~\eqref{eq:cond} (i.e. changing $\geq$ to $\leq$).
Then, if we substitute this into the alternative form of $a_4^0$ in Eq.~\eqref{eq:alta4}, we find
\begin{equation}\label{eq:a402}
    a_4^0\geq 2\int dr\, f_q(\zeta_1\zeta_2)'-4\int dr\, f_A(\zeta_1\zeta_2)' \ .
\end{equation}
In particular, if we choose, $f_q=f_A$, both Eqs.~\eqref{eq:a401} and \eqref{eq:a402} are the same.
Since either Eq.~\eqref{eq:cond} or its negation must be satisfied, we find that for any vector $(\zeta_1,\zeta_2)$, we must have
\begin{equation}\label{eq:a40}
    a_4^0\geq -2\int dr\, f_A(\zeta_1\zeta_2)'=\int dr\, (f_1'+f_2')(\zeta_1\zeta_2) \ .
\end{equation}

When finding a lower bound for $a_4^{1,2}$, we saw how to treat terms like $\zeta_1'$, while in Sec.~\ref{ssec:low} we saw how to treat terms that are a multiplication of $\zeta_1\zeta_2$.
Then, combining both approaches together, we can find a bound for $a_4^0$.
To accomplish this, let us look back at Eq.~\eqref{eq:dfx}, where we have terms that are either positive definite, negative definite or depend on $\frac{d}{dr}c_{s,X}^2$.
Then, we can write
\begin{equation}
    f_X'(r)=g_X(r)-h_X(r)+l_X(r) \ ,
\end{equation}
such that $g_X(r),h_X(r)$ are both positive definite and
\begin{equation}
    l_X(r)=4\pi e^{3\Lambda-3\Phi}\frac{(\epsilon_X+p_X)(\epsilon_Y+p_Y)}{r}\frac{d}{dr}c_{s,X}^2 \ .
\end{equation}
In the case that $\frac{d}{dr}c_{s,X}^2$ is negative definite (e.g. for a polytrope),
then the $l_X(r)$ term can just be included within $h_X(r)$ and the following calculations become easier.
Let $g=g_1+g_2,h=h_1+h_2,l=l_1+l_2$, so that we then obtain
\begin{equation}
    \begin{split}
        a_4^0\geq &\;\int dr\, (g-h+l)(\zeta_1\zeta_2)\\
        \geq& \frac12\int dr\, \left[g(\zeta_1+\zeta_2)^2+h(\zeta_1-\zeta_2)^2\right]\\
         &-\frac12\int dr\, (g+h)\left[(\zeta_1)^2+(\zeta_2)^2\right]+\int dr\, l\zeta_1\zeta_2 \ .
    \end{split}
\end{equation}
The integrand $g(\zeta_1+\zeta_2)^2+h(\zeta_1-\zeta_2)^2$ is positive definite, and hence we are left with only the second line to bound.
Let us explicitly write $(g,h)$ (from Eq.~\eqref{eq:dfx}) as
\begin{equation}
\begin{split}
    g(r)&=96\pi^2 e^{5\Lambda-3\Phi}(\epsilon_1+p_1)(\epsilon_2+p_2)c_A^2\left[\sum_Z(\epsilon_Z+p_Z)\right] \ ,\\
    h(r)&=\frac{8\pi}{r}e^{3\Lambda-3\Phi}(\epsilon_1+p_1)(\epsilon_2+p_2)c_A^2\left[\frac{1}{r}-2\Phi'\left(1+\frac{1}{c_H^2}\right)\right] \ , 
\end{split}
\end{equation}
where $c_H^2$ is the harmonic mean of the speeds of sound, i.e. $\frac{2}{c_H^2}=\frac{1}{c_{s,1}^2}+\frac{1}{c_{s,2}^2}$.
Carrying out the procedure of taking the supremum, we then find
\begin{equation}\label{eq:b22}
\begin{split}
    a_4^0\geq &-4\pi\left(\mathcal I_1\rho_2^{\rm max}+\mathcal I_2\rho_1^{\rm max}\right)
    \\ &\times \sup\left[c_A^2e^{2\Lambda}(1-2r\Phi')-2re^{2\Lambda}\Phi'\frac{c_A^2}{c_H^2}\right]\\
    &-48\pi^2\left(\mathcal I_1\rho_2^{\rm max}+\mathcal I_2\rho_1^{\rm max}\right)(\rho_1^{\rm max}+\rho_2^{\rm max})\sup(c_A^2r^2e^{4\Lambda})\\
    &+\int dr\, l\zeta_1\zeta_2 \ .
\end{split}
\end{equation}
Unfortunately, since $c_A^2\geq c_H^2$, we cannot eliminate the $\frac{c_A^2}{c_H^2}$ term using inequalities.
Because of this, we cannot put an upper bound on $\sup\left(\frac{c_{s,1}^2}{c_{s,2}^2}+\frac{c_{s,2}^2}{c_{s,1}^2}\right)$.
Thus, for this supremum to exist, we must require $c_{s,1}^2,c_{s,2}^2>0$ at least on $(0,R_1)$.

Finally, we are left with the $l$ term.
As mentioned before, this term is negative definite in the simplest models, and
hence we treat it similarly to $h(r)$ by writing
\begin{equation}\label{eq:ll}
    \int dr\, l\zeta_1\zeta_2=\frac12\int dr\, l(\zeta_1^2+\zeta_2^2)-\frac12\int dr\, l(\zeta_1-\zeta_2)^2 \ .
\end{equation}
We write out explicitly the first term and put a lower bound on it in the following way:
\begin{equation}\label{eq:beta}
    \begin{split}
        &4\pi \int dr\, re^{2\Lambda}(\epsilon_1+p_1)(\epsilon_2+p_2)\frac{e^{\Lambda-3\Phi}}{r^2}
        \\ &\quad\quad\times
        \left(\frac{d}{dr}c_{s,1}^2+\frac{d}{dr}c_{s,2}^2\right) [(\zeta_1)^2+(\zeta_2)^2]\\
        \geq &4\pi \inf\left[re^{2\Lambda}\left(\frac{d}{dr}c_{s,1}^2+\frac{d}{dr}c_{s,2}^2\right)\right]\left(\mathcal I_1\rho_2^{\rm max}+\mathcal I_2\rho_1^{\rm max}\right) \ .
    \end{split}
\end{equation}
To be able to write out the above as proportional to $\mathcal I_1+\mathcal I_2$ (which we later explain how to do through Eq.~\eqref{eq:combine}),
we want $\inf[re^{2\Lambda}(\frac{d}{dr}c_{s,1}^2+\frac{d}{dr}c_{s,2}^2)]$ to be negative definite.
However, in the case that it is positive, then it is bounded from below by zero.
This is similar to what happened in Eq.~\eqref{eq:semicomb}, where we introduced $\alpha_X$ as in Eq.~\eqref{eq:alphax}.
In the same way as before, we define $\beta_X$ as in Eq.~\eqref{eq:betax} and $\beta_A$ to be the same expression but with the mean of the speed of sounds $c_A^2$ instead of $c_{s,X}^2$.

Let us look next at the second term of Eq.~\eqref{eq:ll}.
Using supremums, we find that
\begin{equation}\label{eq:lastsquare}
\begin{split}
    &-4\pi\int dr\, re^{2\Lambda}(\epsilon_1+p_1)(\epsilon_2+p_2)\frac{e^{\Lambda-3\Phi}}{r^2}
    \\ &\quad\quad\times
    \left[\frac{d}{dr}c_{s,1}^2+\frac{d}{dr}c_{s,2}^2\right](\zeta_1-\zeta_2)^2\\
    \geq&-8\pi\alpha_A\int dr\, \frac{e^{\Lambda-3\Phi}}{r^2}(\epsilon_1+p_1)(\epsilon_2+p_2)(\zeta_1-\zeta_2)^2 \ ,
\end{split}
\end{equation}
where $\alpha_A$ is given by Eq.~\eqref{eq:alphax} but with $c_A^2$ replacing $c_{s,X}^2$.
In the case that $\frac{d}{dr}c_{s,1}^2+\frac{d}{dr}c_{s,2}^2$ is negative definite, the expression is bounded by zero.
Otherwise, we have to expand $(\zeta_1-\zeta_2)^2=[(\zeta_1)^2+(\zeta_2)^2]-2\zeta_1\zeta_2$.
The $[(\zeta_1)^2+(\zeta_2)^2]$ results in a proportionality to $\mathcal I_1\rho_2^{\rm max}+\mathcal I_2\rho_1^{\rm max}$ after integrating, and hence, we obtain
\begin{equation}\label{eq:finally}
\begin{split}
    &-8\pi \alpha_A\int dr\, \frac{e^{\Lambda-3\Phi}}{r^2}(\epsilon_1+p_1)(\epsilon_2+p_2)[(\zeta_1)^2+(\zeta_2)^2]\\
    \geq&-8\pi \alpha_A(\mathcal I_1\rho_2^{\rm max}+\mathcal I_2\rho_1^{\rm max}) \ .
\end{split}
\end{equation}
The term in Eq.~\eqref{eq:beta} is effectively the same as that of Eq.~\eqref{eq:finally} but with $-\alpha_A\to \beta_A$. 

We once again obtain a term with $+2\zeta_1\zeta_2$ from expanding $(\zeta_1-\zeta_2)^2$ in Eq.~\eqref{eq:lastsquare}.
We deal with this by using that $2\zeta_1\zeta_2=(\zeta_1+\zeta_2)^2-[(\zeta_1)^2+(\zeta_2)^2]$, resulting in
\begin{equation}
\begin{split}
    &8\pi \alpha_A\int dr\, \frac{e^{\Lambda-3\Phi}}{r^2}(\epsilon_1+p_1)(\epsilon_2+p_2)2\zeta_1\zeta_2\\
    =&8\pi\alpha_A\int dr\, \frac{e^{\Lambda-3\Phi}}{r^2}(\epsilon_1+p_1)(\epsilon_2+p_2)\{(\zeta_1+\zeta_2)^2-[(\zeta_1)^2+(\zeta_2)^2]\}\\
    \geq&-8\pi \alpha_A\int dr\, \frac{e^{\Lambda-3\Phi}}{r^2}(\epsilon_1+p_1)(\epsilon_2+p_2)[(\zeta_1)^2+(\zeta_2)^2] \ .
\end{split}
\end{equation}
The inequality from the second to the third line results from the fact that $\alpha_A$ is positive definite.
Moreover, this term is exactly that of Eq.~\eqref{eq:finally}.
Then, the expression in Eq.~\eqref{eq:ll} results in 
\begin{equation}\label{eq:l_term}
    \int dr\, l\zeta_1\zeta_2\geq 4\pi (\beta_A-2\alpha_A)(\mathcal I_1\rho_2^{\rm max}+\mathcal I_1\rho_1^{\rm max}) \ .
\end{equation}

We see from Eqs.~\eqref{eq:b22} and \eqref{eq:l_term} that the bound on $a_4^0$ is proportional to $\mathcal I_1\rho_2^{\rm max}+\mathcal I_2\rho_1^{\rm max}$ and it is multiplying a negative definite term.
In fact, by considering Eq.~\eqref{eq:a1a2} for the bound on $a_4^1+a_4^2$, the full expression for $a_4$ is proportional to $\mathcal I_1\rho_2^{\rm max}+\mathcal I_2\rho_1^{\rm max}$,
with the multiplying term being negative definite.
Hence, similarly to Eq.~\eqref{eq:semicomb}, since $\mathcal I_X$ and $\rho_X^{\rm max}$ are positive definite, we have
\begin{equation}\label{eq:combine}
    -(\mathcal I_1\rho_2^{\rm max}+\mathcal I_2\rho_1^{\rm max})\geq -(\rho_1^{\rm max}+\rho_2^{\rm max})(\mathcal I_1+\mathcal I_2) \ .
\end{equation}
Then, since $a_4$ is bounded by a positive term multiplying $-(\mathcal I_1\rho_2^{\rm max}+\mathcal I_2\rho_1^{\rm max})$,
by Eq.~\eqref{eq:combine}, it is also bounded by the same term multiplying $(\rho_1^{\rm max}+\rho_2^{\rm max})(\mathcal I_1+\mathcal I_2)$.
Therefore, taking all terms into account from Eqs.~\eqref{eq:a1a2}, \eqref{eq:b22} and \eqref{eq:l_term},
we find that the bound we want for $a_4$ is
\begin{widetext}
\begin{equation}\label{eq:fulla4}
\begin{split}
    \frac{a_4}{\mathcal I_1+\mathcal I_2}\geq &-48\pi^2(\rho_1^{\rm max}+\rho_2^{\rm max})^2\left[\sup(r^2e^{4\Lambda})+\sup(r^2c_A^2e^{4\Lambda})\right]\\
    &+4\pi (\rho_1^{\rm max}+\rho_2^{\rm max})\left(\beta_A-2\alpha_A-\alpha_1-\alpha_2\right)\\
    &-4\pi(\rho_1^{\rm max}+\rho_2^{\rm max})\sup\left[c_A^2e^{2\Lambda}(1-2r\Phi')-2re^{2\Lambda}\Phi'\frac{c_A^2}{c_H^2}\right] \ .
\end{split}
\end{equation}
\end{widetext}
If we take $c_A^2\leq 1$ and $-2\alpha_A \geq -(\alpha_1+\alpha_2),2\beta_A\geq \beta_1+\beta_2$,
we find the less stringent bound given in Eq.~\eqref{eq:a4}.

To obtain the correct one-fluid limit for $a_4$, we cannot simply take the one-fluid limit on the right-hand side of Eq.~\eqref{eq:a4}.
Instead, we must use the fact that the bound on $a_4$ is directly proportional to $\mathcal I_1\rho_2^{\rm max}+\mathcal I_2\rho_1^{\rm max}$
(by Eqs.~\eqref{eq:a1a2}, \eqref{eq:b22} and \eqref{eq:l_term}).
Then, taking the one-fluid limit on the resulting expression, we use that $\mathcal I_1\to 0,\rho_1^{\rm max}\to 0$, and thus, the bound on $a_4$ in the one-fluid limit is just zero, as expected.

\bibliography{Paper/references}

%apsrev4-2.bst 2019-01-14 (MD) hand-edited version of apsrev4-1.bst
%Control: key (0)
%Control: author (8) initials jnrlst
%Control: editor formatted (1) identically to author
%Control: production of article title (0) allowed
%Control: page (0) single
%Control: year (1) truncated
%Control: production of eprint (0) enabled
\begin{thebibliography}{63}%
\makeatletter
\providecommand \@ifxundefined [1]{%
 \@ifx{#1\undefined}
}%
\providecommand \@ifnum [1]{%
 \ifnum #1\expandafter \@firstoftwo
 \else \expandafter \@secondoftwo
 \fi
}%
\providecommand \@ifx [1]{%
 \ifx #1\expandafter \@firstoftwo
 \else \expandafter \@secondoftwo
 \fi
}%
\providecommand \natexlab [1]{#1}%
\providecommand \enquote  [1]{``#1''}%
\providecommand \bibnamefont  [1]{#1}%
\providecommand \bibfnamefont [1]{#1}%
\providecommand \citenamefont [1]{#1}%
\providecommand \href@noop [0]{\@secondoftwo}%
\providecommand \href [0]{\begingroup \@sanitize@url \@href}%
\providecommand \@href[1]{\@@startlink{#1}\@@href}%
\providecommand \@@href[1]{\endgroup#1\@@endlink}%
\providecommand \@sanitize@url [0]{\catcode `\\12\catcode `\$12\catcode
  `\&12\catcode `\#12\catcode `\^12\catcode `\_12\catcode `\%12\relax}%
\providecommand \@@startlink[1]{}%
\providecommand \@@endlink[0]{}%
\providecommand \url  [0]{\begingroup\@sanitize@url \@url }%
\providecommand \@url [1]{\endgroup\@href {#1}{\urlprefix }}%
\providecommand \urlprefix  [0]{URL }%
\providecommand \Eprint [0]{\href }%
\providecommand \doibase [0]{https://doi.org/}%
\providecommand \selectlanguage [0]{\@gobble}%
\providecommand \bibinfo  [0]{\@secondoftwo}%
\providecommand \bibfield  [0]{\@secondoftwo}%
\providecommand \translation [1]{[#1]}%
\providecommand \BibitemOpen [0]{}%
\providecommand \bibitemStop [0]{}%
\providecommand \bibitemNoStop [0]{.\EOS\space}%
\providecommand \EOS [0]{\spacefactor3000\relax}%
\providecommand \BibitemShut  [1]{\csname bibitem#1\endcsname}%
\let\auto@bib@innerbib\@empty
%</preamble>
\bibitem [{\citenamefont {{Bardeen}}\ \emph {et~al.}(1966)\citenamefont
  {{Bardeen}}, \citenamefont {{Thorne}},\ and\ \citenamefont
  {{Meltzer}}}]{BardeenThorneMeltzer1966}%
  \BibitemOpen
  \bibfield  {author} {\bibinfo {author} {\bibfnamefont {J.~M.}\ \bibnamefont
  {{Bardeen}}}, \bibinfo {author} {\bibfnamefont {K.~S.}\ \bibnamefont
  {{Thorne}}},\ and\ \bibinfo {author} {\bibfnamefont {D.~W.}\ \bibnamefont
  {{Meltzer}}},\ }\bibfield  {title} {\bibinfo {title} {{A Catalogue of Methods
  for Studying the Normal Modes of Radial Pulsation of General-Relativistic
  Stellar Models}},\ }\href {https://doi.org/10.1086/148791} {\bibfield
  {journal} {\bibinfo  {journal} {apj}\ }\textbf {\bibinfo {volume} {145}},\
  \bibinfo {pages} {505} (\bibinfo {year} {1966})}\BibitemShut {NoStop}%
\bibitem [{\citenamefont {{Kip S. Thorne, B Kent Harrison, Masani Wakano, John
  Archibald Wheeler}}(1965)}]{HTWW1965}%
  \BibitemOpen
  \bibfield  {author} {\bibinfo {author} {\bibnamefont {{Kip S. Thorne, B Kent
  Harrison, Masani Wakano, John Archibald Wheeler}}},\ }\bibinfo {title}
  {Gravitational theory and gravitational collapse}\ (\bibinfo  {publisher}
  {The University of Chicago Press},\ \bibinfo {year} {1965})\ Chap.\ \bibinfo
  {chapter} {Appendix B: Theory of the Stability of Equilibrium Configurations:
  Derivation from Energy Considerations}, pp.\ \bibinfo {pages}
  {156--162}\BibitemShut {NoStop}%
\bibitem [{\citenamefont {{Chandrasekhar}}(1964)}]{Chandrasekhar1964}%
  \BibitemOpen
  \bibfield  {author} {\bibinfo {author} {\bibfnamefont {S.}~\bibnamefont
  {{Chandrasekhar}}},\ }\bibfield  {title} {\bibinfo {title} {{The Dynamical
  Instability of Gaseous Masses Approaching the Schwarzschild Limit in General
  Relativity.}},\ }\href {https://doi.org/10.1086/147938} {\bibfield  {journal}
  {\bibinfo  {journal} {apj}\ }\textbf {\bibinfo {volume} {140}},\ \bibinfo
  {pages} {417} (\bibinfo {year} {1964})}\BibitemShut {NoStop}%
\bibitem [{\citenamefont {Deliyergiyev}\ \emph {et~al.}(2019)\citenamefont
  {Deliyergiyev}, \citenamefont {Del~Popolo}, \citenamefont {Tolos},
  \citenamefont {Le~Delliou}, \citenamefont {Lee},\ and\ \citenamefont
  {Burgio}}]{Deliyergiyev2019}%
  \BibitemOpen
  \bibfield  {author} {\bibinfo {author} {\bibfnamefont {M.}~\bibnamefont
  {Deliyergiyev}}, \bibinfo {author} {\bibfnamefont {A.}~\bibnamefont
  {Del~Popolo}}, \bibinfo {author} {\bibfnamefont {L.}~\bibnamefont {Tolos}},
  \bibinfo {author} {\bibfnamefont {M.}~\bibnamefont {Le~Delliou}}, \bibinfo
  {author} {\bibfnamefont {X.}~\bibnamefont {Lee}},\ and\ \bibinfo {author}
  {\bibfnamefont {F.}~\bibnamefont {Burgio}},\ }\bibfield  {title} {\bibinfo
  {title} {Dark compact objects: An extensive overview},\ }\href
  {https://doi.org/10.1103/PhysRevD.99.063015} {\bibfield  {journal} {\bibinfo
  {journal} {Phys. Rev. D}\ }\textbf {\bibinfo {volume} {99}},\ \bibinfo
  {pages} {063015} (\bibinfo {year} {2019})}\BibitemShut {NoStop}%
\bibitem [{\citenamefont {Rafiei~Karkevandi}\ \emph {et~al.}(2022)\citenamefont
  {Rafiei~Karkevandi}, \citenamefont {Shakeri}, \citenamefont {Sagun},\ and\
  \citenamefont {Ivanytskyi}}]{Rafiei2022}%
  \BibitemOpen
  \bibfield  {author} {\bibinfo {author} {\bibfnamefont {D.}~\bibnamefont
  {Rafiei~Karkevandi}}, \bibinfo {author} {\bibfnamefont {S.}~\bibnamefont
  {Shakeri}}, \bibinfo {author} {\bibfnamefont {V.}~\bibnamefont {Sagun}},\
  and\ \bibinfo {author} {\bibfnamefont {O.}~\bibnamefont {Ivanytskyi}},\
  }\bibfield  {title} {\bibinfo {title} {Bosonic dark matter in neutron stars
  and its effect on gravitational wave signal},\ }\href
  {https://doi.org/10.1103/PhysRevD.105.023001} {\bibfield  {journal} {\bibinfo
   {journal} {Phys. Rev. D}\ }\textbf {\bibinfo {volume} {105}},\ \bibinfo
  {pages} {023001} (\bibinfo {year} {2022})}\BibitemShut {NoStop}%
\bibitem [{\citenamefont {Bell}\ \emph {et~al.}(2020)\citenamefont {Bell},
  \citenamefont {Busoni}, \citenamefont {Robles},\ and\ \citenamefont
  {Virgato}}]{bell_improved_2020}%
  \BibitemOpen
  \bibfield  {author} {\bibinfo {author} {\bibfnamefont {N.~F.}\ \bibnamefont
  {Bell}}, \bibinfo {author} {\bibfnamefont {G.}~\bibnamefont {Busoni}},
  \bibinfo {author} {\bibfnamefont {S.}~\bibnamefont {Robles}},\ and\ \bibinfo
  {author} {\bibfnamefont {M.}~\bibnamefont {Virgato}},\ }\bibfield  {title}
  {\bibinfo {title} {Improved treatment of dark matter capture in neutron
  stars},\ }\href {https://doi.org/10.1088/1475-7516/2020/09/028} {\bibfield
  {journal} {\bibinfo  {journal} {Journal of Cosmology and Astroparticle
  Physics}\ }\textbf {\bibinfo {volume} {2020}}\bibinfo  {number} { (09)},\
  \bibinfo {pages} {028}}\BibitemShut {NoStop}%
\bibitem [{\citenamefont {Emma}\ \emph {et~al.}(2022)\citenamefont {Emma},
  \citenamefont {Schianchi}, \citenamefont {Pannarale}, \citenamefont {Sagun},\
  and\ \citenamefont {Dietrich}}]{Emma2022}%
  \BibitemOpen
\bibfield  {number} {  }\bibfield  {author} {\bibinfo {author} {\bibfnamefont
  {M.}~\bibnamefont {Emma}}, \bibinfo {author} {\bibfnamefont {F.}~\bibnamefont
  {Schianchi}}, \bibinfo {author} {\bibfnamefont {F.}~\bibnamefont
  {Pannarale}}, \bibinfo {author} {\bibfnamefont {V.}~\bibnamefont {Sagun}},\
  and\ \bibinfo {author} {\bibfnamefont {T.}~\bibnamefont {Dietrich}},\
  }\bibfield  {title} {\bibinfo {title} {{Numerical Simulations of Dark Matter
  Admixed Neutron Star Binaries}},\ }\href
  {https://doi.org/10.3390/particles5030024} {\bibfield  {journal} {\bibinfo
  {journal} {Particles}\ }\textbf {\bibinfo {volume} {5}},\ \bibinfo {pages}
  {273} (\bibinfo {year} {2022})},\ \Eprint {https://arxiv.org/abs/2206.10887}
  {arXiv:2206.10887 [gr-qc]} \BibitemShut {NoStop}%
\bibitem [{\citenamefont {Das}\ \emph {et~al.}(2021)\citenamefont {Das},
  \citenamefont {Kumar},\ and\ \citenamefont {Patra}}]{DasKumarPatra2021}%
  \BibitemOpen
  \bibfield  {author} {\bibinfo {author} {\bibfnamefont {H.~C.}\ \bibnamefont
  {Das}}, \bibinfo {author} {\bibfnamefont {A.}~\bibnamefont {Kumar}},\ and\
  \bibinfo {author} {\bibfnamefont {S.~K.}\ \bibnamefont {Patra}},\ }\bibfield
  {title} {\bibinfo {title} {Dark matter admixed neutron star as a possible
  compact component in the gw190814 merger event},\ }\href
  {https://doi.org/10.1103/PhysRevD.104.063028} {\bibfield  {journal} {\bibinfo
   {journal} {Phys. Rev. D}\ }\textbf {\bibinfo {volume} {104}},\ \bibinfo
  {pages} {063028} (\bibinfo {year} {2021})}\BibitemShut {NoStop}%
\bibitem [{\citenamefont {Gresham}\ and\ \citenamefont
  {Zurek}(2019)}]{Gresham2019}%
  \BibitemOpen
  \bibfield  {author} {\bibinfo {author} {\bibfnamefont {M.~I.}\ \bibnamefont
  {Gresham}}\ and\ \bibinfo {author} {\bibfnamefont {K.~M.}\ \bibnamefont
  {Zurek}},\ }\bibfield  {title} {\bibinfo {title} {Asymmetric dark stars and
  neutron star stability},\ }\href {https://doi.org/10.1103/PhysRevD.99.083008}
  {\bibfield  {journal} {\bibinfo  {journal} {Phys. Rev. D}\ }\textbf {\bibinfo
  {volume} {99}},\ \bibinfo {pages} {083008} (\bibinfo {year}
  {2019})}\BibitemShut {NoStop}%
\bibitem [{\citenamefont {Lee}\ \emph {et~al.}(2021)\citenamefont {Lee},
  \citenamefont {chung Chu},\ and\ \citenamefont {Lin}}]{Lee_2021}%
  \BibitemOpen
  \bibfield  {author} {\bibinfo {author} {\bibfnamefont {B.~K.~K.}\
  \bibnamefont {Lee}}, \bibinfo {author} {\bibfnamefont {M.}~\bibnamefont
  {chung Chu}},\ and\ \bibinfo {author} {\bibfnamefont {L.-M.}\ \bibnamefont
  {Lin}},\ }\bibfield  {title} {\bibinfo {title} {Could the gw190814 secondary
  component be a bosonic dark matter admixed compact star?},\ }\href
  {https://doi.org/10.3847/1538-4357/ac2735} {\bibfield  {journal} {\bibinfo
  {journal} {The Astrophysical Journal}\ }\textbf {\bibinfo {volume} {922}},\
  \bibinfo {pages} {242} (\bibinfo {year} {2021})}\BibitemShut {NoStop}%
\bibitem [{\citenamefont {Ciarcelluti}\ and\ \citenamefont
  {Sandin}(2011)}]{CIARCELLUTI2011}%
  \BibitemOpen
  \bibfield  {author} {\bibinfo {author} {\bibfnamefont {P.}~\bibnamefont
  {Ciarcelluti}}\ and\ \bibinfo {author} {\bibfnamefont {F.}~\bibnamefont
  {Sandin}},\ }\bibfield  {title} {\bibinfo {title} {Have neutron stars a dark
  matter core?},\ }\href
  {https://doi.org/https://doi.org/10.1016/j.physletb.2010.11.021} {\bibfield
  {journal} {\bibinfo  {journal} {Physics Letters B}\ }\textbf {\bibinfo
  {volume} {695}},\ \bibinfo {pages} {19} (\bibinfo {year} {2011})}\BibitemShut
  {NoStop}%
\bibitem [{\citenamefont {Rutherford}\ \emph {et~al.}(2023)\citenamefont
  {Rutherford}, \citenamefont {Raaijmakers}, \citenamefont
  {Prescod-Weinstein},\ and\ \citenamefont {Watts}}]{Rutherford2023}%
  \BibitemOpen
  \bibfield  {author} {\bibinfo {author} {\bibfnamefont {N.}~\bibnamefont
  {Rutherford}}, \bibinfo {author} {\bibfnamefont {G.}~\bibnamefont
  {Raaijmakers}}, \bibinfo {author} {\bibfnamefont {C.}~\bibnamefont
  {Prescod-Weinstein}},\ and\ \bibinfo {author} {\bibfnamefont
  {A.}~\bibnamefont {Watts}},\ }\bibfield  {title} {\bibinfo {title}
  {Constraining bosonic asymmetric dark matter with neutron star mass-radius
  measurements},\ }\href {https://doi.org/10.1103/PhysRevD.107.103051}
  {\bibfield  {journal} {\bibinfo  {journal} {Phys. Rev. D}\ }\textbf {\bibinfo
  {volume} {107}},\ \bibinfo {pages} {103051} (\bibinfo {year}
  {2023})}\BibitemShut {NoStop}%
\bibitem [{\citenamefont {Mariani}\ \emph {et~al.}(2023)\citenamefont
  {Mariani}, \citenamefont {Albertus}, \citenamefont {Alessandroni},
  \citenamefont {Orsaria}, \citenamefont {Pérez-Garc\'ia},\ and\ \citenamefont
  {Ranea-Sandoval}}]{Mariani2024}%
  \BibitemOpen
  \bibfield  {author} {\bibinfo {author} {\bibfnamefont {M.}~\bibnamefont
  {Mariani}}, \bibinfo {author} {\bibfnamefont {C.}~\bibnamefont {Albertus}},
  \bibinfo {author} {\bibfnamefont {M.~d.~R.}\ \bibnamefont {Alessandroni}},
  \bibinfo {author} {\bibfnamefont {M.~G.}\ \bibnamefont {Orsaria}}, \bibinfo
  {author} {\bibfnamefont {M.~A.}\ \bibnamefont {Pérez-Garc\'ia}},\ and\
  \bibinfo {author} {\bibfnamefont {I.~F.}\ \bibnamefont {Ranea-Sandoval}},\
  }\bibfield  {title} {\bibinfo {title} {Constraining self-interacting
  fermionic dark matter in admixed neutron stars using multimessenger
  astronomy},\ }\href {https://doi.org/10.1093/mnras/stad3658} {\bibfield
  {journal} {\bibinfo  {journal} {Monthly Notices of the Royal Astronomical
  Society}\ }\textbf {\bibinfo {volume} {527}},\ \bibinfo {pages} {6795}
  (\bibinfo {year} {2023})}\BibitemShut {NoStop}%
\bibitem [{\citenamefont {Sandin}\ and\ \citenamefont
  {Ciarcelluti}(2009)}]{Sandin2009}%
  \BibitemOpen
  \bibfield  {author} {\bibinfo {author} {\bibfnamefont {F.}~\bibnamefont
  {Sandin}}\ and\ \bibinfo {author} {\bibfnamefont {P.}~\bibnamefont
  {Ciarcelluti}},\ }\bibfield  {title} {\bibinfo {title} {Effects of mirror
  dark matter on neutron stars},\ }\href
  {https://doi.org/10.1016/j.astropartphys.2009.09.005} {\bibfield  {journal}
  {\bibinfo  {journal} {Astroparticle Physics}\ }\textbf {\bibinfo {volume}
  {32}},\ \bibinfo {pages} {278} (\bibinfo {year} {2009})}\BibitemShut
  {NoStop}%
\bibitem [{\citenamefont {Hippert}\ \emph {et~al.}(2023)\citenamefont
  {Hippert}, \citenamefont {Dillingham}, \citenamefont {Tan}, \citenamefont
  {Curtin}, \citenamefont {Noronha-Hostler},\ and\ \citenamefont
  {Yunes}}]{Hippert2023}%
  \BibitemOpen
  \bibfield  {author} {\bibinfo {author} {\bibfnamefont {M.}~\bibnamefont
  {Hippert}}, \bibinfo {author} {\bibfnamefont {E.}~\bibnamefont {Dillingham}},
  \bibinfo {author} {\bibfnamefont {H.}~\bibnamefont {Tan}}, \bibinfo {author}
  {\bibfnamefont {D.}~\bibnamefont {Curtin}}, \bibinfo {author} {\bibfnamefont
  {J.}~\bibnamefont {Noronha-Hostler}},\ and\ \bibinfo {author} {\bibfnamefont
  {N.}~\bibnamefont {Yunes}},\ }\bibfield  {title} {\bibinfo {title} {Dark
  matter or regular matter in neutron stars? how to tell the difference from
  the coalescence of compact objects},\ }\href
  {https://doi.org/10.1103/PhysRevD.107.115028} {\bibfield  {journal} {\bibinfo
   {journal} {Phys. Rev. D}\ }\textbf {\bibinfo {volume} {107}},\ \bibinfo
  {pages} {115028} (\bibinfo {year} {2023})}\BibitemShut {NoStop}%
\bibitem [{\citenamefont {{Andersson, N}}\ and\ \citenamefont {{Comer, G.
  L.}}(2021)}]{AnderssonComer2021}%
  \BibitemOpen
  \bibfield  {author} {\bibinfo {author} {\bibnamefont {{Andersson, N}}}\ and\
  \bibinfo {author} {\bibnamefont {{Comer, G. L.}}},\ }\bibfield  {title}
  {\bibinfo {title} {Relativistic fluid dynamics: physics for many different
  scales},\ }\bibfield  {journal} {\bibinfo  {journal} {Living Reviews in
  Relativity}\ }\textbf {\bibinfo {volume} {24}},\ \href
  {https://doi.org/10.1007/s41114-021-00031-6} {10.1007/s41114-021-00031-6}
  (\bibinfo {year} {2021})\BibitemShut {NoStop}%
\bibitem [{\citenamefont {Andersson}\ \emph {et~al.}(2013)\citenamefont
  {Andersson}, \citenamefont {Glampedakis},\ and\ \citenamefont
  {Hogg}}]{Andersson2013}%
  \BibitemOpen
  \bibfield  {author} {\bibinfo {author} {\bibfnamefont {N.}~\bibnamefont
  {Andersson}}, \bibinfo {author} {\bibfnamefont {K.}~\bibnamefont
  {Glampedakis}},\ and\ \bibinfo {author} {\bibfnamefont {M.}~\bibnamefont
  {Hogg}},\ }\bibfield  {title} {\bibinfo {title} {Superfluid instability of
  r-modes in ``differentially rotating'' neutron stars},\ }\href
  {https://doi.org/10.1103/PhysRevD.87.063007} {\bibfield  {journal} {\bibinfo
  {journal} {Phys. Rev. D}\ }\textbf {\bibinfo {volume} {87}},\ \bibinfo
  {pages} {063007} (\bibinfo {year} {2013})}\BibitemShut {NoStop}%
\bibitem [{\citenamefont {Comer}(2004)}]{Comer2004}%
  \BibitemOpen
  \bibfield  {author} {\bibinfo {author} {\bibfnamefont {G.~L.}\ \bibnamefont
  {Comer}},\ }\bibfield  {title} {\bibinfo {title} {Slowly rotating general
  relativistic superfluid neutron stars with relativistic entrainment},\ }\href
  {https://doi.org/10.1103/PhysRevD.69.123009} {\bibfield  {journal} {\bibinfo
  {journal} {Phys. Rev. D}\ }\textbf {\bibinfo {volume} {69}},\ \bibinfo
  {pages} {123009} (\bibinfo {year} {2004})}\BibitemShut {NoStop}%
\bibitem [{\citenamefont {Comer}\ \emph {et~al.}(1999)\citenamefont {Comer},
  \citenamefont {Langlois},\ and\ \citenamefont {Lin}}]{Comer1999}%
  \BibitemOpen
  \bibfield  {author} {\bibinfo {author} {\bibfnamefont {G.~L.}\ \bibnamefont
  {Comer}}, \bibinfo {author} {\bibfnamefont {D.}~\bibnamefont {Langlois}},\
  and\ \bibinfo {author} {\bibfnamefont {L.~M.}\ \bibnamefont {Lin}},\
  }\bibfield  {title} {\bibinfo {title} {Quasinormal modes of general
  relativistic superfluid neutron stars},\ }\href
  {https://doi.org/10.1103/PhysRevD.60.104025} {\bibfield  {journal} {\bibinfo
  {journal} {Phys. Rev. D}\ }\textbf {\bibinfo {volume} {60}},\ \bibinfo
  {pages} {104025} (\bibinfo {year} {1999})}\BibitemShut {NoStop}%
\bibitem [{\citenamefont {Andersson}\ \emph {et~al.}(2004)\citenamefont
  {Andersson}, \citenamefont {Comer},\ and\ \citenamefont
  {Grosart}}]{AnderssonComerGrosart2004}%
  \BibitemOpen
  \bibfield  {author} {\bibinfo {author} {\bibfnamefont {N.}~\bibnamefont
  {Andersson}}, \bibinfo {author} {\bibfnamefont {G.~L.}\ \bibnamefont
  {Comer}},\ and\ \bibinfo {author} {\bibfnamefont {K.}~\bibnamefont
  {Grosart}},\ }\bibfield  {title} {\bibinfo {title} {{Lagrangian perturbation
  theory of non-relativistic rotating superfluid stars}},\ }\href
  {https://doi.org/10.1111/j.1365-2966.2004.08370.x} {\bibfield  {journal}
  {\bibinfo  {journal} {Monthly Notices of the Royal Astronomical Society}\
  }\textbf {\bibinfo {volume} {355}},\ \bibinfo {pages} {918} (\bibinfo {year}
  {2004})}\BibitemShut {NoStop}%
\bibitem [{\citenamefont {Campbell}\ \emph {et~al.}(2024)\citenamefont
  {Campbell}, \citenamefont {Carloni}, \citenamefont {Dunsby},\ and\
  \citenamefont {Naidu}}]{Campbell2024}%
  \BibitemOpen
  \bibfield  {author} {\bibinfo {author} {\bibfnamefont {M.}~\bibnamefont
  {Campbell}}, \bibinfo {author} {\bibfnamefont {S.}~\bibnamefont {Carloni}},
  \bibinfo {author} {\bibfnamefont {P.~K.~S.}\ \bibnamefont {Dunsby}},\ and\
  \bibinfo {author} {\bibfnamefont {N.~F.}\ \bibnamefont {Naidu}},\ }\href@noop
  {} {\bibinfo {title} {Reconstructing exact solutions to relativistic stars in
  $f(r)$ gravity}} (\bibinfo {year} {2024}),\ \Eprint
  {https://arxiv.org/abs/2403.00070} {arXiv:2403.00070 [gr-qc]} \BibitemShut
  {NoStop}%
\bibitem [{\citenamefont {Pradhan}\ \emph {et~al.}(2024)\citenamefont
  {Pradhan}, \citenamefont {Bhar}, \citenamefont {Mandal}, \citenamefont
  {Sahoo},\ and\ \citenamefont {Bamba}}]{Pradhan2024}%
  \BibitemOpen
  \bibfield  {author} {\bibinfo {author} {\bibfnamefont {S.}~\bibnamefont
  {Pradhan}}, \bibinfo {author} {\bibfnamefont {P.}~\bibnamefont {Bhar}},
  \bibinfo {author} {\bibfnamefont {S.}~\bibnamefont {Mandal}}, \bibinfo
  {author} {\bibfnamefont {P.~K.}\ \bibnamefont {Sahoo}},\ and\ \bibinfo
  {author} {\bibfnamefont {K.}~\bibnamefont {Bamba}},\ }\href
  {https://arxiv.org/abs/2408.03967} {\bibinfo {title} {The stability of
  anisotropic compact stars influenced by dark matter under teleparallel
  gravity: An extended gravitational deformation approach}} (\bibinfo {year}
  {2024}),\ \Eprint {https://arxiv.org/abs/2408.03967} {arXiv:2408.03967
  [gr-qc]} \BibitemShut {NoStop}%
\bibitem [{\citenamefont {Naidu}\ \emph {et~al.}(2021)\citenamefont {Naidu},
  \citenamefont {Carloni},\ and\ \citenamefont {Dunsby}}]{Naidu2021}%
  \BibitemOpen
  \bibfield  {author} {\bibinfo {author} {\bibfnamefont {N.~F.}\ \bibnamefont
  {Naidu}}, \bibinfo {author} {\bibfnamefont {S.}~\bibnamefont {Carloni}},\
  and\ \bibinfo {author} {\bibfnamefont {P.}~\bibnamefont {Dunsby}},\
  }\bibfield  {title} {\bibinfo {title} {Two-fluid stellar objects in general
  relativity: The covariant formulation},\ }\href
  {https://doi.org/10.1103/PhysRevD.104.044014} {\bibfield  {journal} {\bibinfo
   {journal} {Phys. Rev. D}\ }\textbf {\bibinfo {volume} {104}},\ \bibinfo
  {pages} {044014} (\bibinfo {year} {2021})}\BibitemShut {NoStop}%
\bibitem [{\citenamefont {Henriques}\ \emph
  {et~al.}(1990{\natexlab{a}})\citenamefont {Henriques}, \citenamefont
  {Liddle},\ and\ \citenamefont {Moorhouse}}]{HENRIQUES1990}%
  \BibitemOpen
  \bibfield  {author} {\bibinfo {author} {\bibfnamefont {A.}~\bibnamefont
  {Henriques}}, \bibinfo {author} {\bibfnamefont {A.~R.}\ \bibnamefont
  {Liddle}},\ and\ \bibinfo {author} {\bibfnamefont {R.}~\bibnamefont
  {Moorhouse}},\ }\bibfield  {title} {\bibinfo {title} {Stability of
  boson-fermion stars},\ }\href
  {https://doi.org/https://doi.org/10.1016/0370-2693(90)90789-9} {\bibfield
  {journal} {\bibinfo  {journal} {Physics Letters B}\ }\textbf {\bibinfo
  {volume} {251}},\ \bibinfo {pages} {511} (\bibinfo {year}
  {1990}{\natexlab{a}})}\BibitemShut {NoStop}%
\bibitem [{\citenamefont {Di~Giovanni}\ \emph {et~al.}(2022)\citenamefont
  {Di~Giovanni}, \citenamefont {Guerra}, \citenamefont {Albanesi},
  \citenamefont {Miravet-Ten\'es},\ and\ \citenamefont
  {Tseneklidou}}]{DiGiovanni2022}%
  \BibitemOpen
  \bibfield  {author} {\bibinfo {author} {\bibfnamefont {F.}~\bibnamefont
  {Di~Giovanni}}, \bibinfo {author} {\bibfnamefont {D.}~\bibnamefont {Guerra}},
  \bibinfo {author} {\bibfnamefont {S.}~\bibnamefont {Albanesi}}, \bibinfo
  {author} {\bibfnamefont {M.}~\bibnamefont {Miravet-Ten\'es}},\ and\ \bibinfo
  {author} {\bibfnamefont {D.}~\bibnamefont {Tseneklidou}},\ }\bibfield
  {title} {\bibinfo {title} {Fermion-axion stars: Static solutions and
  dynamical stability},\ }\href {https://doi.org/10.1103/PhysRevD.106.084013}
  {\bibfield  {journal} {\bibinfo  {journal} {Phys. Rev. D}\ }\textbf {\bibinfo
  {volume} {106}},\ \bibinfo {pages} {084013} (\bibinfo {year}
  {2022})}\BibitemShut {NoStop}%
\bibitem [{\citenamefont {Goldman}\ \emph {et~al.}(2013)\citenamefont
  {Goldman}, \citenamefont {Mohapatra}, \citenamefont {Nussinov}, \citenamefont
  {Rosenbaum},\ and\ \citenamefont {Teplitz}}]{GOLDMAN2013}%
  \BibitemOpen
  \bibfield  {author} {\bibinfo {author} {\bibfnamefont {I.}~\bibnamefont
  {Goldman}}, \bibinfo {author} {\bibfnamefont {R.}~\bibnamefont {Mohapatra}},
  \bibinfo {author} {\bibfnamefont {S.}~\bibnamefont {Nussinov}}, \bibinfo
  {author} {\bibfnamefont {D.}~\bibnamefont {Rosenbaum}},\ and\ \bibinfo
  {author} {\bibfnamefont {V.}~\bibnamefont {Teplitz}},\ }\bibfield  {title}
  {\bibinfo {title} {Possible implications of asymmetric fermionic dark matter
  for neutron stars},\ }\href
  {https://doi.org/https://doi.org/10.1016/j.physletb.2013.07.017} {\bibfield
  {journal} {\bibinfo  {journal} {Physics Letters B}\ }\textbf {\bibinfo
  {volume} {725}},\ \bibinfo {pages} {200} (\bibinfo {year}
  {2013})}\BibitemShut {NoStop}%
\bibitem [{\citenamefont {Barbat}\ \emph {et~al.}(2024)\citenamefont {Barbat},
  \citenamefont {Schaffner-Bielich},\ and\ \citenamefont {Tolos}}]{Barbat2024}%
  \BibitemOpen
  \bibfield  {author} {\bibinfo {author} {\bibfnamefont {M.~F.}\ \bibnamefont
  {Barbat}}, \bibinfo {author} {\bibfnamefont {J.}~\bibnamefont
  {Schaffner-Bielich}},\ and\ \bibinfo {author} {\bibfnamefont
  {L.}~\bibnamefont {Tolos}},\ }\bibfield  {title} {\bibinfo {title}
  {Comprehensive study of compact stars with dark matter},\ }\href
  {https://doi.org/10.1103/PhysRevD.110.023013} {\bibfield  {journal} {\bibinfo
   {journal} {Phys. Rev. D}\ }\textbf {\bibinfo {volume} {110}},\ \bibinfo
  {pages} {023013} (\bibinfo {year} {2024})}\BibitemShut {NoStop}%
\bibitem [{\citenamefont {Henriques}\ \emph
  {et~al.}(1990{\natexlab{b}})\citenamefont {Henriques}, \citenamefont
  {Liddle},\ and\ \citenamefont {Moorhouse}}]{HENRIQUES1990b}%
  \BibitemOpen
  \bibfield  {author} {\bibinfo {author} {\bibfnamefont {A.}~\bibnamefont
  {Henriques}}, \bibinfo {author} {\bibfnamefont {A.~R.}\ \bibnamefont
  {Liddle}},\ and\ \bibinfo {author} {\bibfnamefont {R.}~\bibnamefont
  {Moorhouse}},\ }\bibfield  {title} {\bibinfo {title} {Combined boson-fermion
  stars: Configurations and stability},\ }\href
  {https://doi.org/https://doi.org/10.1016/0550-3213(90)90514-E} {\bibfield
  {journal} {\bibinfo  {journal} {Nuclear Physics B}\ }\textbf {\bibinfo
  {volume} {337}},\ \bibinfo {pages} {737} (\bibinfo {year}
  {1990}{\natexlab{b}})}\BibitemShut {NoStop}%
\bibitem [{\citenamefont {Valdez-Alvarado}\ \emph {et~al.}(2020)\citenamefont
  {Valdez-Alvarado}, \citenamefont {Becerril},\ and\ \citenamefont {Ure\~na
  L\'opez}}]{Valdez-Alvarado2020}%
  \BibitemOpen
  \bibfield  {author} {\bibinfo {author} {\bibfnamefont {S.}~\bibnamefont
  {Valdez-Alvarado}}, \bibinfo {author} {\bibfnamefont {R.}~\bibnamefont
  {Becerril}},\ and\ \bibinfo {author} {\bibfnamefont {L.~A.}\ \bibnamefont
  {Ure\~na L\'opez}},\ }\bibfield  {title} {\bibinfo {title} {Fermion-boson
  stars with a quartic self-interaction in the boson sector},\ }\href
  {https://doi.org/10.1103/PhysRevD.102.064038} {\bibfield  {journal} {\bibinfo
   {journal} {Phys. Rev. D}\ }\textbf {\bibinfo {volume} {102}},\ \bibinfo
  {pages} {064038} (\bibinfo {year} {2020})}\BibitemShut {NoStop}%
\bibitem [{\citenamefont {Pitz}\ and\ \citenamefont
  {Schaffner-Bielich}(2024)}]{Pitz2024}%
  \BibitemOpen
  \bibfield  {author} {\bibinfo {author} {\bibfnamefont {S.~L.}\ \bibnamefont
  {Pitz}}\ and\ \bibinfo {author} {\bibfnamefont {J.}~\bibnamefont
  {Schaffner-Bielich}},\ }\href {https://arxiv.org/abs/2408.13157} {\bibinfo
  {title} {Generating ultra-compact neutron stars with bosonic dark matter}}
  (\bibinfo {year} {2024}),\ \Eprint {https://arxiv.org/abs/2408.13157}
  {arXiv:2408.13157 [astro-ph.HE]} \BibitemShut {NoStop}%
\bibitem [{\citenamefont {Kain}(2020)}]{Kain2020}%
  \BibitemOpen
  \bibfield  {author} {\bibinfo {author} {\bibfnamefont {B.}~\bibnamefont
  {Kain}},\ }\bibfield  {title} {\bibinfo {title} {Radial oscillations and
  stability of multiple-fluid compact stars},\ }\href
  {https://doi.org/10.1103/PhysRevD.102.023001} {\bibfield  {journal} {\bibinfo
   {journal} {Physical Review D}\ }\textbf {\bibinfo {volume} {102}},\ \bibinfo
  {pages} {023001} (\bibinfo {year} {2020})}\BibitemShut {NoStop}%
\bibitem [{\citenamefont {Kain}(2021)}]{Kain2021}%
  \BibitemOpen
  \bibfield  {author} {\bibinfo {author} {\bibfnamefont {B.}~\bibnamefont
  {Kain}},\ }\bibfield  {title} {\bibinfo {title} {Dark matter admixed neutron
  stars},\ }\href {https://doi.org/10.1103/PhysRevD.103.043009} {\bibfield
  {journal} {\bibinfo  {journal} {Phys. Rev. D}\ }\textbf {\bibinfo {volume}
  {103}},\ \bibinfo {pages} {043009} (\bibinfo {year} {2021})}\BibitemShut
  {NoStop}%
\bibitem [{\citenamefont {Gleason}\ \emph {et~al.}(2022)\citenamefont
  {Gleason}, \citenamefont {Brown},\ and\ \citenamefont {Kain}}]{Gleason2022}%
  \BibitemOpen
  \bibfield  {author} {\bibinfo {author} {\bibfnamefont {T.}~\bibnamefont
  {Gleason}}, \bibinfo {author} {\bibfnamefont {B.}~\bibnamefont {Brown}},\
  and\ \bibinfo {author} {\bibfnamefont {B.}~\bibnamefont {Kain}},\ }\bibfield
  {title} {\bibinfo {title} {Dynamical evolution of dark matter admixed neutron
  stars},\ }\href {https://doi.org/10.1103/PhysRevD.105.023010} {\bibfield
  {journal} {\bibinfo  {journal} {Phys. Rev. D}\ }\textbf {\bibinfo {volume}
  {105}},\ \bibinfo {pages} {023010} (\bibinfo {year} {2022})}\BibitemShut
  {NoStop}%
\bibitem [{\citenamefont {Valdez-Alvarado}\ \emph {et~al.}(2013)\citenamefont
  {Valdez-Alvarado}, \citenamefont {Palenzuela}, \citenamefont {Alic},\ and\
  \citenamefont {Ure\~na L\'opez}}]{Valdez-Alvarado2013}%
  \BibitemOpen
  \bibfield  {author} {\bibinfo {author} {\bibfnamefont {S.}~\bibnamefont
  {Valdez-Alvarado}}, \bibinfo {author} {\bibfnamefont {C.}~\bibnamefont
  {Palenzuela}}, \bibinfo {author} {\bibfnamefont {D.}~\bibnamefont {Alic}},\
  and\ \bibinfo {author} {\bibfnamefont {L.~A.}\ \bibnamefont {Ure\~na
  L\'opez}},\ }\bibfield  {title} {\bibinfo {title} {Dynamical evolution of
  fermion-boson stars},\ }\href {https://doi.org/10.1103/PhysRevD.87.084040}
  {\bibfield  {journal} {\bibinfo  {journal} {Phys. Rev. D}\ }\textbf {\bibinfo
  {volume} {87}},\ \bibinfo {pages} {084040} (\bibinfo {year}
  {2013})}\BibitemShut {NoStop}%
\bibitem [{\citenamefont {Mukhopadhyay}\ and\ \citenamefont
  {Schaffner-Bielich}(2016)}]{Mukhopadhyay2016}%
  \BibitemOpen
  \bibfield  {author} {\bibinfo {author} {\bibfnamefont {P.}~\bibnamefont
  {Mukhopadhyay}}\ and\ \bibinfo {author} {\bibfnamefont {J.}~\bibnamefont
  {Schaffner-Bielich}},\ }\bibfield  {title} {\bibinfo {title} {Quark stars
  admixed with dark matter},\ }\href
  {https://doi.org/10.1103/PhysRevD.93.083009} {\bibfield  {journal} {\bibinfo
  {journal} {Phys. Rev. D}\ }\textbf {\bibinfo {volume} {93}},\ \bibinfo
  {pages} {083009} (\bibinfo {year} {2016})}\BibitemShut {NoStop}%
\bibitem [{\citenamefont {Ellis}\ \emph {et~al.}(2018)\citenamefont {Ellis},
  \citenamefont {H\"utsi}, \citenamefont {Kannike}, \citenamefont {Marzola},
  \citenamefont {Raidal},\ and\ \citenamefont {Vaskonen}}]{Ellis2018}%
  \BibitemOpen
  \bibfield  {author} {\bibinfo {author} {\bibfnamefont {J.}~\bibnamefont
  {Ellis}}, \bibinfo {author} {\bibfnamefont {G.}~\bibnamefont {H\"utsi}},
  \bibinfo {author} {\bibfnamefont {K.}~\bibnamefont {Kannike}}, \bibinfo
  {author} {\bibfnamefont {L.}~\bibnamefont {Marzola}}, \bibinfo {author}
  {\bibfnamefont {M.}~\bibnamefont {Raidal}},\ and\ \bibinfo {author}
  {\bibfnamefont {V.}~\bibnamefont {Vaskonen}},\ }\bibfield  {title} {\bibinfo
  {title} {Dark matter effects on neutron star properties},\ }\href
  {https://doi.org/10.1103/PhysRevD.97.123007} {\bibfield  {journal} {\bibinfo
  {journal} {Phys. Rev. D}\ }\textbf {\bibinfo {volume} {97}},\ \bibinfo
  {pages} {123007} (\bibinfo {year} {2018})}\BibitemShut {NoStop}%
\bibitem [{\citenamefont {Panotopoulos}\ and\ \citenamefont
  {Lopes}(2017)}]{Pontopoulos2017}%
  \BibitemOpen
  \bibfield  {author} {\bibinfo {author} {\bibfnamefont {G.}~\bibnamefont
  {Panotopoulos}}\ and\ \bibinfo {author} {\bibfnamefont {I.}~\bibnamefont
  {Lopes}},\ }\bibfield  {title} {\bibinfo {title} {Radial oscillations of
  strange quark stars admixed with condensed dark matter},\ }\href
  {https://doi.org/10.1103/PhysRevD.96.083013} {\bibfield  {journal} {\bibinfo
  {journal} {Phys. Rev. D}\ }\textbf {\bibinfo {volume} {96}},\ \bibinfo
  {pages} {083013} (\bibinfo {year} {2017})}\BibitemShut {NoStop}%
\bibitem [{\citenamefont {Jiménez}\ and\ \citenamefont
  {Fraga}(2022)}]{JimenezFraga2022}%
  \BibitemOpen
  \bibfield  {author} {\bibinfo {author} {\bibfnamefont {J.~C.}\ \bibnamefont
  {Jiménez}}\ and\ \bibinfo {author} {\bibfnamefont {E.~S.}\ \bibnamefont
  {Fraga}},\ }\bibfield  {title} {\bibinfo {title} {Radial oscillations of
  quark stars admixed with dark matter},\ }\bibfield  {journal} {\bibinfo
  {journal} {Universe}\ }\textbf {\bibinfo {volume} {8}},\ \href
  {https://doi.org/10.3390/universe8010034} {10.3390/universe8010034} (\bibinfo
  {year} {2022})\BibitemShut {NoStop}%
\bibitem [{\citenamefont {Prabhu}\ \emph {et~al.}(2016)\citenamefont {Prabhu},
  \citenamefont {Schiffrin},\ and\ \citenamefont {Wald}}]{Prabhu_2016}%
  \BibitemOpen
  \bibfield  {author} {\bibinfo {author} {\bibfnamefont {K.}~\bibnamefont
  {Prabhu}}, \bibinfo {author} {\bibfnamefont {J.~S.}\ \bibnamefont
  {Schiffrin}},\ and\ \bibinfo {author} {\bibfnamefont {R.~M.}\ \bibnamefont
  {Wald}},\ }\bibfield  {title} {\bibinfo {title} {A variational principle for
  the axisymmetric stability of rotating relativistic stars},\ }\href
  {https://doi.org/10.1088/0264-9381/33/18/185007} {\bibfield  {journal}
  {\bibinfo  {journal} {Classical and Quantum Gravity}\ }\textbf {\bibinfo
  {volume} {33}},\ \bibinfo {pages} {185007} (\bibinfo {year}
  {2016})}\BibitemShut {NoStop}%
\bibitem [{\citenamefont {Friedman}(1978)}]{Friedman1978}%
  \BibitemOpen
  \bibfield  {author} {\bibinfo {author} {\bibfnamefont {J.~L.}\ \bibnamefont
  {Friedman}},\ }\bibfield  {title} {\bibinfo {title} {Generic instability of
  rotating relativistic stars},\ }\href {https://doi.org/10.1007/BF01202527}
  {\bibfield  {journal} {\bibinfo  {journal} {Communications in Mathematical
  Physics}\ }\textbf {\bibinfo {volume} {62}},\ \bibinfo {pages} {247}
  (\bibinfo {year} {1978})}\BibitemShut {NoStop}%
\bibitem [{\citenamefont {Shi}\ \emph {et~al.}(2023)\citenamefont {Shi},
  \citenamefont {Tian}, \citenamefont {Wu}, \citenamefont {Zhang},\ and\
  \citenamefont {Zhang}}]{Shi2023}%
  \BibitemOpen
  \bibfield  {author} {\bibinfo {author} {\bibfnamefont {K.}~\bibnamefont
  {Shi}}, \bibinfo {author} {\bibfnamefont {Y.}~\bibnamefont {Tian}}, \bibinfo
  {author} {\bibfnamefont {X.}~\bibnamefont {Wu}}, \bibinfo {author}
  {\bibfnamefont {H.}~\bibnamefont {Zhang}},\ and\ \bibinfo {author}
  {\bibfnamefont {J.}~\bibnamefont {Zhang}},\ }\bibfield  {title} {\bibinfo
  {title} {Dynamic and thermodynamic stability of charged perfect fluid
  stars},\ }\href {https://doi.org/10.1088/1361-6382/acdd44} {\bibfield
  {journal} {\bibinfo  {journal} {Classical and Quantum Gravity}\ }\textbf
  {\bibinfo {volume} {40}},\ \bibinfo {pages} {145006} (\bibinfo {year}
  {2023})}\BibitemShut {NoStop}%
\bibitem [{\citenamefont {{Misner}}\ \emph {et~al.}(1973)\citenamefont
  {{Misner}}, \citenamefont {{Thorne}},\ and\ \citenamefont
  {{Wheeler}}}]{MisnerThorneWheeler1971}%
  \BibitemOpen
  \bibfield  {author} {\bibinfo {author} {\bibfnamefont {C.~W.}\ \bibnamefont
  {{Misner}}}, \bibinfo {author} {\bibfnamefont {K.~S.}\ \bibnamefont
  {{Thorne}}},\ and\ \bibinfo {author} {\bibfnamefont {J.~A.}\ \bibnamefont
  {{Wheeler}}},\ }\href@noop {} {\emph {\bibinfo {title} {{Gravitation}}}}\
  (\bibinfo  {publisher} {W.H. Freeman and Company},\ \bibinfo {year} {1973})\
  pp.\ \bibinfo {pages} {688--699}\BibitemShut {NoStop}%
\bibitem [{\citenamefont {Dyson}\ and\ \citenamefont
  {Schutz}(1979)}]{DysonSchutz1979}%
  \BibitemOpen
  \bibfield  {author} {\bibinfo {author} {\bibfnamefont {J.}~\bibnamefont
  {Dyson}}\ and\ \bibinfo {author} {\bibfnamefont {B.~F.}\ \bibnamefont
  {Schutz}},\ }\bibfield  {title} {\bibinfo {title} {Perturbations and
  stability of rotating stars. i. completeness of normal modes},\ }\href
  {http://www.jstor.org/stable/79876} {\bibfield  {journal} {\bibinfo
  {journal} {Proceedings of the Royal Society of London. Series A, Mathematical
  and Physical Sciences}\ }\textbf {\bibinfo {volume} {368}},\ \bibinfo {pages}
  {389} (\bibinfo {year} {1979})}\BibitemShut {NoStop}%
\bibitem [{\citenamefont {{Strauss}}(2009)}]{Strauss2009}%
  \BibitemOpen
  \bibfield  {author} {\bibinfo {author} {\bibfnamefont {W.~A.}\ \bibnamefont
  {{Strauss}}},\ }\href@noop {} {\emph {\bibinfo {title} {{Partial Differential
  Equations: An Introduction}}}},\ \bibinfo {edition} {2nd}\ ed.\ (\bibinfo
  {publisher} {John Wiley and Sons, Ltd.},\ \bibinfo {year} {2009})\ pp.\
  \bibinfo {pages} {299--317}\BibitemShut {NoStop}%
\bibitem [{\citenamefont {Riesz}\ and\ \citenamefont
  {Nagy}(1956)}]{RieszNagy1956}%
  \BibitemOpen
  \bibfield  {author} {\bibinfo {author} {\bibfnamefont {F.}~\bibnamefont
  {Riesz}}\ and\ \bibinfo {author} {\bibfnamefont {B.~S.}\ \bibnamefont
  {Nagy}},\ }\href@noop {} {\emph {\bibinfo {title} {{Functional Analysis}}}},\
  edited by\ \bibinfo {editor} {\bibfnamefont {L.~F.}\ \bibnamefont {Boron}}\
  (\bibinfo  {publisher} {Blackie and Son Limited},\ \bibinfo {year} {1956})\
  pp.\ \bibinfo {pages} {329--335}\BibitemShut {NoStop}%
\bibitem [{\citenamefont {Hunter}(1977)}]{Hunter1977}%
  \BibitemOpen
  \bibfield  {author} {\bibinfo {author} {\bibfnamefont {C.}~\bibnamefont
  {Hunter}},\ }\bibfield  {title} {\bibinfo {title} {On secular stability,
  secular instability, and points of bifurcation of rotating gaseous masses},\
  }\href {https://doi.org/10.1086/155181} {\bibfield  {journal} {\bibinfo
  {journal} {The Astrophysical Journal}\ }\textbf {\bibinfo {volume} {213}},\
  \bibinfo {pages} {497} (\bibinfo {year} {1977})}\BibitemShut {NoStop}%
\bibitem [{\citenamefont {Oppenheimer}\ and\ \citenamefont
  {Volkoff}(1939)}]{Oppenheimer1939}%
  \BibitemOpen
  \bibfield  {author} {\bibinfo {author} {\bibfnamefont {J.~R.}\ \bibnamefont
  {Oppenheimer}}\ and\ \bibinfo {author} {\bibfnamefont {G.~M.}\ \bibnamefont
  {Volkoff}},\ }\bibfield  {title} {\bibinfo {title} {On massive neutron
  cores},\ }\href {https://doi.org/10.1103/PhysRev.55.374} {\bibfield
  {journal} {\bibinfo  {journal} {Phys. Rev.}\ }\textbf {\bibinfo {volume}
  {55}},\ \bibinfo {pages} {374} (\bibinfo {year} {1939})}\BibitemShut
  {NoStop}%
\bibitem [{\citenamefont {{Shapiro}}\ and\ \citenamefont
  {{Teukolsky}}(1983)}]{ShapiroTeukolsky1983}%
  \BibitemOpen
  \bibfield  {author} {\bibinfo {author} {\bibfnamefont {S.~L.}\ \bibnamefont
  {{Shapiro}}}\ and\ \bibinfo {author} {\bibfnamefont {S.~A.}\ \bibnamefont
  {{Teukolsky}}},\ }\href {https://doi.org/10.1002/9783527617661} {\emph
  {\bibinfo {title} {{Black holes, white dwarfs and neutron stars. The physics
  of compact objects}}}}\ (\bibinfo {year} {1983})\BibitemShut {NoStop}%
\bibitem [{\citenamefont {Leung}\ \emph {et~al.}(2012)\citenamefont {Leung},
  \citenamefont {Chu},\ and\ \citenamefont {Lin}}]{Leung2012}%
  \BibitemOpen
  \bibfield  {author} {\bibinfo {author} {\bibfnamefont {S.-C.}\ \bibnamefont
  {Leung}}, \bibinfo {author} {\bibfnamefont {M.-C.}\ \bibnamefont {Chu}},\
  and\ \bibinfo {author} {\bibfnamefont {L.-M.}\ \bibnamefont {Lin}},\
  }\bibfield  {title} {\bibinfo {title} {Equilibrium structure and radial
  oscillations of dark matter admixed neutron stars},\ }\href
  {https://doi.org/10.1103/PhysRevD.85.103528} {\bibfield  {journal} {\bibinfo
  {journal} {Phys. Rev. D}\ }\textbf {\bibinfo {volume} {85}},\ \bibinfo
  {pages} {103528} (\bibinfo {year} {2012})}\BibitemShut {NoStop}%
\bibitem [{\citenamefont {Tan}\ \emph {et~al.}(2022)\citenamefont {Tan},
  \citenamefont {Dore}, \citenamefont {Dexheimer}, \citenamefont
  {Noronha-Hostler},\ and\ \citenamefont {Yunes}}]{Tan2022}%
  \BibitemOpen
  \bibfield  {author} {\bibinfo {author} {\bibfnamefont {H.}~\bibnamefont
  {Tan}}, \bibinfo {author} {\bibfnamefont {T.}~\bibnamefont {Dore}}, \bibinfo
  {author} {\bibfnamefont {V.}~\bibnamefont {Dexheimer}}, \bibinfo {author}
  {\bibfnamefont {J.}~\bibnamefont {Noronha-Hostler}},\ and\ \bibinfo {author}
  {\bibfnamefont {N.}~\bibnamefont {Yunes}},\ }\bibfield  {title} {\bibinfo
  {title} {Extreme matter meets extreme gravity: Ultraheavy neutron stars with
  phase transitions},\ }\href {https://doi.org/10.1103/PhysRevD.105.023018}
  {\bibfield  {journal} {\bibinfo  {journal} {Phys. Rev. D}\ }\textbf {\bibinfo
  {volume} {105}},\ \bibinfo {pages} {023018} (\bibinfo {year}
  {2022})}\BibitemShut {NoStop}%
\bibitem [{\citenamefont {Shahrbaf}\ \emph {et~al.}(2020)\citenamefont
  {Shahrbaf}, \citenamefont {Blaschke}, \citenamefont {Grunfeld},\ and\
  \citenamefont {Moshfegh}}]{Shahrbaf2020}%
  \BibitemOpen
  \bibfield  {author} {\bibinfo {author} {\bibfnamefont {M.}~\bibnamefont
  {Shahrbaf}}, \bibinfo {author} {\bibfnamefont {D.}~\bibnamefont {Blaschke}},
  \bibinfo {author} {\bibfnamefont {A.~G.}\ \bibnamefont {Grunfeld}},\ and\
  \bibinfo {author} {\bibfnamefont {H.~R.}\ \bibnamefont {Moshfegh}},\
  }\bibfield  {title} {\bibinfo {title} {First-order phase transition from
  hypernuclear matter to deconfined quark matter obeying new constraints from
  compact star observations},\ }\href
  {https://doi.org/10.1103/PhysRevC.101.025807} {\bibfield  {journal} {\bibinfo
   {journal} {Phys. Rev. C}\ }\textbf {\bibinfo {volume} {101}},\ \bibinfo
  {pages} {025807} (\bibinfo {year} {2020})}\BibitemShut {NoStop}%
\bibitem [{\citenamefont {Gerlach}(1968)}]{Gerlach1968}%
  \BibitemOpen
  \bibfield  {author} {\bibinfo {author} {\bibfnamefont {U.~H.}\ \bibnamefont
  {Gerlach}},\ }\bibfield  {title} {\bibinfo {title} {Equation of state at
  supranuclear densities and the existence of a third family of superdense
  stars},\ }\href {https://doi.org/10.1103/PhysRev.172.1325} {\bibfield
  {journal} {\bibinfo  {journal} {Phys. Rev.}\ }\textbf {\bibinfo {volume}
  {172}},\ \bibinfo {pages} {1325} (\bibinfo {year} {1968})}\BibitemShut
  {NoStop}%
\bibitem [{\citenamefont {Alford}\ and\ \citenamefont
  {Sedrakian}(2017)}]{Alford2017}%
  \BibitemOpen
  \bibfield  {author} {\bibinfo {author} {\bibfnamefont {M.}~\bibnamefont
  {Alford}}\ and\ \bibinfo {author} {\bibfnamefont {A.}~\bibnamefont
  {Sedrakian}},\ }\bibfield  {title} {\bibinfo {title} {Compact stars with
  sequential qcd phase transitions},\ }\href
  {https://doi.org/10.1103/PhysRevLett.119.161104} {\bibfield  {journal}
  {\bibinfo  {journal} {Phys. Rev. Lett.}\ }\textbf {\bibinfo {volume} {119}},\
  \bibinfo {pages} {161104} (\bibinfo {year} {2017})}\BibitemShut {NoStop}%
\bibitem [{\citenamefont {Di~Clemente}\ \emph {et~al.}(2020)\citenamefont
  {Di~Clemente}, \citenamefont {Mannarelli},\ and\ \citenamefont
  {Tonelli}}]{DiClemente2020}%
  \BibitemOpen
  \bibfield  {author} {\bibinfo {author} {\bibfnamefont {F.}~\bibnamefont
  {Di~Clemente}}, \bibinfo {author} {\bibfnamefont {M.}~\bibnamefont
  {Mannarelli}},\ and\ \bibinfo {author} {\bibfnamefont {F.}~\bibnamefont
  {Tonelli}},\ }\bibfield  {title} {\bibinfo {title} {Reliable description of
  the radial oscillations of compact stars},\ }\href
  {https://doi.org/10.1103/PhysRevD.101.103003} {\bibfield  {journal} {\bibinfo
   {journal} {Phys. Rev. D}\ }\textbf {\bibinfo {volume} {101}},\ \bibinfo
  {pages} {103003} (\bibinfo {year} {2020})}\BibitemShut {NoStop}%
\bibitem [{\citenamefont {V\'{a}squez~Flores}\ \emph
  {et~al.}(2012)\citenamefont {V\'{a}squez~Flores}, \citenamefont {Lenzi},\
  and\ \citenamefont {Lugones}}]{Vasquez-Flores2012}%
  \BibitemOpen
  \bibfield  {author} {\bibinfo {author} {\bibfnamefont {C.}~\bibnamefont
  {V\'{a}squez~Flores}}, \bibinfo {author} {\bibfnamefont {C.~H.}\ \bibnamefont
  {Lenzi}},\ and\ \bibinfo {author} {\bibfnamefont {G.}~\bibnamefont
  {Lugones}},\ }\bibfield  {title} {\bibinfo {title} {Radial pulsations of
  hybrid neutron stars},\ }\href {https://doi.org/10.1142/S201019451200829X}
  {\bibfield  {journal} {\bibinfo  {journal} {International Journal of Modern
  Physics: Conference Series}\ }\textbf {\bibinfo {volume} {18}},\ \bibinfo
  {pages} {105} (\bibinfo {year} {2012})}\BibitemShut {NoStop}%
\bibitem [{\citenamefont {Pereira}\ \emph {et~al.}(2018)\citenamefont
  {Pereira}, \citenamefont {Flores},\ and\ \citenamefont
  {Lugones}}]{Pereira2018}%
  \BibitemOpen
  \bibfield  {author} {\bibinfo {author} {\bibfnamefont {J.~P.}\ \bibnamefont
  {Pereira}}, \bibinfo {author} {\bibfnamefont {C.~V.}\ \bibnamefont
  {Flores}},\ and\ \bibinfo {author} {\bibfnamefont {G.}~\bibnamefont
  {Lugones}},\ }\bibfield  {title} {\bibinfo {title} {Phase {Transition}
  {Effects} on the {Dynamical} {Stability} of {Hybrid} {Neutron} {Stars}},\
  }\href {https://doi.org/10.3847/1538-4357/aabfbf} {\bibfield  {journal}
  {\bibinfo  {journal} {The Astrophysical Journal}\ }\textbf {\bibinfo {volume}
  {860}},\ \bibinfo {pages} {12} (\bibinfo {year} {2018})},\ \bibinfo {note}
  {publisher: The American Astronomical Society}\BibitemShut {NoStop}%
\bibitem [{\citenamefont {Rau}\ and\ \citenamefont {Salaben}(2023)}]{Rau2023}%
  \BibitemOpen
  \bibfield  {author} {\bibinfo {author} {\bibfnamefont {P.~B.}\ \bibnamefont
  {Rau}}\ and\ \bibinfo {author} {\bibfnamefont {G.~G.}\ \bibnamefont
  {Salaben}},\ }\bibfield  {title} {\bibinfo {title} {Nonequilibrium effects on
  stability of hybrid stars with first-order phase transitions},\ }\href
  {https://doi.org/10.1103/PhysRevD.108.103035} {\bibfield  {journal} {\bibinfo
   {journal} {Phys. Rev. D}\ }\textbf {\bibinfo {volume} {108}},\ \bibinfo
  {pages} {103035} (\bibinfo {year} {2023})}\BibitemShut {NoStop}%
\bibitem [{\citenamefont {Di~Clemente}\ \emph {et~al.}(2023)\citenamefont
  {Di~Clemente}, \citenamefont {Drago}, \citenamefont {Char},\ and\
  \citenamefont {Pagliara}}]{DiClemente2023}%
  \BibitemOpen
  \bibfield  {author} {\bibinfo {author} {\bibfnamefont {F.}~\bibnamefont
  {Di~Clemente}}, \bibinfo {author} {\bibfnamefont {A.}~\bibnamefont {Drago}},
  \bibinfo {author} {\bibfnamefont {P.}~\bibnamefont {Char}},\ and\ \bibinfo
  {author} {\bibfnamefont {G.}~\bibnamefont {Pagliara}},\ }\bibfield  {title}
  {\bibinfo {title} {Stability and instability of strange dwarfs},\ }\href
  {https://doi.org/10.1051/0004-6361/202347607} {\bibfield  {journal} {\bibinfo
   {journal} {A\&A}\ }\textbf {\bibinfo {volume} {678}},\ \bibinfo {pages} {L1}
  (\bibinfo {year} {2023})}\BibitemShut {NoStop}%
\bibitem [{\citenamefont {Rezaei}(2018)}]{Rezaei2018}%
  \BibitemOpen
  \bibfield  {author} {\bibinfo {author} {\bibfnamefont {Z.}~\bibnamefont
  {Rezaei}},\ }\bibfield  {title} {\bibinfo {title} {Double dark matter admixed
  neutron star},\ }\href {https://doi.org/10.1142/S0218271819500020} {\bibfield
   {journal} {\bibinfo  {journal} {International Journal of Modern Physics D}\
  }\textbf {\bibinfo {volume} {27}},\ \bibinfo {pages} {1950002} (\bibinfo
  {year} {2018})},\ \Eprint
  {https://arxiv.org/abs/https://doi.org/10.1142/S0218271819500020}
  {https://doi.org/10.1142/S0218271819500020} \BibitemShut {NoStop}%
\bibitem [{\citenamefont {Carter}(1989)}]{Carter1989}%
  \BibitemOpen
  \bibfield  {author} {\bibinfo {author} {\bibfnamefont {B.}~\bibnamefont
  {Carter}},\ }\bibfield  {title} {\bibinfo {title} {Covariant theory of
  conductivity in ideal fluid or solid media},\ }in\ \href@noop {} {\emph
  {\bibinfo {booktitle} {Relativistic Fluid Dynamics}}},\ \bibinfo {editor}
  {edited by\ \bibinfo {editor} {\bibfnamefont {A.~M.}\ \bibnamefont {Anile}}\
  and\ \bibinfo {editor} {\bibfnamefont {Y.}~\bibnamefont {Choquet-Bruhat}}}\
  (\bibinfo  {publisher} {Springer Berlin Heidelberg},\ \bibinfo {address}
  {Berlin, Heidelberg},\ \bibinfo {year} {1989})\ pp.\ \bibinfo {pages}
  {1--64}\BibitemShut {NoStop}%
\bibitem [{\citenamefont {Carter}\ and\ \citenamefont
  {Langlois}(1998)}]{CarterLanglois1998}%
  \BibitemOpen
  \bibfield  {author} {\bibinfo {author} {\bibfnamefont {B.}~\bibnamefont
  {Carter}}\ and\ \bibinfo {author} {\bibfnamefont {D.}~\bibnamefont
  {Langlois}},\ }\bibfield  {title} {\bibinfo {title} {Relativistic models for
  superconducting-superfluid mixtures},\ }\href
  {https://doi.org/10.1016/S0550-3213(98)00430-1} {\bibfield  {journal}
  {\bibinfo  {journal} {Nuclear Physics B}\ }\textbf {\bibinfo {volume}
  {531}},\ \bibinfo {pages} {478} (\bibinfo {year} {1998})}\BibitemShut
  {NoStop}%
\bibitem [{\citenamefont {Comer}\ and\ \citenamefont
  {Langlois}(1994)}]{ComerLanglois1994}%
  \BibitemOpen
  \bibfield  {author} {\bibinfo {author} {\bibfnamefont {G.~L.}\ \bibnamefont
  {Comer}}\ and\ \bibinfo {author} {\bibfnamefont {D.}~\bibnamefont
  {Langlois}},\ }\bibfield  {title} {\bibinfo {title} {Hamiltonian formulation
  for relativistic superfluids},\ }\href
  {https://doi.org/10.1088/0264-9381/11/3/021} {\bibfield  {journal} {\bibinfo
  {journal} {Classical and Quantum Gravity}\ }\textbf {\bibinfo {volume}
  {11}},\ \bibinfo {pages} {709} (\bibinfo {year} {1994})}\BibitemShut
  {NoStop}%
\bibitem [{\citenamefont {Andersson}\ and\ \citenamefont
  {Comer}(2001)}]{AnderssonComer2001}%
  \BibitemOpen
  \bibfield  {author} {\bibinfo {author} {\bibfnamefont {N.}~\bibnamefont
  {Andersson}}\ and\ \bibinfo {author} {\bibfnamefont {G.~L.}\ \bibnamefont
  {Comer}},\ }\bibfield  {title} {\bibinfo {title} {Slowly rotating general
  relativistic superfluid neutron stars},\ }\href
  {https://doi.org/10.1088/0264-9381/18/6/302} {\bibfield  {journal} {\bibinfo
  {journal} {Classical and Quantum Gravity}\ }\textbf {\bibinfo {volume}
  {18}},\ \bibinfo {pages} {969} (\bibinfo {year} {2001})}\BibitemShut
  {NoStop}%
\end{thebibliography}%

\end{document}